\newcommand{\cfbox}[3]{%
    \colorlet{currentcolor}{.}%
    {\color{#1}%
    \framebox[#3]{\color{currentcolor}#2}}%
}
\newcommand{\yi}{{\sc Yes-Instance}\xspace}
\renewcommand{\ni}{{\sc No-Instance}\xspace}
\newcommand{\restatethm}[3]{
  \medskip\noindent{\bf #1~#2.}{\rm (restated)}
  {\it #3}
}
\newcommand{\x}{{\bf x}}
\newcommand{\paren}[1]{\left ( #1 \right ) }
\newcommand{\dimension}[1]{\ensuremath{{\sf dim}(#1)}\xspace}
\newcommand{\induce}[1]{\ensuremath{{\sf im}(#1)}\xspace}
\newcommand{\sinduce}[1]{\ensuremath{{\sf sim}(#1)}\xspace}
\newcommand{\sinducesigma}[2]{\ensuremath{{\sf sim}_{#2}(#1)}\xspace}
\newcommand{\hstrong}{\ensuremath{\times_e}}
\newcommand{\MFS}{{\sc Mrfs}\xspace}
\newcommand{\UDP}{{\sc Udp-Min}\xspace}
\newcommand{\SMP}{{\sc Smp}\xspace}
\newcommand{\MES}{{\sc Maximum Expanding Sequence}\xspace}
\newcommand{\NP}{\mbox{\sf NP}}
\newcommand{\ZPP}{\mbox{\sf ZPP}}
\newcommand{\DTIME}{\mbox{\sf DTIME}}
\newcommand{\opt}{\mbox{\sf OPT}}
\newcommand{\set}[1]{\left\{ #1 \right\}}
\newcommand{\pset}{{\mathcal{P}}}
\newcommand{\bset}{{\mathcal{B}}}
\newcommand{\aset}{{\mathcal{A}}}
\newcommand{\cset}{{\mathcal{C}}}
\newcommand{\mset}{{\mathcal M}}
\newcommand{\iset}{{\mathcal{I}}}
\newcommand{\uset}{{\mathcal{U}}}
\newcommand{\sset}{{\mathcal{S}}}
\newtheorem{theorem}{Theorem}[section]
\newtheorem{lemma}[theorem]{Lemma}
\newtheorem{corollary}[theorem]{Corollary}
\newtheorem{claim}[theorem]{Claim}
\newtheorem{definition}[theorem]{Definition}
\renewcommand{\paragraph}[1]{\medskip\noindent{\bf #1.}\xspace}
\newenvironment{prog}[1]{
\begin{minipage}{5.8 in}
{\sc\bf #1}
\begin{enumerate}}
{
\end{enumerate}
\end{minipage}
}
\renewcommand{\phi}{\varphi}
\newcommand{\poly}{\operatorname{poly}}
\newcommand{\reals}{{\mathbb R}}
\newcommand{\R}{\ensuremath{\mathbb R}}
\newcommand{\Z}{\ensuremath{\mathbb Z}}
\def\danupon#1{}
\def\parinya#1{}
\def\bundit#1{}
\def\danuponb#1{{}}
\newif\iffullversion
\newcommand{\fullversion}[2]{\iffullversion#1\xspace\else#2\xspace\fi}
\newcommand{\squishlist}{
 \begin{list}{$\bullet$}
  { \setlength{\itemsep}{0pt}
     \setlength{\parsep}{2pt}
     \setlength{\topsep}{2pt}
     \setlength{\partopsep}{0pt}
     \setlength{\leftmargin}{1.5em}
     \setlength{\labelwidth}{1em}
     \setlength{\labelsep}{0.5em} } }
\newcommand{\squishend}{
  \end{list}  }
\newcommand{\squishnum}{\begin{enumerate}
}
\newcommand{\squishnumend}{
\end{enumerate}}
\newcounter{Lcount}
\newcommand{\squishlisttwo}{
\begin{list}{D\arabic{Lcount}. }
{ \usecounter{Lcount} \setlength{\itemsep}{0pt}
\setlength{\parsep}{0pt} \setlength{\topsep}{0pt}
\setlength{\partopsep}{0pt} \setlength{\leftmargin}{2em}
\setlength{\labelwidth}{1.5em} \setlength{\labelsep}{0.5em} } }
\newcommand{\squishendtwo}{
\end{list} }
\newcommand{\bip}{\mbox{B}\xspace}
\newcommand{\bipm}{\ensuremath{\bip}}
\newcommand{\bipp}{\ensuremath{\bip_e}}
\begin{document}

\title{Graph Products Revisited: Tight Approximation Hardness of
  Induced Matching, Poset Dimension and More}
\author{
Parinya Chalermsook\thanks{
  IDSIA, Lugano, Switzerland.
  Supported by the Swiss National
  Science Foundation project 200020-122110/1
  and by Hasler Foundation Grant 11099. Partially supported by Julia Chuzhoy's CAREER grant CCF-0844872.
  \hbox{E-mail}:~{\tt parinya@cs.uchicago.edu}
} \and
Bundit Laekhanukit\thanks{
  School of Computer Science,
  McGill University, Montreal QC, Canada.
  Supported by the Natural Sciences and Engineering Research Council
  of Canada (NSERC) grant No.~28833 and by European Research Council
  (ERC) Starting Grant 279352.
  \hbox{E-mail}:~{\tt blaekh@cs.mcgill.ca}
} \and
Danupon Nanongkai\thanks{Theory and Applications of Algorithms Research Group, University of Vienna, Austria, and Nanyang Technological University, Singapore. \hbox{E-mail}:~{\tt danupon@gmail.com}.}
}
\date{}

\maketitle

\danupon{To do: $\reals_+$}
\begin{abstract}
Graph product is a fundamental tool with rich applications in both graph theory and theoretical computer science. It is usually studied in the form $f(G*H)$ where $G$ and $H$ are graphs, $*$ is a graph product and $f$ is a graph property. For example, if $f$ is the {\em independence number} and $*$ is the {\em disjunctive product}, then the product is known to be {\em multiplicative}: $f(G*H)=f(G)f(H)$.

In this paper, we study graph products in the following non-standard form: $f((G\oplus H)*J)$ where $G$, $H$ and $J$ are graphs, $\oplus$ and $*$ are two different graph products and $f$ is a graph property. We show that if $f$ is the {\em induced and semi-induced matching number}, then for some products $\oplus$ and $*$, it is {\em subadditive} in the sense that $f((G\oplus H)*J)\leq f(G*J)+f(H*J)$. Moreover, when $f$ is the {\em poset dimension number}, it is {\em almost} subadditive.

As applications of this result (we only need $J=K_2$ here), we obtain tight hardness of approximation for various problems in discrete mathematics and computer science: bipartite induced and semi-induced matching (a.k.a. maximum expanding sequences), poset dimension, maximum feasible subsystem with 0/1 coefficients, unit-demand min-buying and single-minded pricing, donation center location, boxicity, cubicity, threshold dimension and independent packing.
\end{abstract}

\pagestyle{plain}

\section{Introduction}
Graph products generally refer to a way to use two graphs, say $G$ and
$H$, to construct a new graph, say $G*H$ for some product
$*$. Studying properties of graphs resulting from a graph product,
i.e., $f(G*H)$ for some function $f$, has been an active research area
with countless applications in graph theory and computer science.
For example, the fact that the {\em independence number $\alpha$} of
the {\em disjunctive product} $G\vee H$ is {\em multiplicative},
i.e., $\alpha(G\vee H)=\alpha(G)\alpha(H)$, has been used to amplify
the hardness of approximating the maximum independent set problem.

In this paper, we study some graph properties when we apply graph
products in a non-standard fashion to improve approximation hardness of
several problems. We will study graph products in the form $(G\oplus H)*J$
where $G$, $H$ and $J$ are any graphs and $*$ and $\oplus$ denote two
different products. This form may look strange at first, but it will
be clear later that it in fact captures a ``generic'' graph
transformation technique that has been used a lot in the past
(cf. Section~\ref{sec: simplified subadditivity}).

The products we will study are the {\em tensor product} $G\times H$,
the {\em extended tensor product}\footnote{To the best of our knowledge,
  this product has not been considered before. Interestingly, it is
  mentioned in \cite[pp 42]{HammackIK11} as ``not worthy of
  attention''.} $G\hstrong H$, the {\em disjunctive product} $G\vee H$ and
the {\em lexicographic product} $G\cdot H$. These products produce a graph
whose vertex set is $V(G)\times V(H)=\{(u, v): u\in V(G), v\in V(H)\}$ with different edge sets. Their
exact definitions are not necessary at this point, but if you are
impatient, see Section~\ref{sec:prelim}.

Properties of a graph $G$ that we are interested in are the
{\em induced matching number} $\induce{G}$, the
{\em semi-induced matching number} $\sinduce{G}$ and the
{\em poset dimension number} $\dimension{G}$.
Informally, an induced matching of an undirected graph $G$ is a
matching $\mset$ of $G$ such that no two edges in $\mset$ are joined by an
edge in $G$. We let the induced matching number of $G$, denoted by
$\induce{G}$, be the size of the maximum induced matching.
See the formal definition in Section~\ref{sec:prelim}.
Definitions of {\em poset} and other graph properties are not needed in this section
and are deferred to Section~\ref{sec:prelim}.

%

Now that we have introduced all notations necessary, we are ready to
state our result. We show that for $f={\sf im}$ or $f={\sf sim}$ and
for an appropriate choice of product $\oplus$ and $*$, $f$ will be
{\em subadditive}, i.e., $f((G\oplus H)*J)\leq f(G*J)+f(H*J)$ for any graphs
$G$, $H$ and $J$.
For $f={\sf dim}$, we will get {\em almost} subadditivity instead.
The precise statement is as follows.


\begin{theorem}[(Almost) Subadditivity]\label{theorem:subadditive}\label{theorem:main}
For any undirected graphs $G$, $H$ and $J$ and a height-two poset $\vec P$,

\begin{align}
\induce{(G\vee H)\times J} &\leq \induce{G\times J}+\induce{H\times J}\label{eq:subadditive induce}\\
\sinduce{(G\vee H)\times J} &\leq \sinduce{G\times J}+\sinduce{H\times J}\label{eq:subadditive semi-induce}\\
\dimension{(G\cdot H)\hstrong \vec P} &\leq \dimension{G\hstrong \vec P}+\chi(G)\dimension{H\hstrong \vec P}+\dimension{\vec P}\label{eq:subadditive dimension}
\end{align}

\noindent where $\chi(G)$ is the chromatic number of $G$.
\end{theorem}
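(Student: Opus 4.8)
The plan is to prove all three inequalities by the same template: take an optimal witness for the left-hand side, split it into a ``$G$-part'' and an ``$H$-part'' according to which factor is responsible for each edge (resp.\ incomparable pair), project each part onto the corresponding smaller product, and argue that the projection is still a valid witness of the claimed size.

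\emph{Inequality \eqref{eq:subadditive induce}.} Let $\mset$ be a maximum induced matching of $(G\vee H)\times J$ and write a typical edge as $\{((g,h),j),((g',h'),j')\}$. Since $\{(g,h),(g',h')\}\in E(G\vee H)$ we have $\{g,g'\}\in E(G)$ or $\{h,h'\}\in E(H)$, and always $\{j,j'\}\in E(J)$. Partition $\mset=\mset_G\sqcup\mset_H$ according to which case holds (ties broken arbitrarily), and set $\pi_G((g,h),j)=(g,j)$, $\pi_H((g,h),j)=(h,j)$. First I would show that $\pi_G$ carries the endpoints of $\mset_G$ injectively onto an induced matching of $G\times J$ of size $|\mset_G|$, and symmetrically for $\pi_H$; summing then gives $|\mset|=|\mset_G|+|\mset_H|\le\induce{G\times J}+\induce{H\times J}$. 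The only idea needed is that both injectivity and inducedness reduce to lifting a coincidence / adjacency of projected endpoints back up to $(G\vee H)\times J$: if $\pi_G$ identifies or joins an endpoint $u$ of $e\in\mset_G$ with an endpoint $v$ of $e'\in\mset_G$ with $e\ne e'$, then the $G$-coordinates involved are equal or $G$-adjacent and the $J$-coordinates are $J$-adjacent, and combining this with the fact that the $G$-coordinates of $e'$ themselves form a $G$-edge (which is exactly why $e'$ was placed in $\mset_G$) produces an edge of $(G\vee H)\times J$ between an endpoint of $e$ and an endpoint of $e'$, contradicting that $\mset$ is induced. Note inducedness is immediate from the projected adjacency alone, whereas injectivity is where one invokes the ``other endpoint of $e'$'' together with $e'\in\mset_G$ --- a mildly asymmetric use of the hypothesis.

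\emph{Inequality \eqref{eq:subadditive semi-induce}.} Repeat the construction starting from a maximum semi-induced matching, now carrying along its linear order: restricting the order to the index sets $I_G$ and $I_H$ partitions the sequence, and the ``staircase'' non-edge conditions push through $\pi_G,\pi_H$ by precisely the same lifting argument, since one only ever needs to certify a non-edge between endpoints of two \emph{ordered} edges. Modulo the exact form of the definition of $\sinduce$ (which endpoint plays which role, and the direction of the order), this is routine and introduces no new idea; it yields $\sinduce{G\times J}\ge|I_G|$ and $\sinduce{H\times J}\ge|I_H|$.

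\emph{Inequality \eqref{eq:subadditive dimension}} is the hard one, and the source of the word ``almost''. Here the witness is a minimum realizer of $(G\cdot H)\hstrong\vec P$, and the ``responsible factor'' for an incomparable pair $\bigl((g_1,h_1,p_1),(g_2,h_2,p_2)\bigr)$ is dictated by the lexicographic product: if $g_1,g_2$ are distinct and $G$-adjacent, the pair is already visible in $G\hstrong\vec P$ after contracting the $H$-coordinate; if $g_1,g_2$ are distinct and non-adjacent, they lie in a common class of a proper $\chi(G)$-coloring of $G$, and inside such a class the $G$-coordinates are mutually non-adjacent, so the induced sub-poset is a disjoint union of copies of $H\hstrong\vec P$ (one per $g$ in the class); and if $g_1=g_2$ the pair is a pure $\vec P$-incomparability. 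The plan is to assemble the realizer in three blocks: (i) $\dimension{G\hstrong\vec P}$ linear extensions obtained by lifting a minimum realizer of $G\hstrong\vec P$, blowing up each element $(g,p)$ into the block $\{(g,h,p):h\in V(H)\}$ ordered consistently within and across blocks; (ii) for each of the $\chi(G)$ color classes, $\dimension{H\hstrong\vec P}$ linear extensions realizing the disjoint union of the $H\hstrong\vec P$-copies in that class while ordering all remaining elements ``safely''; and (iii) $\dimension{\vec P}$ linear extensions handling incomparabilities internal to a single $(g,h)$-block. The main obstacle --- and the real content of the theorem --- is verifying that this family is actually a realizer: for each type of incomparable pair some extension in the designated block reverses it, \emph{and} the extensions chosen for one type never enforce the wrong order on a pair of another type. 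This is exactly where the multiplicity $\chi(G)$ and the additive slack $\dimension{\vec P}$ get consumed, and where one needs the standard fact that a disjoint union of posets each of dimension $\le d$ has dimension $\le\max(d,2)$ and is realized by a single family of $d$ extensions built by one coordinatewise recipe. Pinning down this bookkeeping, in particular the interaction between the asymmetry of the lexicographic product and the structure of $\hstrong\vec P$, is where I expect essentially all of the effort to lie.
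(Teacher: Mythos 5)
Your treatment of Eq.~\eqref{eq:subadditive induce} and Eq.~\eqref{eq:subadditive semi-induce} is correct and is essentially the paper's argument in compressed form: the paper splits $E(G\vee H)$ into its $G$-part and $H$-part, rewrites the two pieces as $K_H\hstrong G$ and $K_G\hstrong H$, uses associativity $(X\hstrong Y)\times Z=X\hstrong(Y\times Z)$, and finishes with a projection lemma $\induce{K_L\hstrong X}\le\induce{X}$ (and its semi-induced analogue, with the order pulled back through the matched endpoints); composing these steps is exactly your direct projection via $\pi_G,\pi_H$, including the asymmetric use of the hypothesis for injectivity. One point you gloss in \eqref{eq:subadditive semi-induce}: to rule out two edges of $\mset_G$ sharing a projected endpoint you need \emph{both} symmetric cross edges (endpoint of $e$ to the far endpoint of $e'$, and vice versa), because the semi-induced condition only forbids edges emanating from the $\sigma$-least of the four endpoints; with both cross edges present the contradiction goes through for every possible order, which is how the paper argues it.

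For Eq.~\eqref{eq:subadditive dimension} there is a genuine problem, and it sits exactly where you say all the effort lies. Your classification of incomparable pairs is wrong in two places. First, two distinct non-adjacent vertices $g_1,g_2$ of $G$ need \emph{not} lie in a common class of a proper $\chi(G)$-coloring (take $G$ to be two disjoint edges with any proper $2$-coloring: there are always non-adjacent vertices in different classes), so block (ii) as you assign it does not reverse all pairs with $g_1\neq g_2$, $g_1g_2\notin E(G)$. Second, pairs with $g_1=g_2$ are not ``pure $\vec P$-incomparabilities'': with $g_1=g_2=g$, $h_1\neq h_2$, $h_1h_2\notin E(H)$ and $p_1p_2\in E(\vec P)$, the pair is incomparable in $(G\cdot H)\hstrong\vec P$ even though $p_1,p_2$ are comparable, so block (iii) does not catch it. The construction is salvageable: the lifted realizer of $G\hstrong\vec P$ in block (i) in fact reverses every incomparable pair whose $(g,p)$-projections are distinct and incomparable in $G\hstrong\vec P$ --- in particular \emph{all} pairs with $g_1\neq g_2$, adjacent or not --- and block (ii) (a realizer of the disjoint union of copies of $H\hstrong\vec P$ inside one color class, extended to linear extensions of the whole poset, which is always possible for an induced subposet) reverses every pair with $g_1=g_2$; with this corrected assignment blocks (i) and (ii) already cover everything, and the $+\dimension{\vec P}$ is only needed as slack for degeneracies such as the $\max(\cdot,2)$ in the disjoint-union bound. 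But since the correctness of your plan rests on the coverage argument and the coverage argument as stated is false, this must be counted as a gap rather than a finished proof.

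For comparison, the paper routes \eqref{eq:subadditive dimension} differently: it first writes $E((G\cdot H)\hstrong\vec P)$ as the intersection $E((G\cdot H)\hstrong\vec K_{U,V})\cap E(K_{G\cdot H}\hstrong\vec P)$, which is where the additive $\dimension{\vec P}$ is paid and which reduces the whole problem to the $\vec K_2$ case; it then proves $\dimension{(G\cdot H)\hstrong\vec K_2}\le\dimension{G\hstrong\vec K_2}+\chi(G)\dimension{H\hstrong\vec K_2}$ by an explicit coordinate embedding (one block carrying $\phi_G$, one block per color class carrying $\phi_H$, padded with $0$'s and $\infty$'s), verified by a short case analysis. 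That embedding is the coordinate analogue of your blocks (i) and (ii), but with the interaction with a general $\vec P$ factored out up front instead of being handled inside the case analysis --- which is precisely the step where your version stumbles.
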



Note that Eq.\eqref{eq:subadditive dimension} suggests that the poset
dimension number is almost subadditive in the sense that if $\chi(G)$
and $\dimension{\vec P}$ are small, then it will be subadditive (with a small multiplicative factor). This will
be the case when we use it to prove the hardness of approximation; see
Section~\ref{sec: applications} for more detail.

\danupon{For later: We may want to note that Eq.\eqref{eq:subadditive induce} is tight in the sense that there is an example that achieves this value: I think $(K_2\vee K_2)*K_2$ should work. Note that it's not that this is true for all graphs. I think a counter example is $(C_4\vee C_4)*K_2$. Can we say the same thing for the other inequality?}

\paragraph{Organization} In Subsection~\ref{sec:intro proof}, we give
an example of how Theorem~\ref{theorem:subadditive} plays a role in proving hardness of approximation.
In Subsection~\ref{sec:intro approx}, we discuss problems whose
hardness of approximation can be obtained via our
technique. In Section~\ref{sec:prelim}, we define formal terms needed in the
rest of the paper. We then prove Theorem~\ref{theorem:subadditive} in
Section~\ref{sec: simplified subadditivity} (for the special case which gives more intuition) and
Section~\ref{sec: subadditivity full} (for the general case).
Section~\ref{sec: applications} and \ref{sec:rest} show the
hardness results. Our results are summarized in Fig.~\ref{fig:summary}.

\begin{figure*}
\begin{center}
\begin{tikzpicture}[auto, node distance=0.1cm]
    \scriptsize
    \tikzstyle{mybox}=[rectangle, draw=black, text centered, text=black, text width=3cm]
%
    \node (independent) [mybox] {Independent Set and Coloring};
    \node (sinduce) [mybox, right=4 of independent] {Bip. Semi-Ind. Matching\\ (Max. Expand. Seq.)};
    \node (induce) [mybox, above=0.4 of sinduce] {Bipartite Induced Matching};
    \node (dimension) [mybox, below=1.5 of sinduce] {Poset Dimension};
    \node (packing) [mybox, right = 2 of induce] {Bip. Ind. Packing};
    \node (mfs) [mybox, right = 2 of sinduce] {(0/1)-Feas. Subsys.};
    \node (donation) [mybox, above = 0.3 of mfs] {Donation Center};
    \node (pricing1) [mybox, below = 0.3 of mfs] {Single-Minded Pricing};
    \node (pricing2) [mybox, below = 0.3 of pricing1] {Unit-Demand Pricing};
    \node (box) [mybox, right = 2 of dimension] {Boxicity and Cubicity};
    \node (threshold) [mybox, below = 0.3 of box] {Threshold dimension};
    \draw[->, very thick, red, double=red]  (independent) -- ([xshift=-0.5cm] sinduce.west);
    \node (theorem) [right=0.7 of independent,fill=red!25,text width=1.8cm, text centered]{Subadditivity (Thm~\ref{theorem:subadditive})};
    \draw[->, thin, blue] ([xshift=-0.5cm] sinduce.west) -- (sinduce);
    \draw[->, thin, blue] ([xshift=-0.5cm] sinduce.west) |- (induce.west);
    \draw[->, thin, blue] ([xshift=-0.5cm] sinduce.west) |- (dimension.west);
    \draw[->, thick, dotted, gray] (induce) -- (packing);
    \draw[->, thick, dotted, gray] (sinduce) -- (mfs);
    \draw[->, thick, dotted, gray] (sinduce.east) -| ([xshift=-1cm] pricing1.west) -- (pricing1.west);
    \draw[->, thick, dotted, gray] (sinduce.east) -| ([xshift=-1cm] pricing2.west) -- (pricing2.west);
    \draw[->, thick, dotted, gray] (sinduce.east) -| ([xshift=-1cm] donation.west) -- (donation.west);
    \draw[->, thick, dotted, gray] (dimension) -- (box);
    \draw[->, thick, dotted, gray] (dimension.east) -| ([xshift=-1cm] threshold.west) -- (threshold.west);
    \node (blank) [below=0.8 of independent.west, text width=0cm, text=blue] {};
    \node (subadditivity) [right=0.5of blank, text width=5cm, text=red] {{\scriptsize Sec. \ref{sec: simplified subadditivity},\ref{sec: subadditivity full}}};
    \draw [-, very thick, red, double=red] ([xshift=-0.5cm] subadditivity.west) -- (subadditivity.west);
    \node (mainreduction) [below=0.2 of subadditivity, text width=5cm, text=blue] {{\scriptsize Sec. \ref{sec: applications}}};
    \draw [-, thin, blue] ([xshift=-0.5cm] mainreduction.west) -- (mainreduction.west);
    \node (rest) [below=0.2 of mainreduction, text width=5cm, text=gray] {{\scriptsize Sec. \ref{sec:rest}}};
    \draw [-, thick, gray, dotted] ([xshift=-0.5cm] rest.west) -- (rest.west);
\end{tikzpicture}
\end{center}
\caption{Paper Outline. Problems in the third column are defined in Section~\ref{sec:prelim}. Other problems are defined in Section~\ref{sec:rest}.}\label{fig:summary}
\end{figure*}
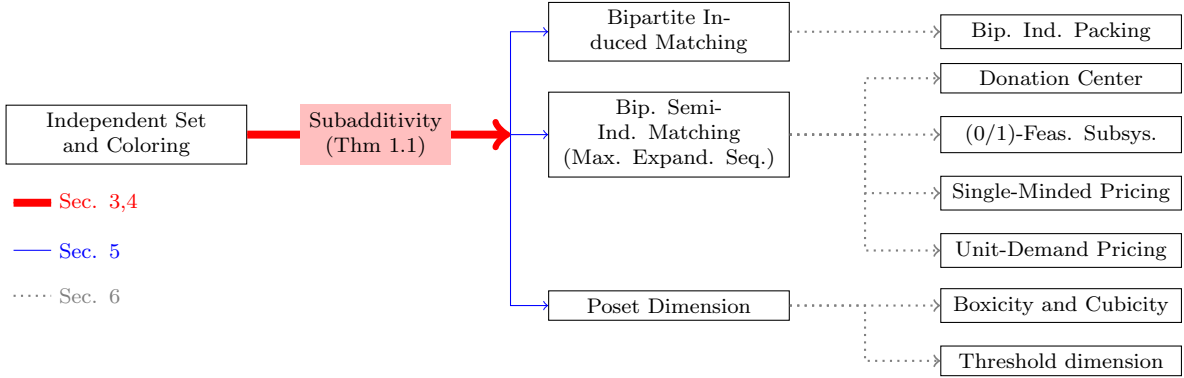

\subsection{Using Subadditivity (Theorem~\ref{theorem:subadditive}).}\label{sec:intro proof}
We now sketch the proof idea of the $n^{1-\epsilon}$ hardness of the bipartite induced matching problem, where $n$ is the number of vertices, which shows how Theorem~\ref{theorem:subadditive} plays a role in proving the hardness of approximation. The full proof appears in Section~\ref{sec: applications}. We build on the idea of \cite{ElbassioniRRS09} and apply
Theorem~\ref{theorem:subadditive} with $J=K_2$.

For any graph $G$, let $\alpha(G)$ be the size of the maximum independent set. We use the following connection between independence and induced matching numbers which was implicitly shown in \cite{ElbassioniRRS09}.
%
\begin{align}
\alpha(G)\leq \induce{G\hstrong K_2} \leq \induce{G\times K_2}+\alpha(G).\label{eq:elbassioni}
\end{align}
If $\induce{G\times K_2}$ is relatively small, i.e.,
$\induce{G\times K_2}=O(\alpha(G))$, then we will already have the
hardness of $n^{1-\epsilon}$ using the hardness of approximating the
independent set number (e.g., \cite{Hastad96}).
However, $\induce{G\times K_2}$ could be as large as $|V(G)|$, and in
such case, we do not get any hardness result (not even NP-hardness).
To remedy this, we apply Eq.\eqref{eq:subadditive induce} in
Theorem~\ref{theorem:subadditive} repeatedly to show that
%
%
\begin{align*}
\induce{(G^k)\times K_2}
  \leq \induce{(G^{k-1}) \times K_2} +
       \induce{G\times K_2}
  \leq \ldots \leq \induce{G\times K_2}k
\end{align*}
where $G^k=G\vee G\vee \ldots \vee G$ is a $k$-fold product. Combining this with Eq.\eqref{eq:elbassioni}, we have
\begin{align*}
\alpha(G^k)\leq \induce{(G^k)\hstrong K_2}\leq \induce{G\times K_2}k+\alpha(G^k)\,.
\end{align*}
It is well known that $\alpha(G^k)=(\alpha(G))^k$. Thus, after
applying the $k$-fold product, the term $\induce{G \times K_2}$ only
grows linearly in terms of $k$, while the term $\alpha(G^k)$ grows
exponentially! So, by choosing large enough $k$, the
induced matching number and the independence number coincide,
i.e., $\induce{(G^k)\hstrong K_2} \approx \alpha(G^k)$. Now, any
hardness of approximating the independence number implies immediately
roughly the same hardness of approximating the induced matching
number. Note that it can be checked with the definition of $\hstrong$ in Section~\ref{sec:prelim} that $(G^k)\hstrong K_2$ is a bipartite graph, so we get the hardness of the bipartite induced matching problem as desired.

\subsection{List of Applications.}\label{sec:intro approx}
The almost subadditivity properties shown in
Theorem~\ref{theorem:subadditive} are useful in proving many other
hardness of approximation results as listed in the following
theorem.


\begin{theorem}\label{theorem:hardness}
For any $\epsilon>0$, unless $\ZPP=\NP$ there is no
$n^{1-\epsilon}$-approximation algorithm, where $n$ is the number of
vertices in the input graph, for the following problems: bipartite induced and semi-induced matching (a.k.a. maximum expanding sequence), poset dimension, bipartite independent packing,  donation center location, maximum feasible subsystem with 0/1 coefficients, boxicity, cubicity and threshold dimension.

Additionally, there is no $d^{1/2-\epsilon}$-approximation algorithm for the induced and semi-induced matching problem on $d$-regular bipartite graphs.

Moreover, unless $\NP \subseteq {\sf ZPTIME}(n^{\poly \log n})$, there is no
$\log^{1-\epsilon} m$-approximation algorithm and no
$k^{1/2-\epsilon}$-approximation algorithm for the single-minded and
unit-demand pricing problems, where $m$ is the number of consumers and
$k$ is the maximum consumer's set size.
\end{theorem}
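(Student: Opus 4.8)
The plan is to use Theorem~\ref{theorem:subadditive} as an amplification engine sitting on top of the classical inapproximability of \MAXIS and of graph colouring, and to derive every remaining problem from the three ``core'' ones --- bipartite induced matching, bipartite semi-induced matching, poset dimension --- by (largely known) approximation-preserving reductions carried out in Section~\ref{sec: applications} and Section~\ref{sec:rest}. Concretely, we invoke the hardness of \MAXIS/colouring in its two-sided form (Håstad~\cite{Hastad96}, Feige--Kilian): for every $\eps>0$, unless $\ZPP=\NP$, the reduction outputs an $N$-vertex graph $G$ that is either a \yi\ with $\chi(G)\le N^{\eps}$ (hence $\alpha(G)\ge N^{1-\eps}$) or a \ni\ with $\alpha(G)\le N^{\eps}$, and the two cases cannot be distinguished. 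All three core results come from feeding this $G$ into one common blueprint.

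\emph{Induced and semi-induced matching.} Following Section~\ref{sec:intro proof}, put $J=K_2$ and $G^{k}=G\vee\cdots\vee G$ ($k$-fold disjunctive product). Iterating Eq.~\eqref{eq:subadditive induce} gives $\induce{(G^{k})\times K_2}\le k\cdot\induce{G\times K_2}\le kN$, and together with Eq.~\eqref{eq:elbassioni} and $\alpha(G^{k})=\alpha(G)^{k}$ this yields $\alpha(G)^{k}\le\induce{(G^{k})\hstrong K_2}\le\alpha(G)^{k}+kN$. For a large enough constant $k=k(\eps)$ the additive $kN$ is negligible next to $\alpha(G)^{k}$, so the induced matching number of the bipartite graph $(G^{k})\hstrong K_2$ is $\ge N^{(1-\eps)k}$ on a \yi\ and $\le 2N^{\eps k}$ on a \ni; since this graph has $n=2N^{k}$ vertices, the gap is $\Omega(n^{1-2\eps})$, and letting $\eps\to0$ gives the claimed $n^{1-\eps}$ bound. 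Replacing Eq.~\eqref{eq:subadditive induce} by Eq.~\eqref{eq:subadditive semi-induce}, and Eq.~\eqref{eq:elbassioni} by its semi-induced counterpart (proved in Section~\ref{sec: applications}), does the same for bipartite semi-induced matching / maximum expanding sequences. For the $d$-regular statement we run a more careful, degree-controlled instantiation of the same construction (as in Chalermsook et al.), balancing the vertex count against the maximum degree; the resulting loss in the gap is a factor $d^{1/2-\eps}$.

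\emph{Poset dimension.} Here $J$ is replaced by a fixed constant-size height-two poset $\vec P$, and amplification uses the lexicographic power $G^{k}=G\cdot\cdots\cdot G$. We first prove, in Section~\ref{sec: applications}, the dimension analogue of Eq.~\eqref{eq:elbassioni}: for a suitable $\vec P$ one has $|V(Q)|/\alpha(Q)\le\dimension{Q\hstrong\vec P}\le\dimension{Q\times\vec P}+O(\dimension{\vec P})$ for every graph $Q$. Iterating Eq.~\eqref{eq:subadditive dimension} and using $\chi(G^{j})\le\chi(G)^{j}$ gives $\dimension{(G^{k})\hstrong\vec P}\le\poly(N)\cdot\chi(G)^{k}+k\cdot\dimension{\vec P}$, which on a \yi\ (where $\chi(G)\le N^{\eps}$) is only $N^{O(\eps k)}$; on a \ni\ (where $\alpha(G)\le N^{\eps}$) the left inequality above, applied to $Q=G^{k}$, gives $\dimension{(G^{k})\hstrong\vec P}\ge N^{k}/\alpha(G)^{k}\ge N^{(1-\eps)k}$. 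Since $(G^{k})\hstrong\vec P$ has $\Theta(N^{k})$ elements, the gap is $n^{1-O(\eps)}$. This is the step I expect to be the main obstacle: Eq.~\eqref{eq:subadditive dimension} is only useful if the ``$\chi(G)$'' and ``$\dimension{\vec P}$'' overheads stay polynomial along the iteration --- which forces us to take $G$ from the \emph{colouring} version of the hardness and $\vec P$ of constant size --- and one still has to establish the correct two-sided inequality above and verify that $(G^{k})\hstrong\vec P$ genuinely is a poset.

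\emph{The remaining problems} follow by composing with known reductions, detailed in Section~\ref{sec:rest}: maximum feasible subsystem with $0/1$ coefficients from bipartite semi-induced matching following Elbassioni et al.~\cite{ElbassioniRRS09}; bipartite independent packing, which is a generalisation of bipartite induced matching and so inherits the same bound; donation center location by a reduction from bipartite (semi-)induced matching; boxicity, cubicity and threshold dimension from poset dimension (the boxicity--dimension relation of Adiga, Bhowmick and Chandran loses only a constant factor, and cubicity and threshold dimension reduce in turn to boxicity/dimension); and the single-minded and unit-demand pricing bounds by plugging our $d^{1/2-\eps}$-hardness for bounded-degree bipartite induced matching into Briest's reduction. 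The $\log^{1-\eps}m$ and $k^{1/2-\eps}$ bounds, and the stronger hypothesis $\NP\not\subseteq{\sf ZPTIME}(n^{\poly \log n})$, are inherited from the quasi-polynomial blow-up of that reduction together with the quasi-polynomial-gap version of the \MAXIS hardness it requires.
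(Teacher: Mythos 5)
Most of your outline coincides with the paper's own route: the induced/semi-induced matching hardness via H{\aa}stad's \MAXIS gap, disjunctive powering, Eq.~\eqref{eq:elbassioni} and the subadditivity inequalities, the degree-bounded variant via a Trevisan-type instance, and the downstream reductions to \MFS, donation center, independent packing, boxicity/cubicity/threshold dimension and the pricing problems are all as in Sections~\ref{sec: applications} and~\ref{sec:rest}.

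The poset-dimension step, however, contains a genuine gap, and it is exactly the step you flagged. The ``dimension analogue of Eq.~\eqref{eq:elbassioni}'' you posit, namely $|V(Q)|/\alpha(Q)\leq \dimension{Q\hstrong \vec P}$ for a fixed constant-size height-two poset $\vec P$, is false: for $Q=K_m$ the product $K_m\hstrong \vec P$ is covered by Lemma~\ref{lem:poset complete graph product}, which gives $\dimension{K_m\hstrong \vec P}\leq \dimension{\vec P}=O(1)$, while $|V(Q)|/\alpha(Q)=m$. More generally, no lower bound on $\dimension{Q\hstrong\vec P}$ in terms of $\chi(Q)$ or $|V(Q)|/\alpha(Q)$ can hold, because the extended tensor product (the ``$\bipp$'' object) is precisely the quantity that the subadditivity bound Eq.~\eqref{eq:subadditive dimension} keeps \emph{small}; if it were also large on \ni{}s, the argument would contradict itself. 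The paper's fix is structural, not merely technical: the hard instance is the tensor product $\bipm[G^k]=G^k\times\vec K_2$ (the adjacency poset, without the comparabilities $(u,1)<(u,2)$), and the two-sided control is the Hegde--Jain sandwich of Lemma~\ref{lemma: poset dimension connections}, $\chi(Q)\leq\dimension{\bipm[Q]}\leq\dimension{\bipp[Q]}+\chi(Q)$, whose lower bound rests on the ``hitting'' argument (Claim~\ref{claim: HJ}). On a \ni one lower-bounds $\dimension{\bipm[G^k]}\geq\chi(G^k)$, using either the fractional-chromatic-number bound $\chi(G^k)\geq(\chi(G)/\log|V(G)|)^k$ of Lemma~\ref{lem:dimension number product} or, if your instance also guarantees $\alpha(G)\leq N^{\eps}$, the weaker bound $\chi(G^k)\geq |V(G^k)|/\alpha(G^k)$; on a \yi one upper-bounds $\dimension{\bipm[G^k]}\leq\dimension{\bipp[G^k]}+\chi(G^k)$ and only then applies the iterated Eq.~\eqref{eq:subadditive dimension} to $\dimension{\bipp[G^k]}$. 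So your upper-bound half survives, but the lower bound must be rerouted through $\chi$ and through $\bipm$ rather than $\bipp$; as written, your \ni case does not go through.
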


\noindent Fig.~\ref{fig:summary} summarizes the results and reductions. Almost all reductions are done in a systematic way.
We take a hard instance of the maximum independent set problem or the graph
coloring problem. Then we perform an appropriate graph product and
output a result. Depending on the applications, we need hard
instances in various forms. All the results here are essentially {\em tight} except the hardness of $k^{1/2-\epsilon}$ and $d^{1/2-\epsilon}$ of the pricing problems and the induced-matching problem on $d$-regular graphs, respectively.

\medskip\noindent{\em Remark on Stronger Results.} We note that for problems having $n^{1-\epsilon}$-hardness stated in Theorem~\ref{theorem:hardness}, we can actually prove a slightly stronger result: for any $\gamma >0$, unless $\NP \subseteq {\sf ZPTIME}(n^{\poly \log n})$, there is no $\frac{n}{2^{(\log n)^{3/4+\gamma}}}$-approximation algorithm. This is achieved by applying the result of Khot and Ponnuswami~\cite{KhotPonnuswami}.
%
%
For the sake of presentation, we will prove only $n^{1-\epsilon}$-hardness using the result of H{\aa}stad~\cite{Hastad96}.


\medskip \noindent Now, we are ready to discuss our hardness results. We provide the formal definitions of the first three problems in  Section~\ref{sec:prelim} and provide the definitions of the remaining problems in Section~\ref{sec:rest}.

\paragraph{Bipartite Induced Matching} One immediate application is
the {\em tight} $n^{1-\epsilon}$ hardness of {\em approximating the
  induced matching problem on bipartite graphs}, improving
upon the previous best hardness of $n^{1/3-\epsilon}$ \cite{ElbassioniRRS09}. Our
result also implies the tight hardness of the independent packing of
graphs \cite{CameronH06} as well.
A similar technique can also be used to show that the induced
matching problem on {\em $d$-regular bipartite graphs} is hard to
approximate to within a factor of $d^{1/2-\epsilon}$, improving upon the
{\sf APX}-hardness of~\cite{DuckworthMZ05,Zito99}. (This result
is not tight as the best known upper bound is $\Theta(d)$~\cite{GotthilfL05}.)

The notion of induced matching has naturally arisen in discrete
mathematics and computer science. It is, for example, studied as the
``risk-free'' marriage problem in \cite{StockmeyerV82} and is a
subtask of finding a {\em strong edge coloring}. This problem and its
variations also have connections to various problems such as {\em
  storylines extraction} \cite{KumarMS04} and {\em network scheduling,
  gathering and testing}
(e.g. \cite{EvenGMT84,StockmeyerV82,JooSSM10,Milosavljevic11,BonifaciKMS11}).
%
%
The problem was shown to be \NP-complete in
\cite{StockmeyerV82,Cameron89} and was later shown to be hard to
approximate to within a factor of $n^{1/3-\epsilon}$ unless $\NP=\ZPP$ by
\cite{ElbassioniRRS09}. We have sketched the proof of the tight hardness
of $n^{1-\epsilon}$ in Section~\ref{sec:intro proof}, and more detail
can be found in Section~\ref{sec: applications}.


\paragraph{Bipartite Semi-induced Matching (a.k.a. Maximum Expanding Sequence)}
%
%
%
%
The same technique used in proving the hardness of the bipartite
induced matching problem can be extended (with some additional work) to its interesting
variation which captures a few other problems. This
variation was introduced independently by Briest and Krysta
\cite{BriestK11} as the {\em maximum expanding sequence problem} and
by Elbassioni et~al.~\cite{ElbassioniRRS09} as the
{\em bipartite semi-induced matching problem}.
There it was used as an intermediate problem that captures
the hardness of some important algorithmic pricing problems and the
maximum feasible subsystem problem, which we will see shortly.

\paragraph{Poset Dimension} Another immediate application of
Theorem~\ref{theorem:subadditive} is the tight $n^{1-\epsilon}$ hardness
of approximating the poset dimension, improving upon the hardness of
$n^{1/2-\epsilon}$ of Hegde and Jain \cite{HJ07}.

The notion of poset dimension has long been a central subject of study
in discrete mathematics (e.g., \cite{TrotterBook01}) and has
connections with many other notions, e.g., transitive-closure spanners
\cite{Raskhodnikova10} as well as the boxicity and the threshold dimension of
graphs~\cite{AdigaBC10}.
A variant called the {\em fractional dimension} is shown to have a
connection to some classical scheduling problems
(e.g., \cite{AmbuhlMMS08}). We note that our technique also
implies the tight hardness of approximating the fractional dimension of
posets.\danupon{Later: show detail of fractional dimension hardness}

The computational complexity of the poset dimension problem was one of
the twelve outstanding open problems in Garey and Johnson's
treatise on NP-completeness~\cite{GareyJ79}. It was independently
shown by Yannakakis \cite{Yannakakis82} and Lawler and Vornberger
\cite{LawlerV81} that the problem is \NP-complete. More recently,
Hegde and Jain showed that the problem is hard to approximate to
within a factor of $n^{1/2-\epsilon}$ unless $\NP=\ZPP$.
Here we resolve the approximability of this problem using graph
products.

We note that our result actually implies the tight hardness of
approximating the dimension of {\em adjacency poset}. This is the
notion, along with the {\em incidence poset}, of the dimension of posets
arising from graphs. They have been extensively studied due to their
connections with graph's planarity and chromatic number (e.g.,
\cite{FelsnerLT10,Schnyder89,Schnyder90}).

\paragraph{Unit-demand Min-buying (\UDP) and Single-minded (\SMP)
  Pricing} A result that is not so immediate from
Theorem~\ref{theorem:subadditive} is the hardness of approximating the
two combinatorial pricing problems, called \UDP and \SMP.
The tight hardness of these two problems were recently proved by Chalermsook et~al.~\cite{ChalermsookPricing}. Here we give alternate proofs of the results in~\cite{ChalermsookPricing} by employing the tight hardness of an intermediate problem -- the maximum expanding sequence problem, thus confirming the role of expanding sequences in the hardness of pricing problems suggested in \cite{BriestK11}.




Both \UDP and \SMP are among the most basic pricing problems in
the literature and have received a lot of attention (e.g.,
\cite{BriestK11,GuruswamiPricing,Rusmevichientong03,Rusetal,Balcan-Blum-Pricing,ChalermsookPricing}).
Briest and Krysta showed the hardness of $\log^\epsilon m$, assuming
the (rather non-standard) hardness of the bounded-degree bipartite
independent set problem.
To prove the hardness of \UDP and \SMP, they introduced the maximum expanding sequence problem and showed that it can be reduced to \UDP and \SMP.
Thus, by proving the hardness of the maximum expanding sequence problem, they obtain the hardness results for these pricing problems.
%
As mentioned in \cite{BriestK11}, this ``indicates that expanding sequences
are a common source of hardness for quite different combinatorial
pricing problems''.
Chalermsook et~al.~\cite{ChalermsookPricing} recently showed the tight
hardness of $\log^{1-\epsilon} m$ of these problems, assuming a
standard assumption
(i.e., $\NP\not\subseteq \DTIME(n^{\poly\log n})$), by avoiding the
maximum expanding sequence problem and proving the hardness of
\UDP and \SMP directly.
In this paper, we revisited Briest and Krysta's original proposal to
prove the hardness of these problems via the maximum expanding
sequence problem. We show the hardness of approximation result for a special case\footnote{Our special case is different
  from that in~\cite{BriestK11}. (If we use their special case, we only obtain
  the hardness of $\log^{1/2-\epsilon} m$.) In particular, our special case requires that the input must be in some specific form of a {\em graph product.}} of the maximum
expanding sequence problem via graph products, which then implies the hardness of \UDP and
\SMP.
Our results confirm that the maximum expanding sequence problem
is indeed the main source of hardness for both \UDP and \SMP.

\paragraph{Maximum Feasible Subsystem with 0/1 Coefficients}
In the maximum feasible subsystem (\MFS) problem, we are given a
system of $m$ linear inequalities $\ell_i \leq a_i^T x \leq \mu_i$, where
$a_i \in \set{0,1}^n$, and $\ell_i, \mu_i \in \R_{+}$.
The goal is to find a non-negative solution $x \in
\R^n_{+}$ that maximizes the number of constraints satisfied.
%
%
When coefficients are not necessarily 0/1, the
$m^{1-\epsilon}$-hardness of \MFS was proved by Guruswami and
Raghavendra \cite{GuruswamiR09}\footnote{
Indeed, Guruswami and Raghavendra proved the hardness of the
{\em Max 3LIN} problem, which can be seen as a special case of \MFS}.
Elbassioni et~al.~\cite{ElbassioniRRS09} showed that even in the 0/1-coefficient
case, the problem has the hardness of $m^{1/3-\epsilon}$. They
actually showed a gap-preserving reduction from the semi-induced matching problem to 0/1-\MFS.  This means that our
hardness of the semi-induced matching problem immediately
implies the tight hardness of $m^{1-\epsilon}$ for any $\epsilon$ for
\MFS.
This hardness result holds even when we allow a violation of the upper bounds by
at most an $O(n)$ factor.\danupon{I'm being vague here.}
We also show the tight hardness of $\log^{1-\epsilon} (\max_{i \in [n]} \ell_i)$, matching an upper bound in~\cite{ElbassioniRRS09}.

\paragraph{Boxicity, Cubicity and Threshold Dimension of Graphs} 
The notion of {\em boxicity} arose from the study of
{\em intersection graphs}.
It was introduced by Roberts~\cite{RobertsBoxicity} and studied
extensively in discrete mathematics.
It also has connections to important graph theoretic
measures such as treewidths~\cite{ChandranS07},
genuses~\cite{FelsnerLT10,AdigaBC11}, crossing
numbers~\cite{AdigaCM11} and the maximum degree of
graphs~\cite{ChandranFS08}.
In computer science, optimization problems on graphs with
bounded boxicity (e.g., graphs arising from intervals, rectangles, and
cubes) have also received a lot of attention.
%
%
Adiga et~al.~\cite{AdigaBC10} showed that the hardness of
approximating poset dimension implies the hardness of approximating
boxicity  (and other closely related measures called {\em cubicity}
and {\em threshold dimension}).
Combining this with our hardness of approximating poset dimension, we get the tight  $n^{1-\epsilon}$-hardness
 for all these problems.



\paragraph{Donation Center Location} In this problem, we are given a set of agents and a set of centers, where
agents have preferences over centers and centers have capacities. The goal is to open a subset of centers and to assign a maximum-sized subset of agents to their most-preferred opened centers, while respecting the capacity constraints.

Huang and Svitkina~\cite{HuangS09} introduced this problem and showed an $n^{1/2-\epsilon}$ approximation hardness by a reduction from the maximum independent set problem. We show a straightforward reduction from the semi-induced matching problem, hence giving the tight $n^{1-\epsilon}$-hardness. This hardness result holds even when all agents have the same preference over centers, and each center has unit capacity.

\danupon{TO DO: State this clearly in later section.}

\section{Preliminaries}\label{sec:prelim}

In this section, we define graph products and graph properties we will use.
%
%
For any directed or undirected graph $G$, we use $V(G)$ and $E(G)$ to denote its vertex and edge sets, respectively. Note that if $G$ is directed, then it is possible that, for some $u, v \in V(G)$, $uv\in E(G)$ but $vu\notin E(G)$. This is not the case when $G$ is undirected.
When a graph is directed, we shall put an arrow above $\vec G$ to emphasize that $\vec G$ is a directed graph.

\begin{definition}[Graph Products]\label{def:product}
%
%
A {\em graph product} is a binary operation that constructs from two graphs $G$ and $H$ a graph with vertex set
$V(G)\times V(H)=\{(u, v): u\in V(G), v\in V(H)\}$, and the edge set is determined by adjacency of vertices of $G$ and $H$.
%
%
%
\end{definition}

The graph products we study include the {\em tensor product} $G\times H$, the {\em extended tensor product} $G\hstrong H$, the {\em disjunctive product} $G\vee H$ and the {\em lexicographic product} $G\cdot H$.
The edge sets of these products are as belows.
\begin{align*}
\mbox{(tensor)} && E(G\times H) &= \{(u, a)(v, b) : \mbox{$uv\in E(G)$ \underline{and} $ab\in E(H)$}\}\\
\mbox{(extended tensor)} && E(G\hstrong H) &= \{(u, a)(v, b) : \mbox{{\bf (}$uv\in E(G)$ \underline{or} $u=v${\bf )} \underline{and} $ab\in E(H)$}\}\\
\mbox{(disjunctive)} && E(G\vee H) &= \{(u, a)(v, b) : \mbox{$uv\in E(G)$ \underline{or} $ab\in E(H)$}\}\\
\mbox{(lexicographic)} && E(G\cdot H) &= \{(u, a)(v, b) : \mbox{$uv \in E(G)$ \underline{or} {\bf (}$u=v$ \underline{and} $ab\in E(H)${\bf )}}\}\,
\end{align*}



\begin{definition}[Induced Matching Number, $\induce{G}$]\label{def:induce}
Let $G=(V, E)$ be any undirected graph. The {\em induced matching} of $G$ is the set of edges $\mset \subseteq E(G)$ such that $\mset$ is a matching and no two edges in $\mset$ are joined by an edge in $G$, i.e., for any edges $uu', vv' \in \mset$, $G$ has none of the edges in $\set{uv, uv', u'v, u'v'}$.
The {\em induced matching number} of $G$, denoted by $\induce{G}$, is the cardinality of the maximum-cardinality induced matching of $G$.
\end{definition}




Now, we shall define a variant of an induced matching called a semi-induced matching; this notion is with respect to a total order.
For any finite set $S$, a {\em total order $\sigma$ on $S$} is a bijection $\sigma: S \rightarrow [|S|]$.
The total order $\sigma$ gives an ordering of elements $S$ as we may order elements $x_i\in S$ so that  $\sigma(x_1)<\sigma(x_2)<\ldots <\sigma(x_n)$, where $x_i$ is such that $\sigma(x_i)=i$, for all $i$.

\begin{definition}[Semi-induced Matching, ${\sf sim}(G)$]\label{def:semi induced number}
Given any graph $G=(V,E)$ and any total order $\sigma$, we say that a
matching $\mset$ is a $\sigma$-semi-induced matching if, for any pair of edges
$uu', vv'  \in \mset$ such that $\sigma(u) < \sigma(u')$ and
$\sigma(u) < \sigma(v) < \sigma(v')$, there are no edges $uv'$ and $uv$ in $E$.
\end{definition}

We can check if a matching $\mset$ is a $\sigma$-semi-induced matching as follows.
First, we order edges in $\mset$ as $u_1v_1, u_2v_2, \ldots, u_qv_q$ where, for any $i$, $\sigma(u_i)<\sigma(v_i)$ and $\sigma(u_1)<\sigma(u_2)<\ldots<\sigma(u_q)$.
Now, $\mset$ is a $\sigma$-semi-induced matching if and only if, for any $i<j$, $\mset$ has no edge in $\set{u_iu_j, u_i v_j}$.

For any graph $G$, we define $\sinducesigma{G}{\sigma}$ to be the size of a maximum $\sigma$-induced matching, and we define $\sinduce{G}= \max_{\sigma}\sinducesigma{G}{\sigma}$.
In the {\em semi-induced matching problem}, we are given a graph $G$, and the goal is to compute $\sinduce{G}$.

\begin{definition}[Partially Ordered Set (poset)]\label{def:poset}
A directed graph $\vec P$ is a {\em partially ordered set (poset)}\footnote{Our definition follows \cite{Johnson81}.} if it is directed, acyclic and transitive (i.e., $uv,vw\in E(\vec P) \Rightarrow uw\in E(\vec P)$).
\end{definition}

An important class of posets is a height-two poset.
Given a poset $\vec P$, we say that a vertex is {\em minimal} (resp., {\em maximal}) if it has zero in-degree (resp., out-degree) in $\vec P$. A poset is a {\em height-two} poset if every vertex is minimal or maximal (some vertex might be both).

A poset can be defined by an ordering of $d$-dimensional points.
This leads to the notion of the poset dimension number of graphs.
For any $d$-dimensional points $p, q\in \reals^d$, we say that $p< q$ if for any $1\leq i\leq d$, $p[i]\leq q[i]$, and there exists $j$ such that $p[j]<q[j]$; otherwise, we say that $p\not< q$.

\begin{definition}[Poset Dimension Number, $\dimension{G}$\footnote{This concept is also sometime called the {\em order dimension} or the {\em Dushnik--Miller dimension}. It is usually defined using the notion of {\em realizer}. This is however equivalent to our ``embedding to $\reals^d$'' definition, and we use the embedding definition as it will be easier to use in our proofs.}]\label{def:dimension}
Let $\vec{P}$ be any poset. We say that a mapping $\phi: V(\vec P) \rightarrow \R^d$ {\em realizes} poset $\vec P$ if for any distinct vertices $u,v \in V(\vec P)$, $uv \in E(\vec P)$ if and only if $\phi(u) < \phi(v)$. The dimension of poset $\vec P$, denoted by $\dimension{\vec P}$, is the smallest integer $d$ such that there is a mapping $\phi: V(\vec P) \rightarrow \R^d$ that realizes $\vec P$.
\end{definition}

\medskip\noindent\textbf{Why Height-two Posets?} Recall that our main theorem only applies to height-two posets. There is a good reason for this. Note that it is not always the case that the product $G\hstrong \vec P$ between an undirected graph $G$ and poset $\vec P$ (that we use in Theorem~\ref{theorem:subadditive}) will result in a poset (an example is when $G$ and $\vec P$ are a path and a directed path of three vertices, respectively); so, the term $\dimension{G \hstrong \vec P}$ does not necessarily make sense. However, if $\vec P$ is a {\em height-two} poset, then $G\hstrong \vec P$ is always a poset (in fact, a height-two one). (We prove this fact in Lemma~\ref{lem:poset graph times poset} in Section~\ref{sec: subadditivity full}.) For this reason, Theorem~\ref{theorem:subadditive} is stated only for a height-two poset $\vec P$.



\section{Proof of the Special case of Theorem~\ref{theorem:subadditive}}
\label{sec: simplified subadditivity}


In this section, we will focus on the special case of Theorem~\ref{theorem:subadditive} where we consider the products $(G\oplus H)*J$ when $J=K_2$ only. This is partly because the proof is more intuitive (and easier to illustrate by pictures) in this special case.
Moreover, this case itself is sufficient for our purpose in proving hardness results. We will also use $B[G\oplus H]$ instead of $(G\oplus H)*K_2$ to simplify the notation (more on this in Section \ref{sec:bip}).
Proofs of the general cases can be found in Section~\ref{sec: subadditivity full}. They are relatively short, perhaps easier to verify, and can be read without understanding any material in this section. However, since the proofs in the general case are less intuitive, some readers might find the intuition in this section helpful.


\subsection{Why Multiplying Graphs by $K_2$?}\label{sec:bip}
We first give a motivation of studying the graph products in this specific (and rather peculiar) form. First, notice that both $G\times K_2$ and $G\hstrong K_2$ are {\em bipartite}. To see this, let $V(K_2) = \set{1,2}$; so, the vertices of $G\times K_2$ and $G\hstrong K_2$ are in the form $(v, i)$ where $v\in V(G)$ and $i\in \{1, 2\}$ (e.g., Fig. \ref{fig:induce_13} and \ref{fig:dimension}). Thus, we can partition vertices of these graphs into $V_1\cup V_2$ where $V_i=\{(v, i): v\in V(G)\}$. There is no edge between vertices in the same partition in both graphs so they are bipartite.

So, we may think of the product $G\times K_2$ and $G\hstrong K_2$ as a ``transformation'' of graph $G$ into a bipartite graph. To emphasize this point of view and simplify the notation later on, we will write $\bipm[G]$ and $\bipp[G]$ instead of $G\times K_2$ and $G\hstrong K_2$ in this and next sections.

An intuitive way to think of $\bipm[G]$ and $\bipp[G]$ is to imagine the following construction. We construct $\bipm[G]$ by making two copies of each vertex $v\in V(G)$ to get vertices $(v, 1)$ and $(v, 2)$ in $\bipm[G]$. Then, for each edge $uv\in E(G)$, we add an edge between $(u, 1)$ and $(v, 2)$. To get $\bipp[G]$, we simply add an edge connecting the two copies of each vertex $v \in V(G)$; i.e., we add an edge between $(v, 1)$ and $(v, 2)$ for all $v\in V(G)$.

This transformation might now sound familiar to many readers. The graph $\bipm[G]$ is actually known as the {\em bipartite double cover} and has been used repeatedly as a natural way to transform any graph into a bipartite graph; for example, one can use this transformation to reduce the problem of computing cycle covers to the maximum bipartite matching problem.

In the context of posets, we will abuse the notation and define $\bip[G]=G \times \vec{K}_2$ and $\bipp[G]=G\hstrong \vec K_2$. Edges in these graphs always point from vertices in $V_1$ to vertices in $V_2$. The graph $\bip[G]$ also has its name in this case --  an {\em adjacency poset} \cite{FelsnerLT10}.


\begin{figure*}[t]
\setlength{\fboxsep}{0pt}
\centering{\fbox{\begin{minipage}{\textwidth}
\centering
\subfigure[{$G$, $H$, $\bipm[G]$ and $\bipm[H]$}]{
  \includegraphics[height=0.25\textwidth, clip=true, trim=3cm 8.6cm 6cm 2.3cm]{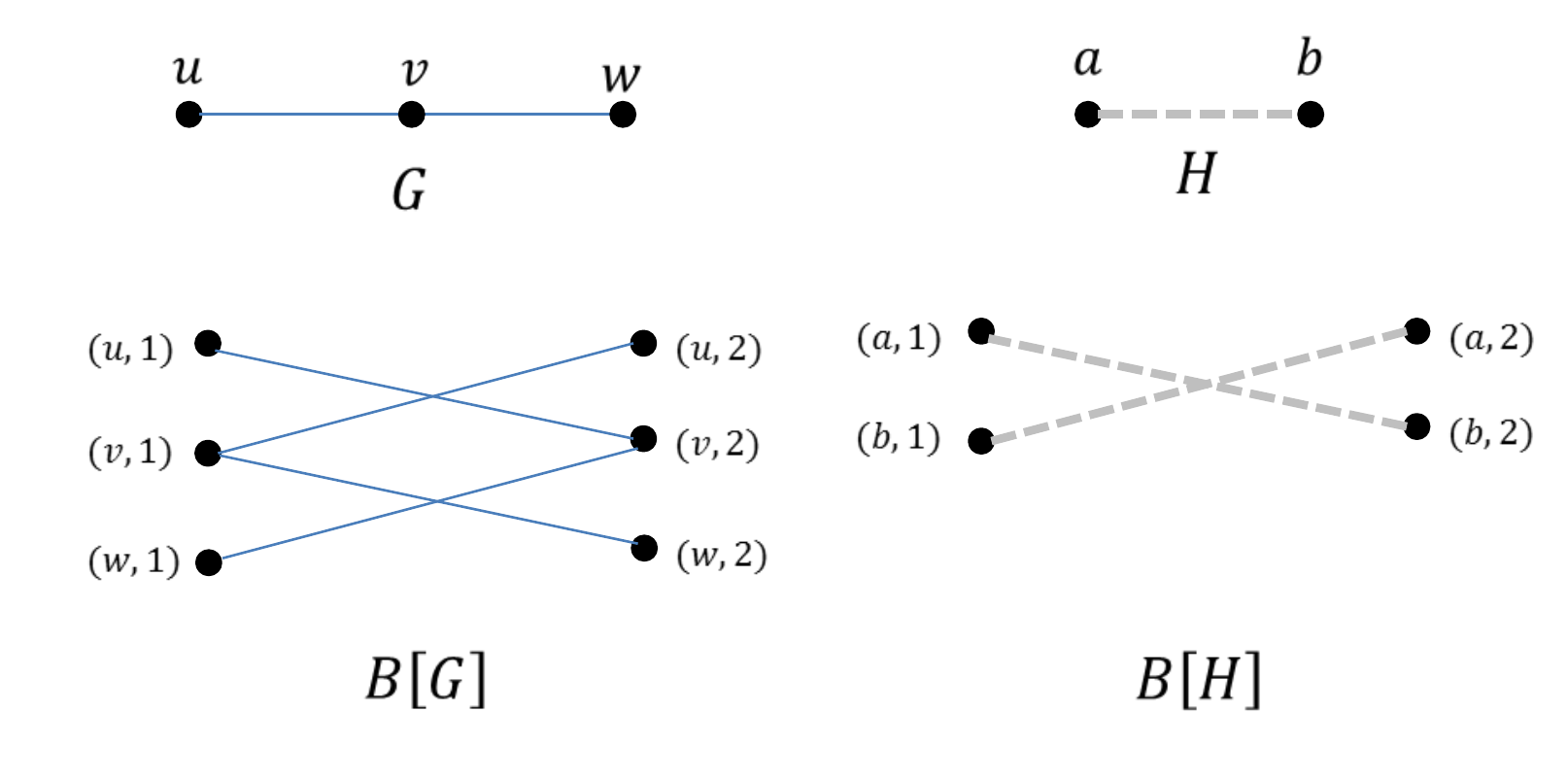}
    \label{fig:induce1}
}
\subfigure[{$\bip[G\vee H]$, super vertices $V^x_i$ and induced matching $\mset_G$}]{
  \includegraphics[height=0.25\textwidth, clip=true, trim=6cm 6cm 5cm 4.8cm]{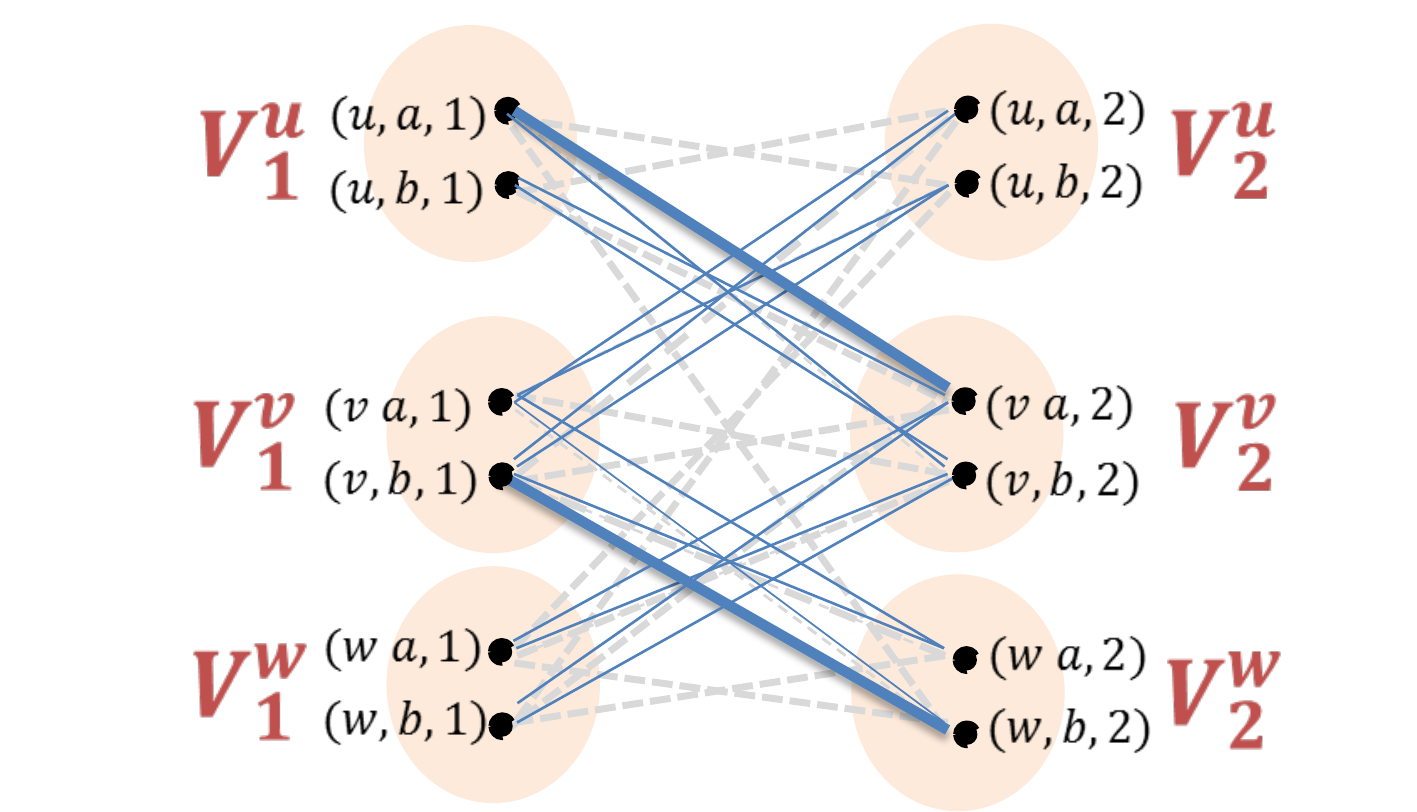}
    \label{fig:induce3}
}
\end{minipage}}}
\caption{Example of graphs $G$, $H$, $\bipm[G]$, $\bipm[H]$ and $\bip[G\vee H]$, as well as super vertices $V_i^x$, set of edges $E_G$ and induced matching $\mset_G$ (defined in Section~\ref{sec:im-ind}). Bold edges are in $\mset_G$. Solid edges (in blue, including bold edges) are edges assigned to  $E_G$, and dashed edges (in gray) are edges assigned to $E_H$. Observe that if we view $V_i^x$ as a vertex (by unifying vertices in them) and consider only edges in $E_G$, then the graph looks exactly like $\bip[G]$. Moreover, the induced matching $\mset_G$ becomes an induced matching $\{V^u_1V^v_2, V^v_1V^w_2\}$ in this graph of super vertices. This is the main fact we use to prove Eq.\eqref{eq:subadditive induce special}.}\label{fig:induce_13}
\end{figure*}

\subsection{Induced Matching Number (Eq.\eqref{eq:subadditive induce}).}
\label{sec:im-ind}
%
%
In this section, we aim to prove the following special case of Eq.\eqref{eq:subadditive induce}:
\begin{align}
\induce{\bipm[G\vee H]} \leq \induce{\bipm[G]}+\induce{\bipm[H]}.\label{eq:subadditive induce special}
\end{align}
Let $V_1$ and $V_2$ be the two partitions of vertices in $\bip[G\vee H]$ and $\mset$ be an induced matching in $\bip[G \vee H]$.
Recall that each edge in $\bip[G\vee H]$ is of the form
$(u,a,1)(v,b,2)$, where
$u, v\in V(G)$ and $a, b\in V(H)$, and it appears in
$\bip[G \vee H]$ if and only if at least one of the following conditions holds: (1) $uv\in{E(G)}$ or (2) $ab\in{E(H)}$. Our strategy is to consider edges satisfying each condition separately.

In particular, we let $E(\bip[G\vee H]) = E_G\cup E_H$, where $E_G$ and $E_H$ consist of edges $(u,a,1)(v,b,2)$ that satisfy the first and second condition, respectively. That is,
$E_G=\set{(u,a,1)(v,b,2): uv \in E(G)}$ and $E_H=\set{(u,a,1)(v,b,2): ab \in E(H)}$. For example, in Fig.~\ref{fig:induce3}, $E_G$ consists of solid edges (in blue) and $E_H$ consists of dashed edges (in gray). Note that some edges, e.g., edge $(u,a,1)(v, b, 2)$ in Fig.~\ref{fig:induce3}, are in both $E_G$ and $E_H$. We also partition the induced matching $\mset$ into $\mset=\mset_G\cup \mset_H$ where $\mset_G=\mset\cap E_G$ and $\mset_H=\mset\cap E_H$.
Obviously, $|\mset|\leq |\mset_G|+|\mset_H|$.
Our goal is to show that $|\mset_G| \leq \induce{\bip[G]}$ and
$|\mset_H|\leq\induce{\bip[H]}$.
We will only show the former claim because the latter can be argued similarly.

To prove this claim, we partition vertices in $V_1$ and $V_2$ according to which vertices in $G$ they ``inherit'' from.
That is, for any vertex $u\in V(G)$, we let
$V_1^u = \set{(u,a,1): a \in V(H)}$ and $V_2^u = \set{(u,a,2): a \in V(H)}$ (e.g., see Fig. \ref{fig:induce3}).

We can think of each set $V_i^u$ as a ``super vertex'' corresponding to a vertex $(u, i)$ in $\bipm[G]$ in the sense that if we unify all vertices in $V_i^u$ into one vertex, for all $u\in V(G)$ and $i\in V(K_2)$, and remove duplicate edges, then we will get the graph $\bipm[G]$.
In fact, we can show more than this. We can show that if we look at $\mset_G$ in the graph of super vertices, then we will get an induced matching of $\bipm[G]$ having the same size as $\mset_G$! For example, in Fig.~\ref{fig:induce3} the induced matching $\mset_G$ in $\bip[G\vee H]$ consisting of bold edges becomes a set of two edges $\{V^u_1V^v_2, V^v_1V^w_2\}$ in the graph of super vertices, which is still an induced matching.

The key idea in proving this fact is an observation that for any pair of super vertices $V_1^u$ and $V_2^v$, either there is no edge between any pair of vertices in $V_1^u$ and $V_2^v$, or there will be edges between all pairs of vertices in $V_1^u$ and $V_2^v$. For example, in Fig.~\ref{fig:induce3}, there is no edge between any pair of vertices $x\in V^u_1$ and $y\in V^w_2$ while there is an edge between every pair of vertices $x\in V^u_1$ and $y\in V^v_2$.
Using this observation, we can easily prove the two lemmas below. The first lemma says that $\mset_G$ becomes a matching in the graph of super vertices, and the second one says that this matching is, in fact, an induced matching.

\begin{figure*}[t]
\setlength{\fboxsep}{0pt}
\centering{\fbox{\begin{minipage}{\textwidth}
\centering
\subfigure[{$G$, $H$, $\bip[G]$ and $\bip[H]$}]{
  \includegraphics[height=0.25\textwidth, clip=true, trim=3cm 8.6cm 6cm 2.3cm]{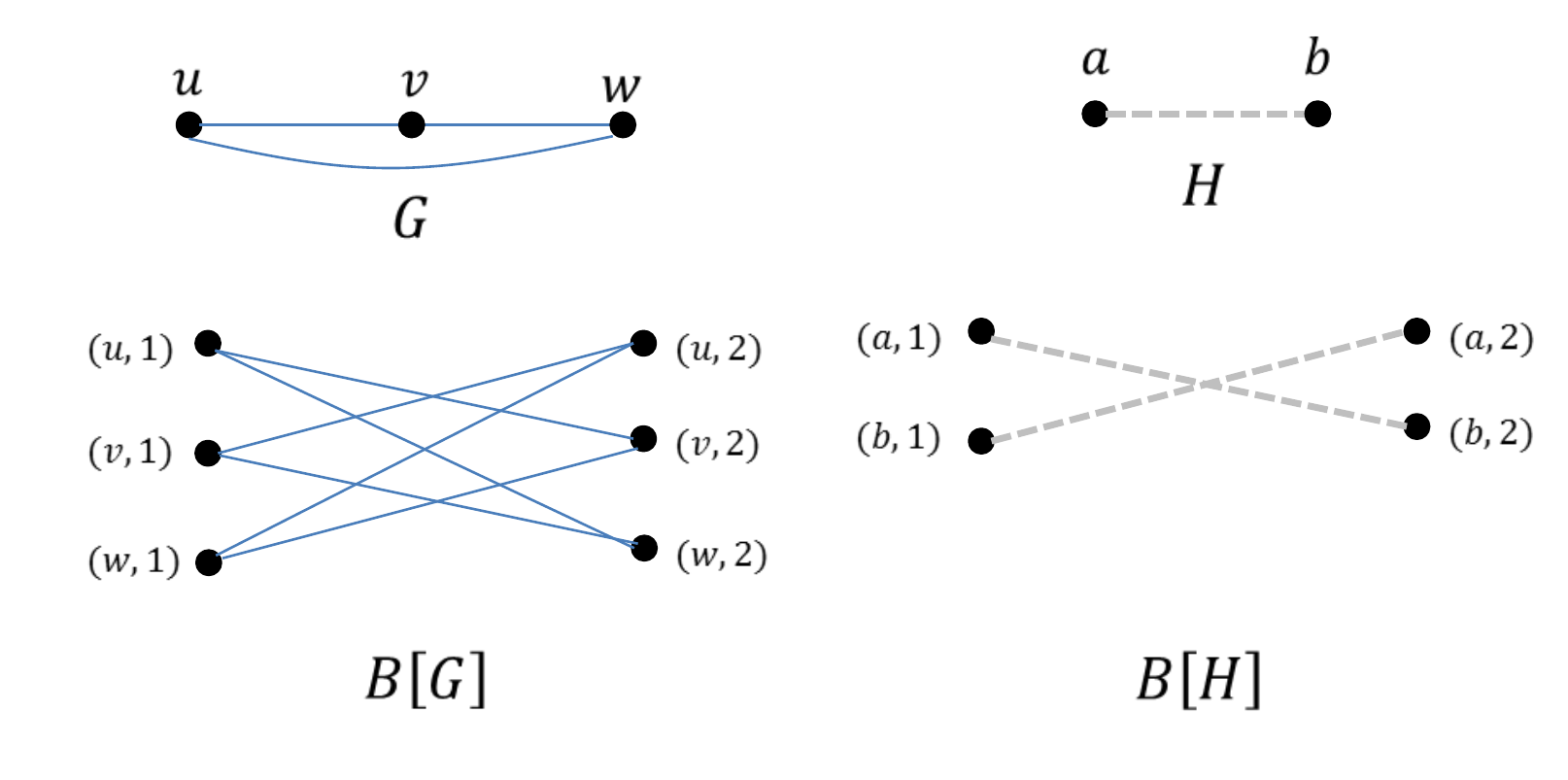}
    \label{fig:induce1_2}
}
\subfigure[{$\bip[G\vee H]$}]{
  \includegraphics[height=0.25\textwidth, clip=true, trim=6cm 5.5cm 5cm 4.2cm]{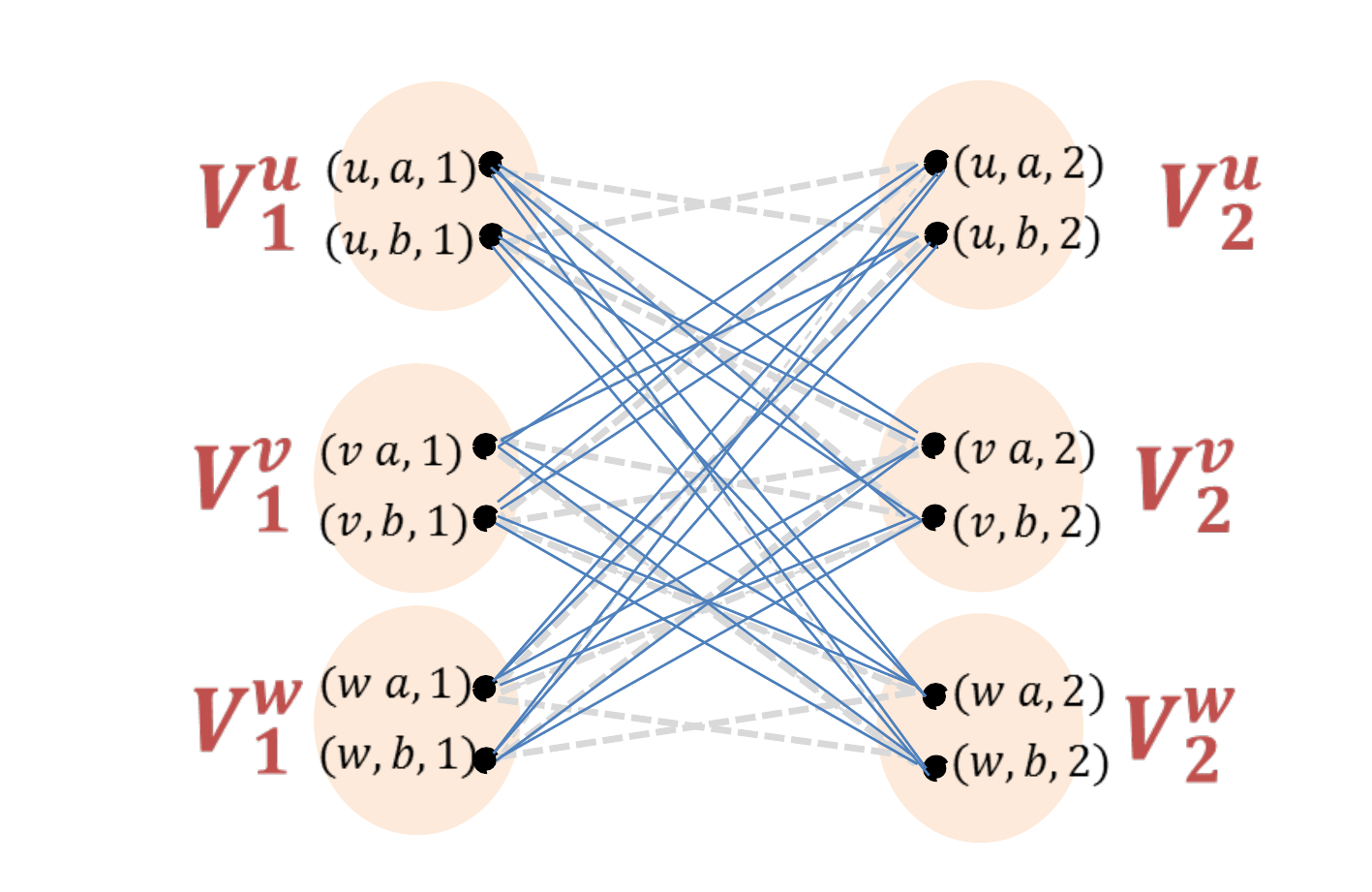}
    \label{fig:induce3_2}
}
\end{minipage}}}
\caption{Example of graphs $G$, $H$, $\bipm[G]$, $\bipm[H]$ and $\bip[G\vee H]$, as well as super vertices $V_i^x$. Solid edges (in blue) are edges assigned to  $E_G$, and dashed edges (in gray) are edges assigned to $E_H$. ($E_G$ and $E_H$ are defined in Section~\ref{sec:im-ind}.)}
\end{figure*}

Before proceeding to the proofs, recall that we write the edge set of  $\bip[G\vee H]$ as  $E(\bip[G\vee H]) = E_G\cup E_H$, where $E_G=\set{(u,a,1)(v,b,2):uv \in E(G)}$ and  $E_H=\set{(u,a,1)(v,b,2): ab \in E(H)}$.

\begin{lemma}\label{lem:induce special 1}
For any $u\in V(G)$ and $i\in \{1, 2\}$, $V_i^u$ contains an endpoint of at most one edge in $\mset_G$.
\end{lemma}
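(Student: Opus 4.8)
The plan is to argue by contradiction using the ``super vertex'' structure laid out just before the lemma. Suppose, for some $u\in V(G)$ and $i\in\{1,2\}$, the set $V_i^u$ contains endpoints of two distinct edges of $\mset_G$, say $e=(u,a,i)(v,b,j)$ and $e'=(u,a',i)(v',b',j)$ with $j\neq i$ (here I am writing the vertex in $V_i^u$ first; since $\bip[G\vee H]$ is bipartite with parts $V_1,V_2$, the other endpoints lie in $V_j$). Since $\mset$ (and hence $\mset_G$) is a matching, we must have $(v,b)\neq(v',b')$, and the two $V_i^u$-endpoints $(u,a,i)$ and $(u,a',i)$ are also distinct (otherwise $e,e'$ share an endpoint), so $a\neq a'$.

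First I would record the key adjacency observation in the precise form needed: because $e\in E_G$, we have $uv\in E(G)$. Now consider whether $uv'\in E(G)$. If it is, then by definition of $\bip[G\vee H]$ the edge $(u,a,i)(v',b',j)$ is present in $\bip[G\vee H]$ — and this edge joins an endpoint of $e$ to an endpoint of $e'$, contradicting the fact that $\mset$ is an \emph{induced} matching. If instead $uv'\notin E(G)$, then since $e'=(u,a',i)(v',b',j)\in E_G$ requires $uv'\in E(G)$ (that is exactly the defining condition for membership in $E_G$), we reach a direct contradiction with $e'\in\mset_G$. Either way we are done. The symmetric case where $V_i^u$ contributes the ``$V(G)$-fixed'' endpoint on the $V_j$ side (i.e., the two edges are $(v,b,j)(u,a,i)$ and $(v',b',j)(u,a',i)$ with $u$ fixed in the $V_j$-part and $i=j$) is handled identically after swapping the roles of the two coordinates of each edge.

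I expect the only mild subtlety — and the step I would be most careful about — is bookkeeping the orientation/bipartition: an edge of $\bip[G\vee H]$ always has one endpoint in $V_1$ and one in $V_2$, so ``two edges of $\mset_G$ with an endpoint in $V_i^u$'' could a priori mean their $V_i^u$-endpoints both lie in the same part $V_i$, which is the case treated above, and there is nothing else to check once we fix $i$. The argument never actually uses $H$ beyond noting that the second coordinates $a,a',b,b'$ are free; the whole force comes from the crossing-edge observation ``if $uv,uv'\in E(G)$ then $(u,\cdot,1)$ is adjacent to $(v',\cdot,2)$ in $\bip[G\vee H]$,'' which is immediate from the definitions of $\vee$ and of $\bip[\cdot]=(\cdot)\times K_2$. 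So the proof is a two-line case analysis; I would just make sure to state the contradiction with ``induced'' matching (not merely ``matching'') in the case $uv'\in E(G)$, since that is where the induced-ness hypothesis is essential.
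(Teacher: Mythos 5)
Your proof is correct and follows essentially the same argument as the paper: membership of a matched edge in $E_G$ forces an edge of $G$ between the underlying $V(G)$-vertices, which then yields a crossing edge of $\bip[G\vee H]$ between endpoints of the two matched edges, contradicting induced-ness. The only (immaterial) difference is a mirror image of the paper's step --- the paper deduces $uv\in E(G)$ from the first edge and uses the crossing edge $(u,a',1)(v,b,2)$, while you deduce $uv'\in E(G)$ from the second edge and use $(u,a,i)(v',b',j)$; your extra case $uv'\notin E(G)$ is vacuous since $e'\in E_G$ already forces $uv'\in E(G)$.
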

\begin{proof}
For the sake of contradiction, assume that there is a vertex $u\in
V(G)$ such that  $V_1^u$ contains two endpoints of two edges in
$\mset_G$, say $(u,a,1)(v,b,2)$ and $(u,a',1)(v',b',2)$.
(The case of $V_2^u$ is proved analogously.)
Since $(u,a,1)(v,b,2)$ is in $E_G$ (recall that $\mset_G=\mset\cap E_G$), we have that $uv\in E(G)$ and thus
$(u,1)(v,2)$ is in $E(\bipm[G])$.
This fact then implies that there is an edge between $(u,a',1)$ and
$(v,b,2)$ in $E_G$ as well, contradicting to the fact that $\mset_G$ (and
thus $\mset$) is an induced matching.\qedhere
\end{proof}

\paragraph{Example}
Here we illustrate the proof of Lemma~\ref{lem:induce special 1}.
Consider Fig.~\ref{fig:induce3_2} and let us say that $\mset_G$
contains edges $(u, a, 1)(v, b, 2)$ and $(u, b, 1)(v, a, 2)$ which
means that $V_1^u$ contains endpoints of two edges in
$\mset_G$. Having the first edge in $E_G$ means that $uv\in E(G)$ and
thus $(u, 1)(v, 2)$ is in $E(\bipm[G])$ (as witnessed in
Fig.~\ref{fig:induce1_2}). But then it means
that the edge $(u, a, 1)(v, a, 2)$ must be in $E_G$ as well, making
$\mset_G$ (and thus $\mset$) not an induced matching.

\begin{lemma}\label{lem:induce special 2}
For any $u, u', v, v'\in V(G)$, if $\mset_G$ contains an edge between a pair of vertices in $V_1^u$ and $V_2^v$ and an edge between another pair of vertices in $V_1^{u'}$ and $V_2^{v'}$, then there must be no edge between vertices in $V_1^u$ and $V_2^{v'}$ in $E_G$.
\end{lemma}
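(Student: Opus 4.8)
The plan is to mimic the proof of Lemma~\ref{lem:induce special 1}, relying on the same ``all-or-nothing'' structure of $E_G$ between super vertices: whether an edge $(x,a,1)(y,b,2)$ of $\bip[G\vee H]$ belongs to $E_G$ depends only on $x$ and $y$ (namely, on whether $xy\in E(G)$) and not at all on $a$ or $b$. Consequently, for any pair of super vertices $V_1^x$ and $V_2^y$, either $E_G$ contains the edge between \emph{every} pair of vertices of $V_1^x$ and $V_2^y$ (this happens exactly when $xy\in E(G)$), or it contains \emph{no} such edge. This observation is the only tool needed.

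First I would fix the two edges of $\mset_G$ supplied by the hypothesis: let $e=(u,a,1)(v,b,2)$ be the edge with one endpoint in $V_1^u$ and the other in $V_2^v$, and let $e'=(u',a',1)(v',b',2)$ be the edge with one endpoint in $V_1^{u'}$ and the other in $V_2^{v'}$. Note $e\neq e'$ — this is exactly the force of the word ``another'' in the statement, and it is necessary, since otherwise the claimed conclusion would be contradicted by $e$ itself. Now I would argue by contradiction: suppose $E_G$ does contain an edge between some vertex of $V_1^u$ and some vertex of $V_2^{v'}$. Every such edge has the form $(u,c,1)(v',d,2)$, and by the definition of $E_G$ this forces $uv'\in E(G)$.

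Applying the all-or-nothing observation to the super vertices $V_1^u$ and $V_2^{v'}$, the pair consisting of $(u,a,1)\in V_1^u$ (an endpoint of $e$) and $(v',b',2)\in V_2^{v'}$ (an endpoint of $e'$) must then \emph{also} be joined by an edge lying in $E_G\subseteq E(\bip[G\vee H])$. Since $(u,a,1)$ lies in the first partition and $(v',b',2)$ in the second, this edge is distinct from both $e$ and $e'$, and it joins an endpoint of $e$ to an endpoint of $e'$. Hence $\mset_G\subseteq\mset$ fails to be an induced matching of $\bip[G\vee H]$, contradicting the choice of $\mset$. I expect no genuine obstacle here; the only thing to watch is the degenerate case $v=v'$ (or $u=u'$), which is harmless because the argument never used distinctness of the super vertices, only $e\neq e'$. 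As in the discussion following Lemma~\ref{lem:induce special 1}, I would include a small worked instance on Fig.~\ref{fig:induce3_2} to illustrate why $uv'\in E(G)$ immediately produces the offending edge $(u,a,1)(v',b',2)$.

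Together with Lemma~\ref{lem:induce special 1} (which makes $\mset_G$ project to a matching on the super vertices) and its analogue for $\mset_H$, this lemma shows that $\mset_G$ projects to an \emph{induced} matching of $\bipm[G]$ of size $|\mset_G|$, so $|\mset_G|\leq\induce{\bipm[G]}$; combined with $|\mset_H|\leq\induce{\bipm[H]}$ and $|\mset|\leq|\mset_G|+|\mset_H|$, this yields Eq.~\eqref{eq:subadditive induce special}.
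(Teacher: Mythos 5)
Your proof is correct and takes essentially the same route as the paper's: an edge of $E_G$ between $V_1^u$ and $V_2^{v'}$ forces $uv'\in E(G)$, hence (by the all-or-nothing structure of $E_G$) the edge $(u,a,1)(v',b',2)$ lies in $E_G\subseteq E(\bipm[G\vee H])$, contradicting that $\mset$ is an induced matching. (A small quibble: the distinctness of that edge from $e$ and $e'$ follows from the matching property of $\mset$, not from the bipartition as you state — but it is not really needed, since the induced-matching definition already forbids any cross edge joining endpoints of two distinct matching edges.)
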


\begin{proof}
Assume for a contradiction that $\mset_G$ contains edges $(u, a, 1)(v, b, 2)$ and $(u', a', 1)(v', b', 2)$ and there is an edge, say $(u, c, 1)(v', d, 2)$ in $E(G)$.
Since the edge $(u, c, 1)(v', d, 2)$\bundit{the word ``last edge'' is not clear, so I re-phrase. The old one is ``Having the last edge in $E_G$ implies that $uv'\in E(G)$ which means that $(u, 1)(v', 2)\in E(\bipm[G])$ which then implies that $(u, a, 1)(v', b', 2)$ is in $E_G$.''}
 is in $E_G$, we have $uv'\in{E(G)}$ and thus $(u, 1)(v', 2)\in E(\bipm[G])$.
This implies that $(u, a, 1)(v', b', 2)$ is in $E_G$,
which contradicts the fact that $\mset$ is an induced matching in $\bipm[G\vee H]$.\qedhere
\end{proof}

\paragraph{Example}
Here we illustrate the proof of Lemma~\ref{lem:induce special 2}.
Consider Fig.~\ref{fig:induce3_2}, and let us say
that the matching $\mset_G$ contains $(v, a, 1)(u, a, 2)$ and $(u, a,
1)(w, a, 2)$ and there is an edge $(v, b, 1)(w, b, 2)$ which prevents
$\mset_G$ from being an induced matching in the graph of super
vertices. Having the last edge in $E_G$ implies that $(v, 1)(w, 2)\in
E(\bipm[G])$ which in turns implies that $(v, a, 1)(w, a, 2)\in E_G$,
making $\mset_G$ (and thus $\mset$) not an induced matching in
$\bipm[G\vee H]$.

\paragraph{Note on Proving the General Version (Section~\ref{sec: subadditivity full})} To prove the general version, i.e., Eq.\eqref{eq:subadditive induce}, we may define $E_G$ and $E_H$ analogously to the proof in this section. We may then define super vertices in a similar way and prove the lemmas that are similar in spirit to Lemma \ref{lem:induce special 1} and \ref{lem:induce special 2}. However, we choose an {\em alternative way} which seems more suitable in proving the general version by decomposing the graph into products of some well-structured graphs and show the {\em associativity} property of graph products we use.

\begin{figure*}[t]
\setlength{\fboxsep}{0pt}
\centering{\fbox{\begin{minipage}{\textwidth}
\centering
\subfigure[{$G$, $H$, $\bipp[G]$ and $\bipp[H]$}]{
  \includegraphics[height=0.28\textwidth, clip=true, trim=2cm 6cm 4.2cm 1.5cm]{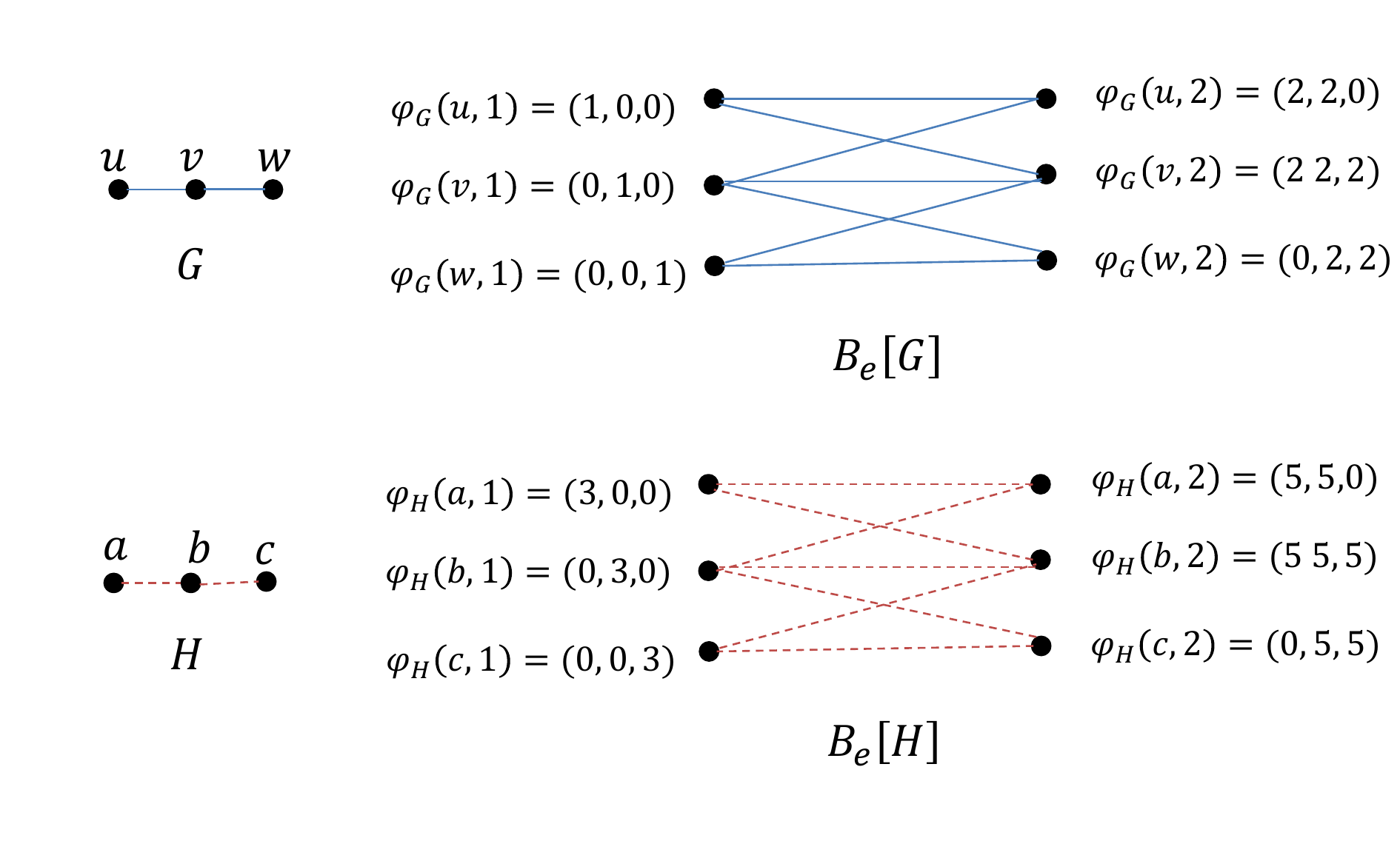}
    \label{fig:dimension1}
}
\subfigure[{$\bipp[G\cdot H]$}]{
  \includegraphics[height=0.28\textwidth, clip=true, trim=6cm 6cm 5cm 1.3cm]{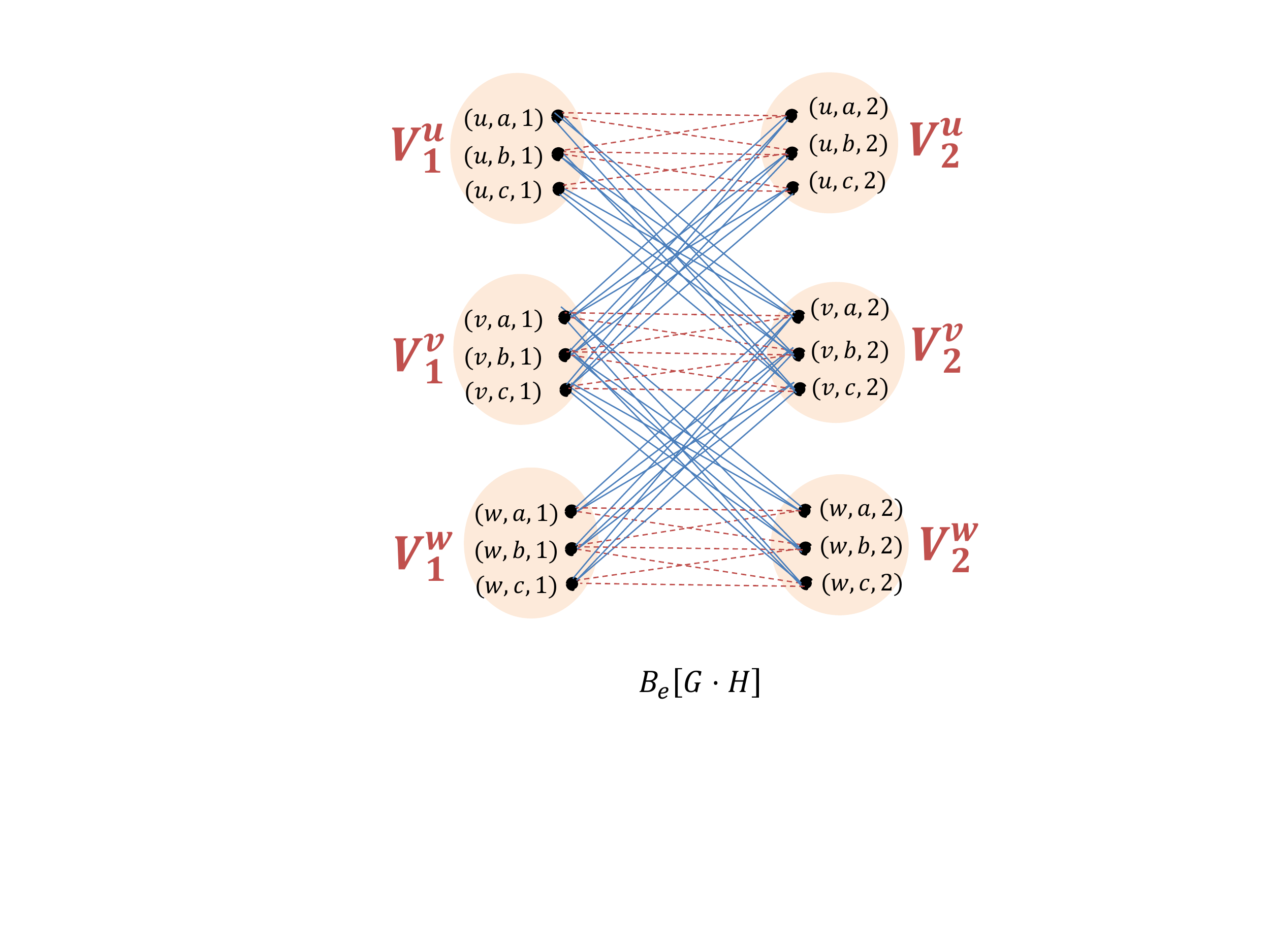}
    \label{fig:dimension3}
}
\end{minipage}}}
\caption{Graphs $G$, $H$, $\bipp[G]$, $\bipp[H]$ and $\bipp[G\cdot H]$, as well as super vertices $V_i^x$ and mappings $\varphi_G$ and $\varphi_H$ that realize $\bipp[G]$ and $\bipp[H]$, respectively. Note that directions of edges are omitted in the pictures. They are always from left to right.}\label{fig:dimension}
\end{figure*}

\subsection{Poset Dimension Number (Eq.\eqref{eq:subadditive dimension}).}
\label{sec:poset-ineq}
We now prove the special case of Eq.~\eqref{eq:subadditive dimension}\footnote{Note that Eq.\eqref{eq:subadditive dimension} implies that $\dimension{\bipp[G\cdot H]}\leq \dimension{\bipp[G]}+\chi(G)\dimension{\bipp[H]} +1$, so we are proving a slightly stronger statement for this special case.}:

\begin{align}
\label{eq:subadditive dimension special} \dimension{\bipp[G\cdot H]} &\leq \dimension{\bipp[G]} +\chi(G)\dimension{\bipp[H]}\nonumber \,.
\end{align}

%
Throughout this section, we will think of $\bipp[G]$, for any undirected graph $G$, as a poset $G\hstrong \vec{K}_2$. Thus, edges in $\bipp[G \cdot H]$ are directed edges in the form $(u, a, 1)(v, b, 2)$ for some $u, v\in V(G)$ and $a, b\in V(H)$.
%
%
Let $d_G = \dimension{\bipp[G]}$, $d_H = \dimension{\bipp[H]}$, and $\phi_G: V(\bipp[G]) \rightarrow \R^{d_G}, \phi_H: V(\bipp[H]) \rightarrow \R^{d_H}$ be mappings that realize the posets $\bipp[G]$ and $\bipp[H]$, respectively. This means that, for example, $(u, 1)(v, 2)\in E(\bipp[G])$ if and only if $\varphi_G(u, 1)<\varphi_G(v, 2)$.
%
%
We may assume without loss of generality that all coordinates of $\phi_G$ and $\phi_H$ are non-negative (by adding appropriate positive numbers).
See an example in Fig.~\ref{fig:dimension}.

Our strategy is to use $\varphi_G$ and $\varphi_H$ to define a mapping $\varphi:V(\bipp[G\cdot H])\rightarrow \reals^{d_G+\chi(G)d_H}$ that realizes $\bipp[G\cdot H]$. Again, this means that we want $\varphi$ such that for any vertices $(u, a, i)$ and $(v, b, j)$,
$(u, a, i)(v, b, j)\in E(\bipp[G\cdot H])$ if and only if $\varphi(u, a, i)<\varphi(v, b, j)\,.$
To simplify our discussion, we will focus on the case where $i=1$ and $j=2$ (The cases when $i=j$ are easy to deal with).


\begin{figure*}[t]
\setlength{\fboxsep}{0pt}
\centering{\fbox{\begin{minipage}{\textwidth}
\centering
\subfigure[]{
  \includegraphics[height=0.22\textheight, clip=true, trim=5cm 6cm 3cm 1cm]{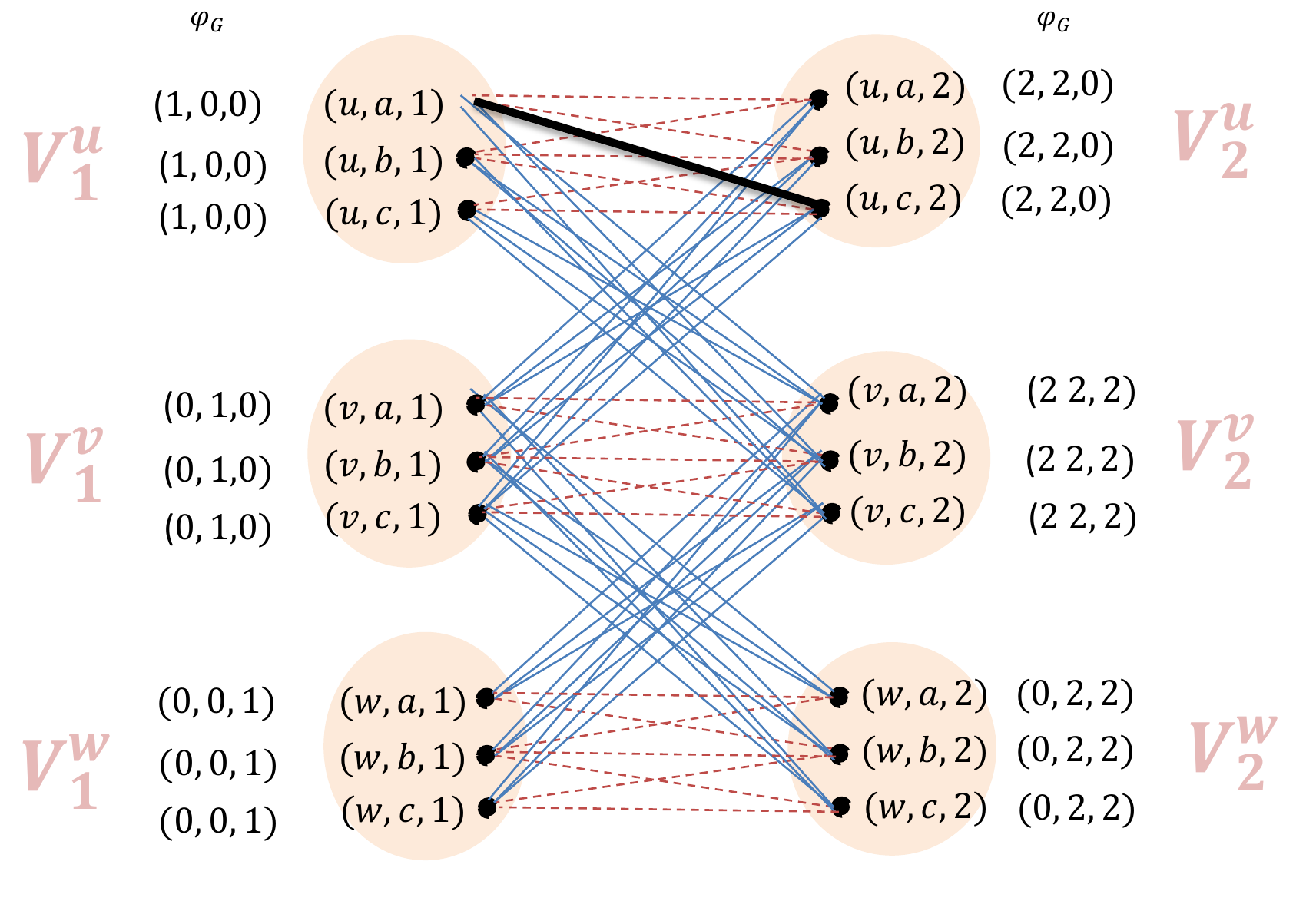}
    \label{fig:dimension4}
}
\subfigure[]{
  \includegraphics[height=0.22\textheight, clip=true, trim=3cm 6cm 2cm 1cm]{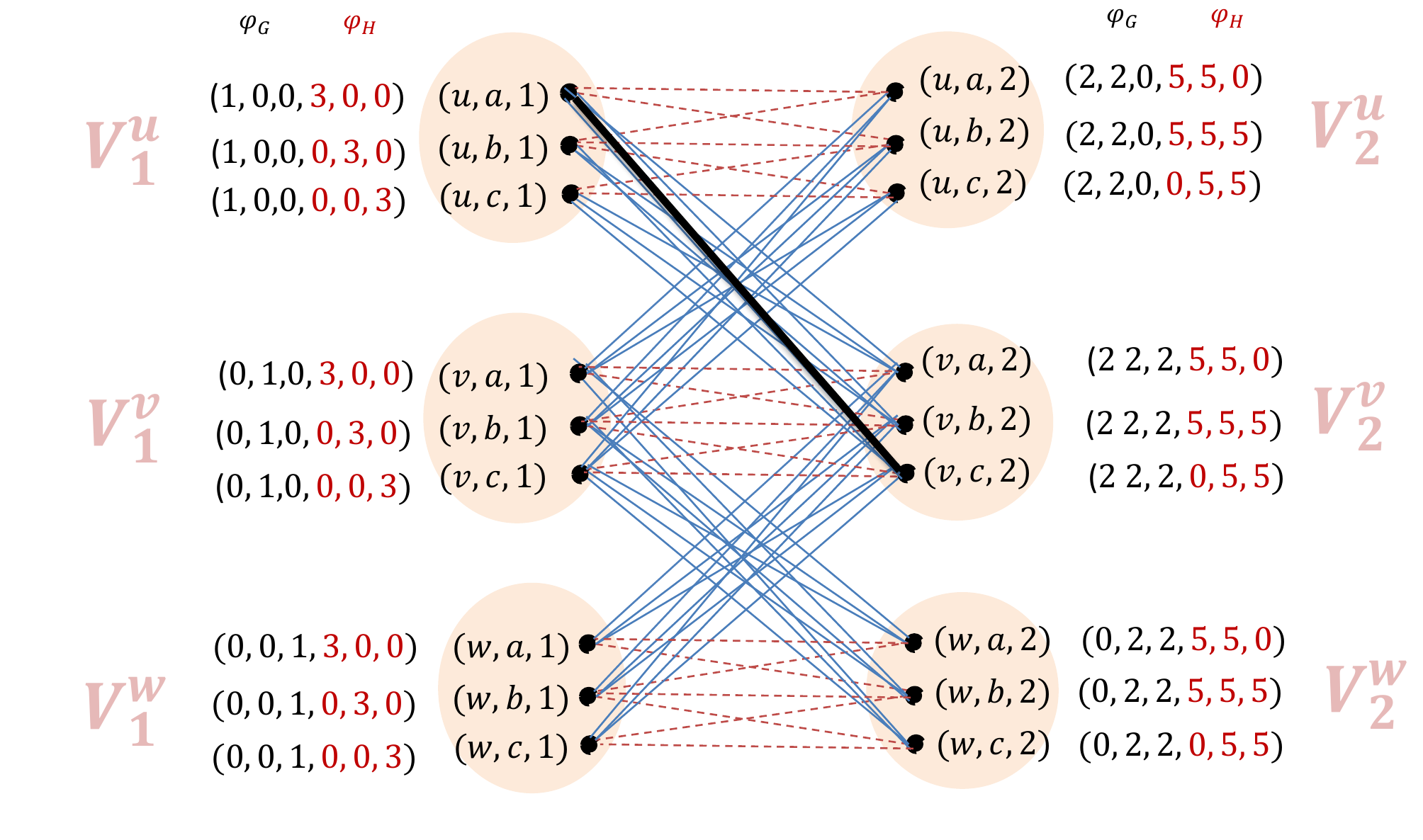}
    \label{fig:dimension5}
}
\end{minipage}}}
\caption{\subref{fig:dimension4} $\varphi_1$ (cf. Section~\ref{sec: simplified subadditivity}) introduces ``undesirable edges'' such as the bold edge in this picture.
\subref{fig:dimension5} $\varphi_2$ (cf. Section~\ref{sec: simplified subadditivity}) removes ``desirable edges'' such as the bold edge in this picture.}\label{fig:dimension_45}
\end{figure*}

\paragraph{Proof Idea} Before we show the construction of $\varphi$, let us show a few {\em failed attempts} to illustrate the intuition behind the construction (readers may feel free to skip this part to the definition of $\varphi$ below). We use Fig.~\ref{fig:dimension} as an example. Fig.~\ref{fig:dimension_45} and \ref{fig:dimension_67} might be also helpful as visual aids.

The first attempt is to use $\varphi_1(u, a, i)=\varphi_G(u, i)$ to realize $\bipp[G\cdot H]$. This obviously fails, simply because we did not use $\varphi_H$ at all: In Fig.~\ref{fig:dimension} (also see Fig.~\ref{fig:dimension4}), we have $\varphi_1(u, a, 1)<\varphi_1(u, c, 2)$, but $(u, a, 1)(u, c, 2)\notin E(\bipp[G\cdot H])$.  In other words, $\varphi_1$ ``introduces'' some ``undesirable edges'' -- edges $(u', a', 1)(v', b', 2)$ that are {\em not} in $\bipp[G\cdot H]$ but $\varphi_1(u', a', 1)<\varphi_1(v', b', 2)$.



A natural way to fix this is to define $\varphi_2(u, a, i)=\varphi_G(u, i)\varphi_H(a, i)$ which is a ``concatenation'' of $\varphi_G(u, i)$ and $\varphi_H(a, i)$ (thus, the dimension of $\varphi_2$ is $d_G+d_H$). It can be shown that there is no undesirable edge introduced by $\varphi_2$. However, $\varphi_2$ might ``remove'' some ``desirable edges'' -- edges  $(u', a', 1)(v', b', 2)$ that {\em are} in $\bipp[G\cdot H]$ but $\varphi_1(u', a', 1)\not <\varphi_1(v', b', 2)$.
For example, in Fig.~\ref{fig:dimension} (also see Fig.~\ref{fig:dimension5}), $\varphi_2(u, a, 1)\not < \varphi_2(v, c, 2)$, but $(u, a, 1)(v, c, 2)\in E(\bipp[G\cdot H])$.
%


We thus need a more clever way to combine $\varphi_G$ with $\varphi_H$. A crucial observation we found is that {\em if we concatenate them only at vertices that are independent in $G$, then we will not remove any desirable edges}. For example, in Fig.~\ref{fig:dimension1}, vertices $u$ and $w$ are independent in $G$. So, for any $x\in \{u, w\}$ and $r\in V(H)$, we will let $\varphi_3$ be as in Fig.~\ref{fig:phi three}.
\begin{figure*}
\setlength{\fboxsep}{0pt}
\centering{
\fbox{
\begin{minipage}{\columnwidth}
\noindent
{
\begin{align*}
\varphi_3(x, r, 1) &=\mbox{\cfbox{gray}{$\varphi_G(x, 1)$}{1.7cm}}\mbox{ \cfbox{blue}{$\varphi_H(r, 1)$}{1.7cm}}\mbox{ \cfbox{green}{$(0,0,0)$}{1.7cm}} &\mbox{and} &&
\varphi_3(x, r, 2) &=\mbox{\cfbox{gray}{$\varphi_G(x, 2)$}{1.7cm}}\mbox{ \cfbox{blue}{$\varphi_H(r, 2)$}{1.9cm}}\mbox{ \cfbox{green}{$(\infty,\infty,\infty)$}{1.9cm}} \\
\varphi_3(v, r, 1) &=\mbox{\cfbox{gray}{$\varphi_G(v, 1)$}{1.7cm}}\mbox{ \cfbox{blue}{$(0,0,0)$}{1.7cm}} \mbox{ \cfbox{green}{$\varphi_H(r, 1)$}{1.7cm}} &\mbox{and} &&
\varphi_3(v, r, 1) &=\mbox{\cfbox{gray}{$\varphi_G(v, 2)$}{1.7cm}}\mbox{ \cfbox{blue}{$(\infty,\infty,\infty)$}{1.9cm}}\mbox{ \cfbox{green}{$\varphi_H(r, 2)$}{1.9cm}}
\end{align*}
}
\end{minipage}
}}
\caption{Example of $\varphi_3$. Note that $x$ is node in $\{u, w\}$.}\label{fig:phi three}
\end{figure*}
In other words, every vertex starts with $\varphi_G$ (i.e., in the first (gray) boxes). Then, we ``attach'' $\varphi_H$ to $\varphi_G$ at vertices of the form $(x,r,i)$, where $x \in \set{u,w}$, while attaching ``trivial'' vectors ($(0, 0, 0)$ or $(\infty, \infty, \infty)$) at $(v,r,i)$ (i.e., in the second (blue) boxes in Fig~\ref{fig:phi three}). We then do the opposite. We attach $\varphi_H$ to vertices $(v,r,i)$ while attaching trivial vectors to other vertices (i.e., in the last (green) boxes). It can be checked (e.g., Fig.~\ref{fig:dimension7}) that $\varphi_3$ does realize $\bipp[G\cdot H]$.

\begin{figure*}
\setlength{\fboxsep}{0pt}
\centering{\fbox{\begin{minipage}{\textwidth}
\begin{center}
 \includegraphics[height=0.25\textheight, clip=true, trim=2cm 6cm 0cm 1cm]{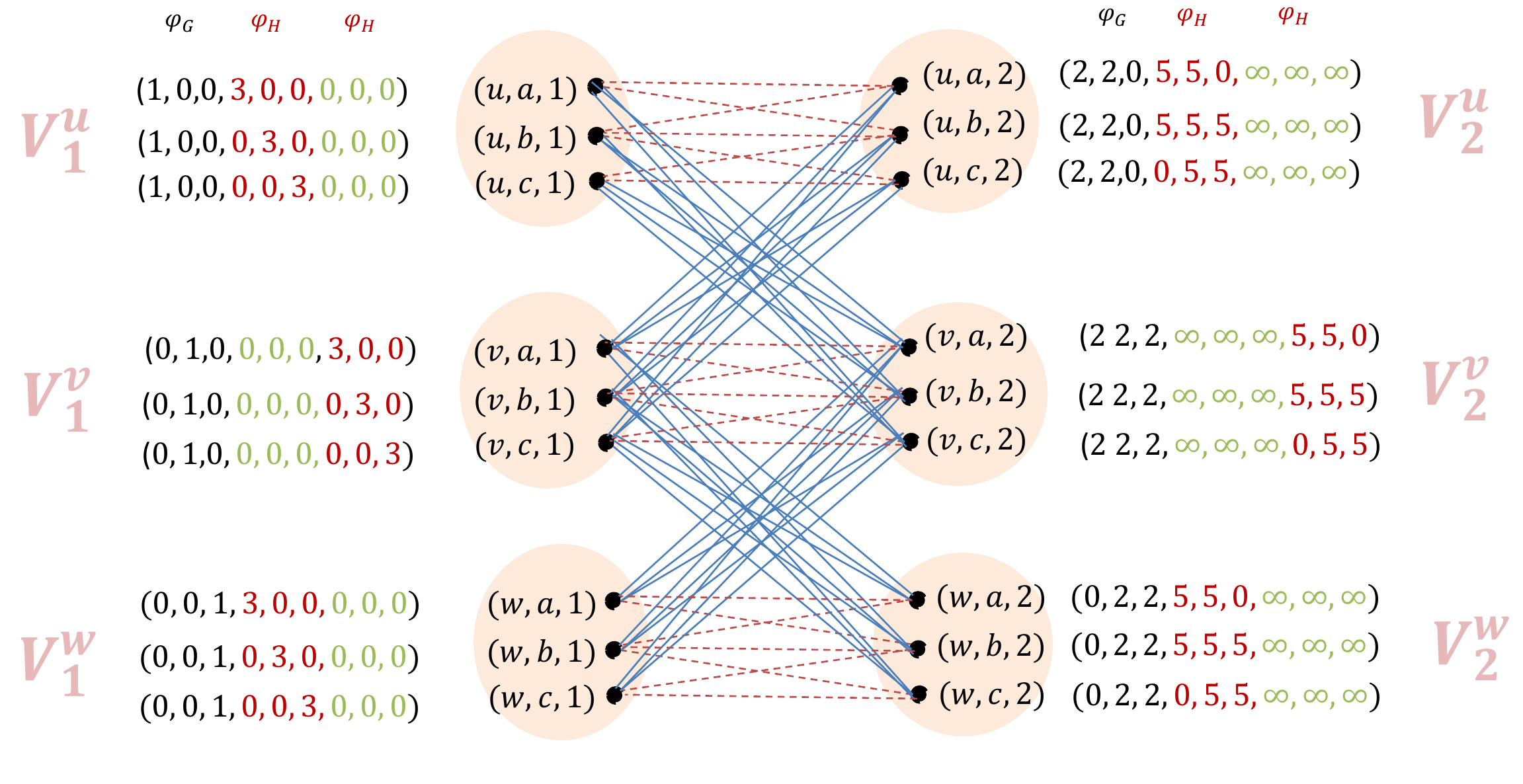}
\end{center}
\end{minipage}}}
\caption{$\varphi_3$ (cf. Section~\ref{sec: simplified subadditivity}) which realizes {$\bipp[G\cdot H]$}.}\label{fig:dimension_67}\label{fig:dimension7}
\end{figure*}

To summarize, the general idea of constructing $\varphi$ is to keep attaching $\varphi_H$ to $\varphi_G$ where each attachment must be done only on vertices that are {\em independent in $G$}. The dimension of $\varphi$ depends on how many times we attach $\varphi_H$. A natural way to minimize the number of attachments is to use $\chi(G)$ color classes since each color class contains independent vertices. This is why the dimension becomes $d_G+\chi(G)d_H$.

%
%
%

\paragraph{Constructing $\varphi$} 
Let $\mathcal{C}: V(G) \rightarrow [k]$ be an optimal coloring of $G$, where $k=\chi(G)$.
The coordinates in $\R^{d_G+ k d_H}$ are viewed as $k+1$ {\em blocks}.
In the first block $\bset_0$, we have $d_G=\dimension{\bipp[G]}$ coordinates, and in the $k$
consecutive blocks $\bset_1, \ldots, \bset_{k}$, we have $d_H=\dimension{\bipp[H]}$
coordinates per block. We will define the coordinates of each vertex in $\bipp[G \cdot H]$ by describing the coordinates in each block. For each point ${\bf x} \in \R^{d_G + k d_H}$ and each block $\bset_j$, we refer to coordinates in block $\bset_j$ of ${\bf x}$ as ${\bf x}|_{\bset_j}\,.$

For each vertex in $\bipp[G\cdot H]$ of the form $(u,a,1)$, we define its coordinates
as\danupon{For later: It might be better to change $u$ to $x$ and $a$ to $r$ but there are too many of them (also in appendix)!}
\begin{align*}
&\phi((u,a,1))|_{\bset_0} &&= \phi_G((u,1)),\\
&\phi((u,a,1))|_{\bset_{\mathcal{C}(u)}} &&= \phi_H((a,1)), &&\mbox{and}~~\\
&\phi((u,a,1))|_{\bset_\ell} &&= (0,\ldots,0), &&\mbox{otherwise.}
\end{align*}

For each vertex in $\bipp[G\cdot H]$ of the form $(u,a,2)$, we define its coordinates
as
\begin{align*}
&\phi((u,a,2))|_{\bset_0} &&= \phi_G((u,2)),\\
&\phi((u,a,2))|_{\bset_{\mathcal{C}(u)}} &&= \phi_H((a,2)), &&\mbox{and}~~\\
&\phi((u,a,2))|_{\bset_\ell} &&= (\infty, \ldots, \infty), &&\mbox{otherwise.}
\end{align*}
We finish the proof of Eq.\eqref{eq:subadditive dimension special} by the following lemma, 
which can be proved by case analysis. 

\begin{lemma}\label{lem:phi realizes B}
$\varphi$ realizes $\bipp[G\cdot H]$.
\end{lemma}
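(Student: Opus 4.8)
The goal is to verify that for all vertices $(u,a,i)$ and $(v,b,j)$ of $\bipp[G\cdot H]$ we have $(u,a,i)(v,b,j)\in E(\bipp[G\cdot H])$ if and only if $\varphi((u,a,i)) < \varphi((v,b,j))$. Since $\bipp[G\cdot H]$ is a height-two poset whose edges all go from layer $1$ to layer $2$, there are really three cases to handle, and I would organize the proof around them. First the case $i=j$: here no edge of $\bipp[G\cdot H]$ exists between the two vertices (both are minimal, or both maximal), so I must check $\varphi$ never puts them in strict order. When $i=j=1$, look at block $\bset_0$: it carries $\varphi_G((u,1))$ and $\varphi_G((v,1))$, and since $(u,1),(v,1)$ are both minimal in $\bipp[G]$ we have $\varphi_G((u,1))\not<\varphi_G((v,1))$ and vice versa, so the $\bset_0$-coordinates already block strict domination in either direction. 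When $i=j=2$, the same argument applies with $\varphi_G((u,2)),\varphi_G((v,2))$ both maximal in $\bipp[G]$. (One must also note the "$\infty$" entries in the off-color blocks for layer-$2$ vertices don't create a spurious order when compared against each other — they tie, so $\bset_0$ still decides.)

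The substantive case is $i=1$, $j=2$. By the definition of the lexicographic product, $(u,a,1)(v,b,2)\in E(\bipp[G\cdot H])$ iff $uv\in E(G)$, or ($u=v$ and $ab\in E(H\hstrong\vec K_2)$ in the relevant sense, i.e. $(a,1)(b,2)\in E(\bipp[H])$). I would split into the subcase $u\ne v$ and $u=v$. If $u=v$: all of $\varphi(\cdot)$ agrees on $\bset_0$ (namely $\varphi_G((u,1))\le\varphi_G((u,2))$, which holds coordinatewise since $(u,1)$ is minimal and $(u,2)$ maximal in $\bipp[G]$ and this is an edge of $\bipp[G]$), the off-color blocks compare $(0,\dots,0)$ against $(\infty,\dots,\infty)$ and hence strictly increase in every such coordinate, and the single block $\bset_{\mathcal C(u)}$ compares $\varphi_H((a,1))$ against $\varphi_H((b,2))$; so $\varphi((u,a,1))<\varphi((u,b,2))$ iff $\varphi_H((a,1))<\varphi_H((b,2))$ iff $(a,1)(b,2)\in E(\bipp[H])$ — exactly what we want. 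If $u\ne v$: now I claim $\varphi((u,a,1))<\varphi((v,b,2))$ iff $\varphi_G((u,1))<\varphi_G((v,2))$ iff $uv\in E(G)$. The "only if" is immediate from block $\bset_0$. For "if": assume $uv\in E(G)$, so $\varphi_G((u,1))<\varphi_G((v,2))$ in $\bset_0$; because $u\ne v$, the colors $\mathcal C(u)$ and $\mathcal C(v)$ are \emph{distinct} (adjacent vertices get different colors), so in block $\bset_{\mathcal C(u)}$ the source carries $\varphi_H((a,1))$ (possibly nontrivial) while the target carries $(\infty,\dots,\infty)$, and in block $\bset_{\mathcal C(v)}$ the source carries $(0,\dots,0)$ while the target carries $\varphi_H((b,2))\ge 0$ — so those blocks are $\le$ coordinatewise; and every other block has source $(0,\dots,0)$ and target $(\infty,\dots,\infty)$, so $\le$ there too. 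Hence $\varphi((u,a,1))\le\varphi((v,b,2))$ coordinatewise with strict inequality somewhere in $\bset_0$, i.e. $<$. This is where the coloring and the nonnegativity assumption on $\varphi_G,\varphi_H$ are used, and it is the only place the independence-in-$G$ idea from the proof sketch is really needed.

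The main obstacle — and the point that needs the most care — is the "$u\ne v$, $uv\notin E(G)$" sub-subcase of $i=1,j=2$: here $(u,a,1)(v,b,2)$ is \emph{not} an edge of $\bipp[G\cdot H]$, yet $u,v$ may be nonadjacent in $G$ with $\mathcal C(u)=\mathcal C(v)$ (same color), so blocks $\bset_{\mathcal C(u)}$ now both carry genuine $\varphi_H$ values and could accidentally be strictly ordered. I must show $\varphi((u,a,1))\not<\varphi((v,b,2))$ anyway. The resolution: since $uv\notin E(G)$ and $(u,1),(v,2)$ are distinct vertices of $\bipp[G]$ with $u\ne v$, $\varphi_G$ does \emph{not} put them in strict order, i.e. $\varphi_G((u,1))\not<\varphi_G((v,2))$ — so block $\bset_0$ already contains a coordinate where the source exceeds the target, killing strict domination regardless of what the $\varphi_H$ blocks do. (And if $\mathcal C(u)\ne\mathcal C(v)$ the argument is even easier, via the same $\bset_0$ witness.) Once these cases are all written out, the lemma follows, and with it Eq.~\eqref{eq:subadditive dimension special}. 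I would present it as a short lemma-by-cases, with the $i=1,j=2$ case as the heart and a one-line remark disposing of $i=j$.
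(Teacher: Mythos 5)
Your proof is correct and takes essentially the same route as the paper's: the same block-by-block verification that $\varphi$ realizes $\bipp[G\cdot H]$, using the realizer property of $\varphi_G$ on $\bset_0$, of $\varphi_H$ on the color block, and properness of the coloring $\mathcal{C}$ in the $u\neq v$, $uv\in E(G)$ case. The only differences are organizational: you fold the paper's Cases 3--5 ($u=v$) into a single argument via the realizer of $\bipp[H]$, and you spell out the $i=j$ comparisons that the paper dismisses as easy.
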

\begin{proof}
We will use the following properties:

\renewcommand{\theenumi}{P\arabic{enumi}}
\squishnum
\item \label{property: embedding of G} For any $(u,1), (v,2) \in
  V(\bipp[G])$,
 $(u,1)(v,2) \in E(\bipp[G])$ $\iff$ $\phi_G((u,1)) < \phi_G((v,2))$.

\item \label{property: embedding of H} For any
  $(a,1)(b,2)\in{V(\bipp[H])}$,
  $(a,1)(b,2) \in E(\bipp[H])$ $\iff$ $\phi_H((a,1)) < \phi_H((b,2))$.

\item \label{property: coloring of G} For any $uv \in E(G)$, $\cset(u) \neq \cset(v)$.
\squishnumend
\renewcommand{\theenumi}{\arabic{enumi}}

We argue that for any vertices $(u,a,1),(v,b,2) \in V(\bipp[G\cdot H])$,
we have $(u,a,1)(v,b,2) \in E(\bipp[G\cdot H])$ if and only if
$\phi((u,a,1))< \phi((v,b,2))$. We analyze five possible cases.

\paragraph{Case 1: $u \neq v$ and $uv \in E(G)$}
    We will show that $\phi((u,a,1)) < \phi((v,b,2))$.
    First, $(u,a,1)(v,b,2) \in E(\bipp[G\cdot H])$ by construction, and
    $\mathcal{C}(u)\neq\mathcal{C}(v)$ by
    Property~\ref{property: coloring of G}.
    Next, consider the blocks $\bset_0$, $\bset_{\mathcal{C}(u)}$ and
    $\bset_{\mathcal{C}(v)}$. We have
%

\begin{align*}
    &\phi((u,a,1))|_{\bset_0} =  \phi_G((u,1)) <
      \phi_G((v,2))=\phi((v,b,2))|_{\bset_0}\nonumber\\
    &\phi((u,a,1))|_{\bset_{\cset(u)}}= \phi_H((a,1)) <  \infty
      = \phi((v,b,2))|_{\bset(\cset(u))}\nonumber\\
    &\phi((u,a,1))|_{\bset_{\cset(v)}}=0 <
           \phi_G((v,2))=\phi((v,b,2))|_{\bset_{\cset(v)}}\nonumber
\end{align*}
The first line is  because of Property~\ref{property: embedding of G} and the fact that $(u,1)(v,2)\in{E(\bipp[G])}$.
%
%
%
%
%
%
This proves the claim.

\paragraph{Case 2: $u \neq v$ and $uv \not\in E(G)$}
    We will show that $\phi((u,a,1)) \not< \phi((v,b,2))$.
    First, $(u,a,1)(v,b,2) \not\in E(\bipp[G \cdot H])$ by construction.
    Consider the block $\bset_0$.
    Because of Property~\ref{property: embedding of G} and
    $(u,1) (v,2) \not\in E(\bipp[G])$, we have
    $\phi((u,a,1))|_{\bset_0} =
     \phi_G((u,1))\not < \phi_G((v,2)) = \phi((v,b,2))|_{\bset_0}$.
     Thus, $\phi((u,a,1)) \not< \phi((v,b,2))$.

\paragraph{Case 3: $u =v$, $a\neq b$ and $ab \in E(H)$}
   We will show that $\phi((u,a,1)) < \phi((v,b,2))$.
   First, $(u,a,1) (v,b,2) \in E(\bipp[G \cdot H])$ by construction.
   Note that $\cset(u)=\cset(v)$ since $u=v$.
   Consider each block. We have

\begin{align}
\phi((u,a,1))|_{\bset_0} 
                 &= \phi_G((u,1))
                 = \phi_G((v,2))
                 = \phi((v,b,2))|_{\bset_0}\\
\label{eq:Case 3 line 2}\phi((u,a,1))|_{\bset_{\cset(u)}} 
                 &= \phi_H((a,1))
                 < \phi_H((b,2))
                 = \phi((v,b,2))|_{\bset_{\cset(v)}}\\
\label{eq:Case 3 line 3}\phi((u,a,1))|_{\bset_{\ell}} 
                 &= 0 \leq \infty
                 = \phi((v,b,2))|_{\bset_{\ell}}
\end{align}

Eq.\eqref{eq:Case 3 line 2} follows because $(a,1) (b,2) \in E(\bipp[H])$ and  Property~\ref{property: embedding of H}, and Eq.\eqref{eq:Case 3 line 3} follows from the settings of other blocks $\bset_{\ell}$. This proves the claim.


\paragraph{Case 4: $u=v$, $a\neq b$ and $ab \not\in E(H)$}
    We will show that $\phi((u,a,1)) \not< \phi((v,b,2))$.
    First, $(u,a,1) (v,b,2) \not\in E(\bipp[G\cdot H])$ by construction.
    Consider the block $\bset_{\mathcal{C}(u)}$.
    By Property~\ref{property: embedding of H},
    $\phi((u,a,1))|_{\bset_{\cset(u)}} = \phi_H((a,1)) \not<
    \phi_H((b,2)) = \phi((v,b,2))|_{\bset_{\cset(v)}}$ , thus proving the claim.


\paragraph{Case 5: $u=v$ and $a=b$}
  We will show that $\phi((u,a,1))< \phi((v,b,2))$.
  First, by the definition of the operator $\bipp$,
  $(u,a,1)(v,b,2) \in E(\bipp[G \cdot H])$.
  Next, consider each block.

\begin{align}
\phi((u,a,1))|_{\bset_0} 
  &= \phi_G((u,1))
  < \phi_G((v,2)) 
  = \phi((v,b,2))|_{\bset_0}\\
\label{eq:Case 5 Line 2}\phi((u,a,1))|_{\bset_{\cset(u)}} 
  &= \phi_H((a,1))
  <  \phi_H((b,2))= \phi((v,b,2))|_{\bset_{\cset(v)}}\\
\label{eq:Case 5 Line 3}\phi((u,a,1))|_{\bset_{\ell}}
  &= 0 
  \leq \infty 
  = \phi((v,b,2))|_{\bset_{\ell}}
\end{align}

Eq.\eqref{eq:Case 5 Line 2} follows from Property~\ref{property: embedding of H}, and Eq.\eqref{eq:Case 5 Line 3} follows from the settings of other blocks $\bset_{\ell}$.
This proves the claim.\qedhere
\end{proof}
\section{Proof of the General Case of Theorem~\ref{theorem:subadditive}}
\label{sec: subadditivity full}

\danupon{I changed the section title and remove bracket from Theorem statement}
In this section, we prove the main theorem. We shall restate the main
theorem here.

\restatethm{Theorem}
{\ref{theorem:subadditive}}{
For any undirected graphs $G$, $H$ and $J$ and a height-two poset $\vec P$,\danupon{Add vector on top}

\begin{align*}
\induce{(G\vee H)\times J} &\leq \induce{G\times J}+\induce{H\times J}\tag{\ref{eq:subadditive induce}}\\
\sinduce{(G\vee H)\times J} &\leq \sinduce{G\times J}+\sinduce{H\times J}\tag{\ref{eq:subadditive semi-induce}}\\
\dimension{(G\cdot H)\hstrong \vec P} &\leq \dimension{G\hstrong \vec P}+\chi(G)\dimension{H\hstrong \vec P}+\dimension{\vec P}\tag{\ref{eq:subadditive dimension}}
\end{align*}

%
where $\chi(G)$ is the chromatic number of $G$.
}

\subsection{Subadditivity of Induced Matching Number (Eq.\eqref{eq:subadditive induce}).}

In this section, we prove that  $\induce{(G\vee H)\times J} \leq \induce{G\times J}+\induce{H\times J}$.
%
We first observe that we can decompose edges of $G\vee H$ into two sets:

\begin{align}\label{eq:decompose0}
E(G\vee H) = \{(u, a)(v, b) \mid \mbox{$uv\in E(G)$ or $ab\in E(H)$}\}
=E_1\cup E_2\
\end{align}

\noindent
where $E_1= \{(u, a)(v, b) \in E(G\vee H): uv\in E(G)\}$ and $E_2=\{(u, a)(v, b) \in E(G\vee H): ab\in E(H)\}$. For any $i\in \{1, 2\}$, define a subgraph $G_i$ of $G\vee H$ to be $G_i=(V(G\vee H), E_i)$. Note that $E((G \vee H) \times J) \subseteq E(G_1 \times J) \cup E(G_2 \times J)$.
\danupon{Why is this a claim?}
%

\begin{claim}\label{claim:decompose1}
$\induce{(G\vee H)\times J}\leq \induce{G_1\times J}+\induce{G_2\times J}.$
\end{claim}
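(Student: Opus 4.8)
\textbf{Proof plan for Claim~\ref{claim:decompose1}.}

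The plan is to reduce the claim to the special-case reasoning already spelled out in Section~\ref{sec: simplified subadditivity}, but now carried out with the third factor $\times J$ in place of $\times K_2$. First I would fix a maximum induced matching $\mset$ in $(G\vee H)\times J$, so that $|\mset| = \induce{(G\vee H)\times J}$. Using the decomposition $E((G\vee H)\times J) \subseteq E(G_1\times J)\cup E(G_2\times J)$ noted just before the claim, I would partition $\mset = \mset_1 \cup \mset_2$ where $\mset_i = \mset \cap E(G_i\times J)$ (breaking ties arbitrarily for edges lying in both sets). This immediately gives $|\mset| \le |\mset_1| + |\mset_2|$, so it suffices to show $|\mset_i| \le \induce{G_i\times J}$ for $i\in\{1,2\}$.

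The core step is to show that $\mset_1$, viewed appropriately, is an induced matching of $G_1\times J$. Here one uses the crucial structural observation that $G_1 = (V(G\vee H), E_1)$ is itself essentially a ``blow-up'': its vertex set is $V(G)\times V(H)$, and $(u,a)(v,b)\in E_1$ depends only on whether $uv\in E(G)$, not on $a,b$. So $G_1$ is exactly the graph obtained from $G$ by replacing each vertex $u$ by an independent set $\{(u,a): a\in V(H)\}$ and joining all pairs across adjacent sets --- in product notation $G_1 = G\vee \overline{H}$ where $\overline{H}$ is the edgeless graph on $V(H)$; and crucially $G\vee\overline{H}$ has the same adjacency pattern ``between super-vertices'' as $G$. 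Grouping vertices of $(G\vee H)\times J$ into super-vertices $V^u = \{(u,a,t): a\in V(H),\ t\in V(J)\}$ over each $u\in V(G)$, the same two facts proved in Lemmas~\ref{lem:induce special 1} and~\ref{lem:induce special 2} go through verbatim: (i) each $V^u$ contains an endpoint of at most one edge of $\mset_1$ (otherwise an edge of $E_1$ between the two would contradict $\mset$ being induced, since adjacency in $E_1$ ignores the $H$- and $J$-coordinates of one endpoint --- wait, one must be careful, it ignores only the $H$-coordinate, so I would instead group by super-vertices of $G\times J$, i.e.\ $V^{(u,t)}=\{(u,a,t):a\in V(H)\}$), and (ii) no ``crossing'' edge spoils the induced property. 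Thus contracting each such super-vertex $V^{(u,t)}$ to the single vertex $(u,t)$ of $G\times J$ turns $\mset_1$ into an induced matching of $G_1\times J$ of the same cardinality, giving $|\mset_1|\le \induce{G_1\times J}$; the argument for $\mset_2$ is symmetric, grouping by super-vertices over $V(H)\times V(J)$ and using that $E_2$ ignores the $G$-coordinate.

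The main obstacle I anticipate is bookkeeping the super-vertex grouping correctly so that the ``all-or-nothing adjacency between super-vertices'' property actually holds --- for $\mset_1$ one must group by the $H$-coordinate only (keeping $u$ and the $J$-coordinate as part of the label), since membership of $(u,a,s)(v,b,t)$ in $E(G_1\times J)$ requires $uv\in E(G)$ \emph{and} $st\in E(J)$, so the $J$-coordinate is not free to vary. Once the grouping is set up as ``collapse the $H$-coordinate,'' the two lemmas transfer with only cosmetic changes, and the claim follows. (The alternative, cleaner route the authors hint at --- decomposing into products of well-structured graphs and invoking associativity of the products --- would replace this hands-on argument, but the super-vertex version above is self-contained given Section~\ref{sec: simplified subadditivity}.)
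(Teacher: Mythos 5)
Your opening step coincides with the paper's: partition $\mset$ into $\mset_1=\mset\cap E(G_1\times J)$ and $\mset_2=\mset\cap E(G_2\times J)$ and reduce to $|\mset_i|\le\induce{G_i\times J}$. But at that point Claim~\ref{claim:decompose1} is already finished by a one-line observation, which is exactly the paper's proof: $G_i\times J$ is a spanning subgraph of $(G\vee H)\times J$ (since $E_i\subseteq E(G\vee H)$), $\mset_i\subseteq E(G_i\times J)$ by definition, and any edge of $G_i\times J$ joining two edges of $\mset_i$ would also be an edge of the supergraph, contradicting that $\mset$ is induced there. So $\mset_i$ is, as it stands and without any contraction, an induced matching of $G_i\times J$. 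The comparison you are really carrying out with the super-vertex argument --- bounding against $\induce{G\times J}$ rather than $\induce{G_1\times J}$ --- is the content of the paper's subsequent steps (Lemma~\ref{lemma: G_1 G_2 as products of complete graphs}, associativity in Lemma~\ref{lemma: associativity of products}, and Lemma~\ref{lem:induce multiply with complete graph}), not of this claim.

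Because of that, your punchline has a mismatch: contracting the super-vertices $V^{(u,t)}=\{(u,a,t):a\in V(H)\}$ collapses $G_1\times J$ onto $G\times J$, so your argument (which is otherwise sound once you group by the $H$-coordinate only, as you eventually decide to do) yields $|\mset_1|\le\induce{G\times J}$; it does not, as written, yield $|\mset_1|\le\induce{G_1\times J}$, which you assert without a bridge. The claim survives either via the trivial subgraph observation above, or by adding the easy but unstated inequality $\induce{G\times J}\le\induce{G_1\times J}$, obtained from the induced copy of $G\times J$ on $\{(u,a_0,t): u\in V(G),\ t\in V(J)\}$ for a fixed $a_0\in V(H)$. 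Modulo that relabeling, your route is a legitimate alternative: it generalizes the super-vertex proof of Section~\ref{sec: simplified subadditivity} (Lemmas~\ref{lem:induce special 1} and~\ref{lem:induce special 2}) from $K_2$ to arbitrary $J$, and, carried out correctly, it proves all of Eq.~\eqref{eq:subadditive induce} in one pass. This is precisely the ``hands-on'' alternative the paper acknowledges and then avoids in Section~\ref{sec: subadditivity full}, where the decomposition-plus-associativity argument buys short, purely definitional verifications at the cost of introducing the auxiliary products $K_H\hstrong G$ and $K_G\hstrong H$; your blow-up view ($G_1=G\vee\overline{H}$ with $\overline{H}$ edgeless) is the same structure seen from the other side.
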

\begin{proof}
Let $\mset$ be any induced matching in the graph $(G\vee H) \times J$. Let $\mset_1=\mset\cap E(G_1 \times J)$ and $\mset_2=\mset\cap E(G_2 \times J)$. By Eq.\eqref{eq:decompose0}, $\mset=\mset_1\cup \mset_2$.  Observe that $\mset_1$ and $\mset_2$ are induced matchings of $G_1$ and $G_2$, respectively, since they are induced matchings of $(G\vee H)\times J$ which is a super graph of $G_1 \times J$ and $G_2 \times J$. The claim follows.
%

\end{proof}

Now, we try to write $G_1$ and $G_2$ as a product of two other graphs. For any set $X$ of vertices, we denote by $K_X$ a complete graph whose vertex set is $V(X)$.

\begin{lemma}
\label{lemma: G_1 G_2 as products of complete graphs}
$\induce{G_1 \times J}=\induce{(K_H\hstrong G)\times J}$ and $\induce{G_2 \times J}=\induce{(K_G \hstrong H)\times J}$.
\end{lemma}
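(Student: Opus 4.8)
The plan is to show that $G_1$ and $G_2$ are literally isomorphic to $K_H \hstrong G$ and $K_G \hstrong H$ respectively (up to the obvious relabeling of the vertex set $V(G)\times V(H)$), from which the claimed equalities of induced matching numbers follow trivially since isomorphic graphs have the same induced matching number. So the real content is an edge-set comparison, done purely from the definitions of $\vee$ and $\hstrong$.

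\medskip\noindent\textbf{Step 1: Identify the vertex sets.} All four graphs $G_1$, $G_2$, $K_H \hstrong G$, $K_G \hstrong H$ have vertex set that can be identified with $V(G)\times V(H)$. For $G_1$ and $G_2$ this is by definition (they are subgraphs of $G\vee H$, whose vertex set is $V(G)\times V(H)$). For $K_H \hstrong G$, the vertex set is $V(K_H)\times V(G) = V(H)\times V(G)$, which I identify with $V(G)\times V(H)$ via the swap $(a,u)\mapsto (u,a)$; similarly $K_G \hstrong H$ has vertex set $V(G)\times V(H)$ directly. I will fix these identifications once and for all.

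\medskip\noindent\textbf{Step 2: Match the edge sets.} By definition, $(u,a)(v,b)\in E(G_1)$ iff $uv\in E(G)$ (note: this requires $u\neq v$ since $G$ is a simple undirected graph, so there is no self-loop — but also the pair must be a genuine edge of $G\vee H$, which it is whenever $uv\in E(G)$). Now unwind $E(K_H \hstrong G)$: by the definition of $\hstrong$, $(a,u)(b,v)\in E(K_H\hstrong G)$ iff $\bigl(ab\in E(K_H)$ or $a=b\bigr)$ and $uv\in E(G)$. Since $K_H$ is complete, $ab\in E(K_H)$ or $a=b$ holds for \emph{every} pair $a,b\in V(H)$; hence the condition reduces to exactly $uv\in E(G)$. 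After applying the swap identification, this is precisely the edge relation of $G_1$. Therefore $G_1 \cong K_H \hstrong G$. The argument for $G_2 \cong K_G \hstrong H$ is symmetric: $(u,a)(v,b)\in E(G_2)$ iff $ab\in E(H)$, and $(u,a)(v,b)\in E(K_G\hstrong H)$ iff $\bigl(uv\in E(K_G)$ or $u=v\bigr)$ and $ab\in E(H)$, and the first conjunct is always true since $K_G$ is complete, giving exactly $ab\in E(H)$.

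\medskip\noindent\textbf{Step 3: Conclude.} An isomorphism of undirected graphs preserves matchings and the "no edge joins two matching edges" condition, hence preserves induced matchings and their sizes. Taking the product with $J$ (the tensor product $\times$) on both sides, the isomorphisms $G_1\cong K_H\hstrong G$ and $G_2\cong K_G\hstrong H$ extend to isomorphisms $G_1\times J \cong (K_H\hstrong G)\times J$ and $G_2\times J \cong (K_G\hstrong H)\times J$ (one should note that $\times$ is functorial: an isomorphism $A\cong A'$ induces an isomorphism $A\times J\cong A'\times J$, which is immediate from the definition of $\times$). Therefore $\induce{G_1\times J}=\induce{(K_H\hstrong G)\times J}$ and $\induce{G_2\times J}=\induce{(K_G\hstrong H)\times J}$, as claimed.

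\medskip I do not expect any real obstacle here; the only mild subtlety is bookkeeping the swap $V(H)\times V(G)\leftrightarrow V(G)\times V(H)$ and being careful that "$uv\in E(G)$" implicitly carries $u\neq v$ consistently on both sides, but since $K_H$ (resp. $K_G$) being complete makes the "or $a=b$" (resp. "or $u=v$") clause vacuously disjoined away, nothing depends on whether we read the diagonal in or out. The heart of the matter is just the observation that tensoring with a complete graph in one coordinate of $\hstrong$ "absorbs" the extended-tensor's diagonal clause, turning $\hstrong$ into a plain projection onto the other coordinate's edge relation.
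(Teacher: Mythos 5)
Your proposal is correct and matches the paper's own argument: the paper likewise observes that $K_G\hstrong H$ coincides exactly with $G_2$ and that $K_H\hstrong G$ is isomorphic to $G_1$ up to swapping the two coordinates, because the completeness of $K_G$ (resp.\ $K_H$) makes the ``or equal'' clause of $\hstrong$ vacuous. Your extra step spelling out that the isomorphism carries over to the product with $J$ and preserves $\induce{\cdot}$ is just making explicit what the paper leaves implicit.
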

\begin{proof}
The lemma simply follows from the fact that $K_G\hstrong H$ is exactly the same as $G_2$ and $K_H\hstrong G$ is isomorphic to $G_1$. To see this, we simply observe that $E(K_G\hstrong H)=\{(u, a)(v, b): ab\in E(H)\ \wedge\ u, v\in V(G)\}$ which is exactly the same as $E_2$, and $E(K_H\hstrong G)=\{(a, u)(b, v) : uv\in E(G)\ \wedge\ a, b\in V(H)\}$ which is almost the same as $E_1$ except that vertices are in $V(H)\times V(G)$ instead of $V(G)\times V(H)$.

\end{proof}

The simple lemma above allows us to rewrite the equation in Claim~\ref{claim:decompose1} as
%
\begin{align}\label{eq:decompose1-2}
\induce{(G\vee H)\times J} \leq \induce{(K_H\hstrong G)\times J} + \induce{(K_G\hstrong H)\times J}
\end{align}

%
Now we need the following {\em associativity} property.
\begin{lemma}
\label{lemma: associativity of products}
For any graphs $X$, $Y$ and $Z$, $(X\hstrong Y)\times Z= X\hstrong(Y\times Z)$.
\end{lemma}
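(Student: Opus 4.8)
The plan is to prove the identity by showing that both sides have the same vertex set and the same edge set. For the vertex sets, note that $V((X\hstrong Y)\times Z) = (V(X)\times V(Y))\times V(Z)$ while $V(X\hstrong(Y\times Z)) = V(X)\times(V(Y)\times V(Z))$; these are naturally identified via the canonical bijection $((x,y),z)\mapsto (x,(y,z))$, so I will simply write a generic vertex as $(x,y,z)$ with $x\in V(X)$, $y\in V(Y)$, $z\in V(Z)$ and regard both products as graphs on this common vertex set. Everything then reduces to checking that, for two vertices $(x,y,z)$ and $(x',y',z')$, the pair is an edge of the left-hand side if and only if it is an edge of the right-hand side.

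The key step is to unfold both edge predicates using the definitions of $\hstrong$ and $\times$ from Section~\ref{sec:prelim}. On the left, $(x,y,z)(x',y',z')\in E((X\hstrong Y)\times Z)$ iff $(x,y)(x',y')\in E(X\hstrong Y)$ \emph{and} $zz'\in E(Z)$; and $(x,y)(x',y')\in E(X\hstrong Y)$ iff $\bigl(xx'\in E(X)$ \emph{or} $x=x'\bigr)$ \emph{and} $yy'\in E(Y)$. So the left-hand predicate is
\[
\bigl(xx'\in E(X)\ \text{or}\ x=x'\bigr)\ \wedge\ yy'\in E(Y)\ \wedge\ zz'\in E(Z).
\]
On the right, $(x,y,z)(x',y',z')\in E(X\hstrong(Y\times Z))$ iff $\bigl(xx'\in E(X)$ \emph{or} $x=x'\bigr)$ \emph{and} $(y,z)(y',z')\in E(Y\times Z)$; and $(y,z)(y',z')\in E(Y\times Z)$ iff $yy'\in E(Y)$ \emph{and} $zz'\in E(Z)$. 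So the right-hand predicate is the identical conjunction
\[
\bigl(xx'\in E(X)\ \text{or}\ x=x'\bigr)\ \wedge\ yy'\in E(Y)\ \wedge\ zz'\in E(Z).
\]
Since the two predicates coincide, the edge sets are equal, and the lemma follows.

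I do not expect any real obstacle here; the only point requiring a word of care is the identification of the two bracketings of the vertex set, i.e.\ making explicit that we treat $((x,y),z)$ and $(x,(y,z))$ as the same object so that "$=$" between the two graphs is literal equality rather than isomorphism. Once that convention is fixed, the proof is a one-line comparison of Boolean formulas, and the associativity of "$\wedge$" together with the fact that the "$x=x'$ or $xx'\in E(X)$" clause is untouched by either grouping does all the work. (If one preferred, the same computation shows more generally that $\hstrong$ and $\times$ associate because each is, at the level of edge predicates, a conjunction of a per-coordinate condition, with $\hstrong$ using the reflexive closure of adjacency in the first coordinate.)
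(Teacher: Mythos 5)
Your proposal is correct and follows essentially the same route as the paper: both unfold the edge predicates of $(X\hstrong Y)\times Z$ and $X\hstrong(Y\times Z)$ via Definition~\ref{def:product} and observe that they reduce to the identical conjunction ($xx'\in E(X)$ or $x=x'$) and $yy'\in E(Y)$ and $zz'\in E(Z)$. Your explicit remark about identifying the two bracketings of the vertex set is a point the paper leaves implicit, but it does not change the argument.
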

\begin{proof} The following equalities simply follow from the definition of $\times$ and $\hstrong$ (cf. Definition~\ref{def:product}).

\begin{align*}
E((X\hstrong Y)\times Z)
&= \{(x, y, z)(x', y', z'): (x,y)(x',y')\in E(X\hstrong Y) 
 \ \mbox{   and } zz'\in E(Z)\}\\
 &=\{(x, y, z)(x', y', z'): \mbox{($xx'\in E(X)$ or $x=x'$)}
 \ \mbox{   and $yy'\in E(Y)$ and $zz'\in E(Z)$}\}\\
&= X\hstrong(Y\times Z)
\end{align*}

\end{proof}
This allows us to rewrite Eq.\eqref{eq:decompose1-2} as
%
\begin{align}\label{eq:decompose2}
\induce{(G\vee H)\times J} \leq \induce{K_H\hstrong (G\times J)} +\induce{K_G\hstrong (H\times J)}
\end{align}

We finish our proof with the following lemma which says that the product of any graph $X$ with a complete graph $K_L$ will not increase the induced matching number of $X$. We note that in fact the equality could be achieved, but since it is not important to this proof, we only show the upper bound.

\begin{lemma}\label{lem:induce multiply with complete graph}
For any graph $X$ and any set $L$ of vertices, $\induce{K_L\hstrong X}\leq \induce{X}$.
\end{lemma}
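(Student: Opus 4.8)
The plan is to show that any induced matching $\mset$ in $K_L \hstrong X$ projects, via the second coordinate, onto an induced matching of $X$ of the same size. Recall that by definition $E(K_L \hstrong X) = \{(\ell, x)(\ell', x') : (\ell\ell' \in E(K_L) \text{ or } \ell = \ell') \text{ and } xx' \in E(X)\}$, and since $K_L$ is complete, the condition ``$\ell\ell' \in E(K_L)$ or $\ell = \ell'$'' is always satisfied. Hence $(\ell, x)(\ell', x') \in E(K_L \hstrong X)$ \emph{if and only if} $xx' \in E(X)$. In particular, the adjacency of a pair of vertices in $K_L \hstrong X$ depends only on their $X$-coordinates.

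So let $\mset = \{(\ell_1, x_1)(\ell_1', x_1'), \ldots, (\ell_q, x_q)(\ell_q', x_q')\}$ be a maximum induced matching in $K_L \hstrong X$, of size $q = \induce{K_L \hstrong X}$. First I would argue that the projected edges $\mset' = \{x_i x_i' : 1 \le i \le q\}$ are $q$ distinct edges of $X$, forming a matching. Each $x_i x_i' \in E(X)$ by the characterization above. For distinctness and the matching property: if two projected edges shared an endpoint in $X$, say $x_i = x_j$ (or any of the four coincidence patterns) for $i \ne j$, then since $(\ell_i, x_i)(\ell_j', x_j') \in E(K_L \hstrong X)$ iff $x_i x_j' \in E(X)$, and $\mset$ is a matching in $K_L \hstrong X$ while $(\ell_i,x_i)$ and $(\ell_j,x_j)$ are distinct vertices, one checks that the two $\mset$-edges would be joined by an edge of $K_L \hstrong X$ (or coincide), contradicting that $\mset$ is an induced matching. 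Concretely, if $x_i = x_j$ then $x_i x_j' = x_j x_j' \in E(X)$, so $(\ell_i, x_i)(\ell_j', x_j') \in E(K_L \hstrong X)$, which is an edge joining the two matching edges — contradiction; the other patterns are symmetric. This also shows the $q$ edges are distinct, so $|\mset'| = q$.

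Next I would verify $\mset'$ is \emph{induced} in $X$: for $i \ne j$, none of $x_i x_j, x_i x_j', x_i' x_j, x_i' x_j'$ lies in $E(X)$. Suppose, say, $x_i x_j' \in E(X)$. Then $(\ell_i, x_i)(\ell_j', x_j') \in E(K_L \hstrong X)$ by the characterization, contradicting that $\mset$ is an induced matching (this edge joins two edges of $\mset$). The other three cases are identical. Therefore $\mset'$ is an induced matching of $X$ of size $q$, giving $\induce{X} \ge q = \induce{K_L \hstrong X}$, which is the claim.

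The main obstacle — really the only subtlety — is the bookkeeping in the ``matching'' step: one must be careful that when two projected edges coincide as unordered pairs in $X$, or share one endpoint, the corresponding four vertices in $K_L \hstrong X$ need not be distinct (e.g. $(\ell_i, x_i)$ could equal $(\ell_j, x_j)$ if also $\ell_i = \ell_j$), but in that degenerate case the two edges of $\mset$ share a vertex, contradicting that $\mset$ is a matching; and if the four vertices \emph{are} distinct, the argument above produces a joining edge. I expect this case analysis to be short once one observes the clean ``adjacency depends only on the $X$-coordinate'' principle, which I would state explicitly at the start of the proof.
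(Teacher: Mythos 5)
Your proposal is correct and follows essentially the same route as the paper: project the induced matching of $K_L\hstrong X$ onto its $X$-coordinates and use the fact that, because $K_L$ is complete, adjacency in $K_L\hstrong X$ is equivalent to adjacency of the $X$-coordinates, so any violation of the matching or induced property in the projection lifts back to an edge joining two edges of the original matching. The paper phrases this via two one-directional implications and a short case analysis by contradiction, while you state the clean ``iff'' characterization up front, but the argument is the same.
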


\begin{proof}
Let $\mset$ be any induced matching in $K_L \hstrong X$. We construct a set of edges $\mset_X \subseteq E(X)$ by projection: for each edge $(i,x)(j,y) \in E(K_L \hstrong X)$, we add an edge $xy$ to $\mset_X$. To prove the lemma, it suffices to show that $\mset_X$ is an induced matching in graph $X$.

Assume for contrary that $\mset_X$ is not an induced matching, i.e., there exists $xy, x'y'\in \mset_X$ such that either (1) $x=x'$ (making $\mset_X$ not a matching) or (2) $xx'\in E(X)$ (making $\mset_X$ not an induced matching). (We note that (1) also includes the case where multiple edges are created, i.e., $x=x'$ and $y'=y'$.)
We shall use the following simple facts.

\begin{align}
xy\in \mset_X &\implies \exists i, j: (i, x)(j, y)\in \mset\label{eq:fact2}\\
xy\in E(X) &\implies \forall i, j: (i, x)(j, y)\in E(K_L\hstrong X)\label{eq:fact1}
\end{align}

By Eq.\eqref{eq:fact2}, the assumption that $xy, x'y'\in \mset_X$ implies that edges $(i,x)(j,y)$ and $(i',x')(j',y')$ belong to $\mset$ for some $i, j, i', j'\in V(K_L)$.

\noindent{\bf Case 1:} If $x=x'$, then we have that $xy'$ is also in $\mset_X\subseteq E(X)$ and thus Eq.\eqref{eq:fact1} implies that $(i, x)(j', y')\in E(K_L\hstrong X)$. This contradicts the fact that $\mset$ is an induced matching.

\noindent{\bf Case 2:} If $xx'\in E(X)$, then Eq.\eqref{eq:fact1} implies that $(i, x)(i', x')\in E(K_L\hstrong X)$ which again contradicts the fact that $\mset$ is an induced matching.

\end{proof}

Using Lemma~\ref{lem:induce multiply with complete graph}, we can rewrite Eq.\eqref{eq:decompose2} as
%
$\induce{(G\vee H)\times J}\leq \induce{G\times J}+\induce{H\times J}$
%
as desired.

\subsection{Subadditivity of Semi-induced Matching Number (Eq.\eqref{eq:subadditive semi-induce}).}

In this section, we prove the subadditivity property of the
semi-induced matching number.
The proof closely follows the case of the induced matching number.

%
We prove the following subadditivity theorem for semi-induced matching which is equivalent to Eq.\eqref{eq:subadditive semi-induce}.\danupon{Later: Everything that uses this theorem should try to use Eq.~\eqref{eq:subadditive semi-induce} instead.}

\begin{theorem}
\label{thm: subadditivity for semi-induced matching}
For any graphs $G$ and $H$ and any total order $\sigma$ on $V((G \vee H) \times J) $, there exist bijections $\sigma_1$ on $V(G \times J)$ and $\sigma_2$ on $V(H \times J)$ such that
\[\sinducesigma{(G \vee H) \times J }{\sigma}  \leq \sinducesigma{G \times J}{\sigma_1} +
\sinducesigma{H \times J}{\sigma_2}\,.\]
\end{theorem}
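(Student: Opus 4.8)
The plan is to mirror, step by step, the proof of the induced-matching subadditivity (Eq.~\eqref{eq:subadditive induce}), carrying the total order along at each stage. Fix a total order $\sigma$ on $V((G\vee H)\times J)$ and let $\mset$ be a maximum $\sigma$-semi-induced matching in $(G\vee H)\times J$. First I would decompose $E(G\vee H)=E_1\cup E_2$ exactly as in Eq.~\eqref{eq:decompose0}, put $G_i=(V(G\vee H),E_i)$, and split $\mset=\mset_1\cup\mset_2$ with $\mset_i=\mset\cap E(G_i\times J)$. Since the $\sigma$-semi-induced condition only requires the \emph{absence} of certain edges, and $G_i\times J$ is a subgraph of $(G\vee H)\times J$ on the \emph{same} vertex set, each $\mset_i$ is a $\sigma$-semi-induced matching of $G_i\times J$; hence $\sinducesigma{(G\vee H)\times J}{\sigma}=|\mset|\le|\mset_1|+|\mset_2|\le\sinducesigma{G_1\times J}{\sigma}+\sinducesigma{G_2\times J}{\sigma}$.

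Next I would reuse the structural identifications behind Lemma~\ref{lemma: G_1 G_2 as products of complete graphs}: $G_2=K_G\hstrong H$ literally, while the coordinate swap $(u,a)\mapsto(a,u)$ is an isomorphism $G_1\to K_H\hstrong G$, which extends to an isomorphism $\psi\colon G_1\times J\to(K_H\hstrong G)\times J$. Transporting the order through $\psi$ gives $\sinducesigma{G_1\times J}{\sigma}=\sinducesigma{(K_H\hstrong G)\times J}{\sigma\circ\psi^{-1}}$, and by the associativity Lemma~\ref{lemma: associativity of products} the underlying graph is $K_H\hstrong(G\times J)$, so this equals $\sinducesigma{K_H\hstrong(G\times J)}{\sigma\circ\psi^{-1}}$; similarly $\sinducesigma{G_2\times J}{\sigma}=\sinducesigma{K_G\hstrong(H\times J)}{\sigma}$.

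The one genuinely new ingredient is the semi-induced analogue of Lemma~\ref{lem:induce multiply with complete graph}: for every graph $X$, vertex set $L$, and total order $\tau$ on $V(K_L\hstrong X)$ there is a total order $\tau'$ on $V(X)$ with $\sinducesigma{K_L\hstrong X}{\tau}\le\sinducesigma{X}{\tau'}$. I would prove this by projection. Take a maximum $\tau$-semi-induced matching and list its edges $e_p=(i_p,x_p)(j_p,y_p)$ with $\tau((i_p,x_p))<\tau((j_p,y_p))$ and $\tau((i_1,x_1))<\dots<\tau((i_q,x_q))$; project $e_p$ to $x_py_p\in E(X)$; and define $\tau'$ to rank $x_1,\dots,x_q$ first in this order, then $y_1,\dots,y_q$, then the remaining vertices of $X$ arbitrarily. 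The crucial elementary fact is $xy\in E(X)\Rightarrow(i,x)(j,y)\in E(K_L\hstrong X)$ for all $i,j\in L$, since $K_L$ is complete. A short case check then shows: (a) the projected edges are pairwise vertex-disjoint — if two of them shared an $X$-vertex, the corresponding ``forbidden'' edge of the $\tau$-semi-induced pair $\{e_p,e_{p'}\}$ ($p<p'$) would be present in $K_L\hstrong X$; and (b) for $p<p'$ neither $x_px_{p'}$ nor $x_py_{p'}$ lies in $E(X)$, again by the same forbidden-edge argument. Thus the projection is a $\tau'$-semi-induced matching of size $q$.

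Finally, applying this lemma with $(X,L,\tau)=(G\times J,\,V(H),\,\sigma\circ\psi^{-1})$ and with $(X,L,\tau)=(H\times J,\,V(G),\,\sigma)$ produces bijections $\sigma_1$ on $V(G\times J)$ and $\sigma_2$ on $V(H\times J)$, and chaining all the displayed inequalities yields $\sinducesigma{(G\vee H)\times J}{\sigma}\le\sinducesigma{G\times J}{\sigma_1}+\sinducesigma{H\times J}{\sigma_2}$, which is the claim. The main obstacle is precisely the new projection lemma: unlike the induced case, the semi-induced condition is asymmetric, so the case analysis in (a) and (b) must consistently track which endpoint of each $e_p$ is $\tau$-smaller. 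Once one adopts the convention that $x_p$ always denotes the $X$-coordinate of the $\tau$-smaller endpoint of $e_p$ and orders the edges by these smaller endpoints, every case collapses to the completeness of $K_L$ exactly as sketched.
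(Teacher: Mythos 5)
Your proposal is correct and follows essentially the same route as the paper: the same decomposition of $E(G\vee H)$ into $E_1\cup E_2$, the same identification of $G_1,G_2$ with $K_H\hstrong G$ and $K_G\hstrong H$ followed by associativity, and the same projection lemma bounding $\sinducesigma{K_L\hstrong X}{\tau}$ by $\sinducesigma{X}{\tau'}$. The only difference is cosmetic: where the paper defines $\tau'$ by letting each matched vertex of $X$ inherit the $\tau$-value of its representative in the matching and leaves the verification implicit, you rank the projected $\tau$-smaller endpoints first and carry out the case analysis explicitly, which is an equally valid (and somewhat more detailed) way to finish the same lemma.
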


The rest of this subsection is devoted to proving the above theorem. We first decompose edge set $E(G \vee H)$ into $E_1 \cup E_2$ where $E_1= \{(u, a)(v, b) : uv\in E(G)\ \wedge\ a,b\in V(H)\}$ and $E_2=\{(u, a)(v, b) : ab\in E(H)\ \wedge\ u,v\in V(G)\}$. For any $i\in \{1, 2\}$, define a subgraph $G_i$ of $G\vee H$ to be $G_i=(V(G\vee H), E_i)$.

\begin{claim}
For any bijection $\sigma: V((G \vee H) \times J) \rightarrow [|V((G \vee H) \times J)|]$, $\sinducesigma{(G \vee H) \times J}{\sigma} \leq \sinducesigma{G_1 \times J}{\sigma} +\sinducesigma{G_2 \times J}{\sigma} $.
\end{claim}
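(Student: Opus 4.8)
The plan is to mirror, essentially verbatim, the argument used for the induced-matching version in Claim~\ref{claim:decompose1}, adapting it to the semi-induced setting. Fix a bijection $\sigma$ on $V((G \vee H) \times J)$ and let $\mset$ be a maximum $\sigma$-semi-induced matching in $(G \vee H) \times J$, so $|\mset| = \sinducesigma{(G\vee H)\times J}{\sigma}$. Since $E(G\vee H) = E_1 \cup E_2$ and hence $E((G\vee H)\times J) \subseteq E(G_1 \times J) \cup E(G_2 \times J)$, I would partition $\mset$ into $\mset_1 = \mset \cap E(G_1\times J)$ and $\mset_2 = \mset \cap E(G_2 \times J)$ (breaking ties arbitrarily for edges lying in both sets), so that $|\mset| \le |\mset_1| + |\mset_2|$.

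First I would verify that each $\mset_i$ is a $\sigma$-semi-induced matching in $G_i \times J$ (with the \emph{same} order $\sigma$ restricted to $V(G_i \times J) = V((G\vee H)\times J)$). This is where the argument is slightly more delicate than in the induced-matching case, so I expect it to be the main obstacle: the definition of a $\sigma$-semi-induced matching (Definition~\ref{def:semi induced number}) is a condition about the \emph{absence} of certain edges, and passing to a subgraph only removes edges, so the condition can only become easier to satisfy. Concretely, $\mset_i \subseteq \mset$ is still a matching; and for any two edges $uu', vv' \in \mset_i$ with $\sigma(u) < \sigma(u')$, $\sigma(u) < \sigma(v) < \sigma(v')$, the edges $uv'$ and $uv$ are absent from $E((G\vee H)\times J) \supseteq E(G_i \times J)$ because $\mset$ is $\sigma$-semi-induced in $(G\vee H)\times J$; hence they are absent from $E(G_i \times J)$ as well. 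So $\mset_i$ is $\sigma$-semi-induced in $G_i \times J$, giving $|\mset_i| \le \sinducesigma{G_i \times J}{\sigma}$.

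Combining, $\sinducesigma{(G\vee H)\times J}{\sigma} = |\mset| \le |\mset_1| + |\mset_2| \le \sinducesigma{G_1\times J}{\sigma} + \sinducesigma{G_2\times J}{\sigma}$, which is exactly the claim. Note that no new order needs to be constructed here — the same $\sigma$ works for both pieces — so the claim is purely the decomposition step; the later parts of the proof of Theorem~\ref{thm: subadditivity for semi-induced matching} (rewriting $G_1$ and $G_2$ as $K_H \hstrong G$ and $K_G \hstrong H$, invoking associativity as in Lemma~\ref{lemma: associativity of products}, and showing that multiplying by a complete graph does not increase the semi-induced matching number, the analogue of Lemma~\ref{lem:induce multiply with complete graph}) are what will produce the bijections $\sigma_1$ on $V(G\times J)$ and $\sigma_2$ on $V(H\times J)$ of the theorem statement.
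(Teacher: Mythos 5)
Your proposal is correct and follows essentially the same route as the paper's own (much terser) proof: decompose $\mset$ via $\mset_i=\mset\cap E(G_i\times J)$, observe that the semi-induced condition only forbids edges and hence survives passage to the subgraphs $G_i\times J$ under the same order $\sigma$, and add up. The extra justification you give for why each $\mset_i$ remains $\sigma$-semi-induced is exactly the point the paper leaves implicit, so there is no gap.
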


\begin{proof}
Let $\mset$ be any $\sigma$-semi-induced matching in $(G \vee H) \times J$. Let $\mset_1 = \mset \cap E(G_1 \times J)$ and $\mset_2 = \mset \cap E(G_2 \times J)$. It is clear that $\mset = \mset_1 \cup \mset_2$, and $\mset_1$ and $\mset_2$ are $\sigma$-semi-induced matchings.

\end{proof}

Next, we write $G_1$ and $G_2$ as $G_1 = G\hstrong  K_H$ and $G_2 = K_G \hstrong  H$\danupon{Later: Should be $G_1$ is isomorphic to ..., right? I think we should just state the lemma and sketch the proof.} as in Lemma~\ref{lemma: G_1 G_2 as products of complete graphs}. So, we have that $\sinducesigma{G_1 \times J}{\sigma} = \sinducesigma{(K_H \hstrong G) \times J}{\sigma'}$, for some $\sigma'$, and that $\sinducesigma{G_2 \times J}{\sigma} = \sinducesigma{(K_G \hstrong H) \times J}{\sigma}$ (we can use $\sigma$ in the second equality since $G_2\times J = K_G\hstrong H$, but we need a different mapping $\sigma'$ in the first equality since $G_1\times J$ is only isomorphic to $K_H\hstrong G$). Then, by applying associativity in Lemma~\ref{lemma: associativity of products}, we have that
\begin{align}\label{eq:semi induced 1}\sinducesigma{(G \vee H) \times J}{\sigma} \leq \sinducesigma{K_H \hstrong (G \times J)}{\sigma'}  +\sinducesigma{K_G \hstrong (H \times J)}{\sigma}.
\end{align}
The following lemma will finish the proof.
\begin{lemma}
For any graph $X$, set $L$, and a total order $\tau$ on $V(K_L \times X)$, there exists a total order $\tau'$ on $V(X)$ such that $\sinducesigma{K_L \hstrong X}{\tau} \leq  \sinducesigma{X}{\tau'}$.
\end{lemma}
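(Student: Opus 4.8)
The plan is to mimic the proof of Lemma~\ref{lem:induce multiply with complete graph}, using the same projection idea but keeping careful track of the order. I would first fix a $\tau$-semi-induced matching $\mset$ in $K_L\hstrong X$, and project it down to a set of edges $\mset_X\subseteq E(X)$ by mapping each edge $(i,x)(j,y)\in\mset$ to $xy$. As in the induced-matching case, I would argue that the projection is injective on edges and yields a matching: if two edges of $\mset$ projected to edges sharing a vertex $x$, then (using the fact that $xx\in$ ``loop'' in $K_L\hstrong X$ when edges lie over the same $X$-vertex, since $\hstrong$ has the ``$u=v$'' clause) there would be an edge of $K_L\hstrong X$ joining the two $\mset$-edges, contradicting that $\mset$ is a matching / semi-induced. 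Likewise if $xx'\in E(X)$ for projected endpoints $x,x'$, then every lift $(i,x)(i',x')$ is an edge of $K_L\hstrong X$, so the ``forbidden edge'' condition of a semi-induced matching is violated. This shows $|\mset_X|=|\mset|$ and that $\mset_X$ is at least a matching in $X$.

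The new ingredient relative to the induced case is constructing the order $\tau'$ on $V(X)$ so that $\mset_X$ is actually $\tau'$-semi-induced. The natural choice is to push forward $\tau$: for each $x\in V(X)$ look at the positions $\{\tau((i,x)):i\in L\}$ and collapse them to a single value, e.g. define $\tau'$ so that $x$ precedes $y$ in $\tau'$ iff $\min_{i}\tau((i,x)) < \min_{i}\tau((i,y))$ (breaking ties arbitrarily to get a genuine bijection onto $[|V(X)|]$). I would then verify the semi-induced condition directly: take $xy,x'y'\in\mset_X$ with $\tau'(x)<\tau'(y)$ and $\tau'(x)<\tau'(x')<\tau'(y')$, and show $xx'\notin E(X)$ and $xy'\notin E(X)$. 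Lifting $xy$ and $x'y'$ back to edges $(i,x)(j,y),(i',x')(j',y')\in\mset$, the fact $xx'\in E(X)$ would (by the ``forall $i,i'$'' observation $xx'\in E(X)\Rightarrow (i,x)(i',x')\in E(K_L\hstrong X)$) contradict that $\mset$ is semi-induced with respect to $\tau$ — provided the order of $(i,x),(j,y),(i',x'),(j',y')$ under $\tau$ is consistent with the hypothesis on $\tau'$. This is exactly where the choice of $\tau'$ via $\min$ must be matched with which representatives $i,i'$ we lift to: I would choose, for each $x$ occurring in $\mset_X$, the lift realizing the minimum position, so that $\tau$-order of the chosen lifts refines $\tau'$-order.

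Concretely, I expect the lemma statement and short proof to read as follows.

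\begin{proof}
Let $\mset$ be a $\tau$-semi-induced matching in $K_L\hstrong X$. Project it to $\mset_X\subseteq E(X)$ by sending each $(i,x)(j,y)\in\mset$ to $xy$. As in the proof of Lemma~\ref{lem:induce multiply with complete graph}, we have the implications
\begin{align}
xy\in\mset_X &\implies \exists i,j:\ (i,x)(j,y)\in\mset,\label{eq:semifact2}\\
xy\in E(X) &\implies \forall i,j:\ (i,x)(j,y)\in E(K_L\hstrong X).\label{eq:semifact1}
\end{align}
Using \eqref{eq:semifact2} and \eqref{eq:semifact1} exactly as in Cases~1 and~2 of that proof shows that $\mset_X$ is a matching in $X$ with $|\mset_X|=|\mset|$; in particular, for each $x$ appearing in $\mset_X$ there is a unique incident edge, and we may fix one lift $(i_x,x)(j,y)\in\mset$ per edge $xy\in\mset_X$.

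Define $\tau'$ on $V(X)$ by ordering the vertices $x$ according to $m(x):=\min_{i\in L}\tau((i,x))$, breaking ties arbitrarily, so $\tau'$ is a bijection onto $[|V(X)|]$. We claim $\mset_X$ is $\tau'$-semi-induced. Let $xy,x'y'\in\mset_X$ with $\tau'(x)<\tau'(y)$ and $\tau'(x)<\tau'(x')<\tau'(y')$. Pick lifts $(p,x)(q,y),(p',x')(q',y')\in\mset$ with $\tau((p,x))=m(x)$ and $\tau((p',x'))=m(x')$. Then $\tau((p,x))=m(x)<m(x')=\tau((p',x'))$ and $\tau((p,x))=m(x)\le\tau((q,y))$, $\tau((p',x'))=m(x')\le\tau((q',y'))$; moreover $\tau((p',x'))<\tau((q',y'))$ holds by relabelling the two endpoints of this edge if necessary (which does not affect $\mset_X$). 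Thus the hypothesis of Definition~\ref{def:semi induced number} applies to the pair $(p,x)(q,y),(p',x')(q',y')$ in $K_L\hstrong X$ with respect to $\tau$, so there are no edges $(p,x)(q',y')$ and $(p,x)(p',x')$ in $E(K_L\hstrong X)$. If $xx'\in E(X)$, then \eqref{eq:semifact1} gives $(p,x)(p',x')\in E(K_L\hstrong X)$, a contradiction; and if $xy'\in E(X)$, then \eqref{eq:semifact1} gives $(p,x)(q',y')\in E(K_L\hstrong X)$, again a contradiction. Hence $\mset_X$ is a $\tau'$-semi-induced matching, and $\sinducesigma{K_L\hstrong X}{\tau}=|\mset|=|\mset_X|\le\sinducesigma{X}{\tau'}$.
\end{proof}

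The main obstacle I anticipate is the bookkeeping in the last paragraph: making sure that the lifts chosen to define $\tau'$ (the $\min$-achieving representatives) are consistent with the lifts used to invoke the $\tau$-semi-induced property, and handling the ``$\sigma(u)<\sigma(v)<\sigma(v')$'' orientation convention (which endpoint of each matched edge is ``first''). One has to be slightly careful that collapsing all $|L|$ copies of a vertex to a single position does not accidentally reverse the relative order of the two endpoints of an edge of $\mset_X$; this is why I relabel endpoints of each matched edge so that the $\tau'$-smaller one is the designated ``$u$''. Everything else — injectivity of the projection, the matching property — is a verbatim transcription of Lemma~\ref{lem:induce multiply with complete graph}. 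Once this lemma is in hand, combining it with \eqref{eq:semi induced 1} (taking $L=V(H)$, $X=G\times J$, $\tau=\sigma'$ to get $\sigma_1$, and $L=V(G)$, $X=H\times J$, $\tau=\sigma$ to get $\sigma_2$) immediately yields Theorem~\ref{thm: subadditivity for semi-induced matching}, hence Eq.~\eqref{eq:subadditive semi-induce}.
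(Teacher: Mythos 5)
Your overall route is the paper's: project $\mset$ to $\mset_X$, reuse the two displayed facts from Lemma~\ref{lem:induce multiply with complete graph} to see that $\mset_X$ is a matching of the same size, and push the order $\tau$ forward to an order $\tau'$ on $V(X)$. The gap is in how you push the order forward. You define $\tau'$ via $m(x)=\min_{i\in L}\tau((i,x))$, a minimum over \emph{all} copies of $x$, and then ``pick lifts $(p,x)(q,y)\in\mset$ with $\tau((p,x))=m(x)$''. Such a lift need not exist: the copy of $x$ that occurs in $\mset$ is forced on you, and it may sit much later in $\tau$ than an unmatched copy realizing the minimum. When that happens, the $\tau$-smallest of the four lifted endpoints can be a copy of $y$ or $y'$ rather than of $x$, and the $\tau$-semi-inducedness of $\mset$ then only forbids edges out of that copy of $y$ (resp.\ $y'$); it tells you nothing about $xx'$ or $xy'$, which is exactly what you need.

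This is not merely a presentational slip; the order you construct can genuinely violate the lemma. Take $X$ with $V(X)=\{x,y,x',y'\}$ and $E(X)=\{xy,\ x'y',\ xy'\}$, and $L=\{1,2\}$, so in $K_L\hstrong X$ every copy of $v$ is adjacent to every copy of $w$ whenever $vw\in E(X)$. Let $\tau$ list $(1,x),(1,y),(1,x'),(1,y'),(2,x),(2,y),(2,x'),(2,y')$ in this order, and let $\mset=\{(2,x)(1,y),\ (1,x')(1,y')\}$. This is a $\tau$-semi-induced matching of size $2$: its $\tau$-smallest endpoint is $(1,y)$, and $y$ is adjacent to neither $x'$ nor $y'$ in $X$, so no forbidden edge exists; hence $\sinducesigma{K_L\hstrong X}{\tau}\ge 2$. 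Your rule gives $\tau'$ ordering $x<y<x'<y'$, and the only size-$2$ matching of $X$ is $\{xy,x'y'\}$, which is not $\tau'$-semi-induced because $x$ is $\tau'$-smallest and $xy'\in E(X)$; so $\sinducesigma{X}{\tau'}=1<2$ and the claimed inequality fails for your $\tau'$. The fix is exactly the paper's definition: for each matched vertex $x$ set $\tau'(x)=\tau((i,x))$ where $(i,x)$ is the copy of $x$ that appears in $\mset$ (unmatched vertices ordered arbitrarily). Then the $\tau'$-order of matched vertices mirrors the $\tau$-order of their matched copies, the $\tau$-smallest lifted endpoint is the matched copy of the $\tau'$-smallest projected endpoint, and your case analysis with the two facts goes through verbatim.
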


\begin{proof}\danupon{This proof is still a bit sketchy}
Let $\mset$ be any $\tau$-semi-induced matching in $K_L \hstrong X$. We construct a set of edges $\mset_X \subseteq E(X)$ by adding to $\mset_X$
an edge $xy$ for each edge $(i,x)(j,y) \in E(K_L \hstrong X)$. To prove the lemma, it suffices to define a total order $\tau'$ on vertices $V(X)$ such that $\mset_X$ is a $\tau'$-semi-induced matching in the graph $X$. 
We will use Eq.~\eqref{eq:fact1} and Eq.~\eqref{eq:fact2}; we recall them here:
\bundit{Should we remove this?}

\begin{align*}
xy\in \mset_X &\implies \exists i, j: (i, x)(j, y)\in \mset\tag{\ref{eq:fact2}}\\
xy\in E(X) &\implies \forall i, j: (i, x)(j, y)\in E(K_L\hstrong X)\tag{\ref{eq:fact1}}
\end{align*}

Before defining $\tau'$, we first argue that $\mset_X$ is a matching in $X$. Suppose otherwise; i.e., $xy, xy'\in \mset_X$ for some $x$, $y$ and $y'$. Then, from Eq.\eqref{eq:fact2}, we have edges $(i,x)(j,y)$ and $(i',x)(j',y')$ in $\mset$ for some $i,j,i',j' \in V(K_L)$. This means that $xy, xy' \in E(G)$, and therefore, from Eq.\eqref{eq:fact1}, we must also have edges $(i,x)(j',y')$ and $(i',x)(j,y)$. This contradicts the fact that $\mset$ is $\tau$-semi-induced matching, i.e. no matter how we define $\tau$, this case cannot happen.

Now, we are ready to define the total order $\tau'$ on vertices $V(X)$. Since each vertex in $X$ appears at most once in $\mset_X$, for each vertex $x \in V(X)$ that appears in $\mset_X$, we define $\tau'(x) = \tau(i,x)$ where $(i,x)$ is the vertex that appears in $\mset$. Now, it is easy to check that $\mset_X$ is $\tau'$-semi-induced matching.

\end{proof}

From this lemma, we conclude that there exists a total order $\sigma_1$ on $V(K_H \hstrong (G \times J))$ such that $\sinducesigma{K_H \hstrong (G \times J)}{\sigma'} \leq \sinducesigma{G \times J}{\sigma_1}$, and there exists a total order $\sigma_2$ on $ V(K_G \hstrong (H \times J))$ such that $\sinducesigma{K_G \hstrong (H \times J)}{\sigma} \leq \sinducesigma{H \times J}{\sigma_2}$.
Theorem~\ref{thm: subadditivity for semi-induced matching} then follows by combining these inequalities with Eq.\eqref{eq:semi induced 1}.

\subsection{Subadditivity of Poset Dimension Number (Eq.\eqref{eq:subadditive dimension}).}

In this section, we show that $\dimension{(G\cdot H)\hstrong \vec P}\leq \dimension{G\hstrong \vec P}+\chi(G)\dimension{H \hstrong \vec P}+\dimension{\vec P}$. Definitions related to poset and dimension can be found in Section~\ref{sec:prelim}.
We first note that the extended tensor product between an undirected graph $G$ and a height-two poset $\vec{P}$ is still a poset (in fact, it is a height-two poset). So, the quantity $\dimension{G\hstrong \vec{P}}$ is well-defined. This fact is formalized and proved in the following lemma.

\begin{lemma}\label{lem:poset graph times poset}
For any graph $A$ and height-two poset $\vec P$, $A\hstrong \vec{P}$ is a height-two poset.
\end{lemma}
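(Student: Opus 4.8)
\textbf{Proof plan for Lemma~\ref{lem:poset graph times poset}.}

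The plan is to verify the three defining properties of a height-two poset for $A \hstrong \vec P$ directly from the edge-set definitions. Recall $E(A \hstrong \vec P) = \{(u,p)(v,q) : (uv \in E(A) \text{ or } u=v) \text{ and } pq \in E(\vec P)\}$, and that $\vec P$ being height-two means every vertex of $\vec P$ is minimal or maximal, so $E(\vec P)$ only contains edges from a minimal vertex to a maximal vertex; in particular $\vec P$ has no directed path of length two, i.e., $pq, qr \in E(\vec P)$ is impossible. I need to show $A \hstrong \vec P$ is (i) acyclic, (ii) transitive, and (iii) height-two; (iii) together with (i) in fact implies the rest, but I will do them in the order that is cleanest.

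First I would handle \emph{height-two-ness}, which is the crux. Take any edge $(u,p)(v,q) \in E(A \hstrong \vec P)$; then $pq \in E(\vec P)$, so $p$ is minimal and $q$ is maximal in $\vec P$. I claim $(u,p)$ is minimal and $(v,q)$ is maximal in $A \hstrong \vec P$. Indeed, if some edge $(w,r)(u,p)$ existed, then $rp \in E(\vec P)$, making $p$ non-minimal in $\vec P$, a contradiction; symmetrically no edge can leave $(v,q)$. Hence every vertex appearing as a tail of an edge is minimal and every vertex appearing as a head is maximal, so every vertex of $A \hstrong \vec P$ is minimal or maximal — this is exactly the height-two condition. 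From this, \emph{acyclicity} is immediate: a directed cycle would need a vertex with both an incoming and an outgoing edge that is neither purely a source nor purely a sink along the cycle — more concretely, in a height-two digraph every edge goes from a minimal to a maximal vertex, so there is no directed path of length two and a fortiori no directed cycle. \emph{Transitivity} is then vacuous in the useful sense: the hypothesis $(u,p)(v,q), (v,q)(w,r) \in E(A \hstrong \vec P)$ would force $pq, qr \in E(\vec P)$, which cannot happen since $\vec P$ is height-two; so the transitivity implication holds trivially (there are no length-two paths to close).

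The main obstacle is really just being careful with the definition of ``poset'' used in the paper (Definition~\ref{def:poset}: directed, acyclic, transitive) and making sure the height-two structure of $\vec P$ is invoked correctly — in particular that $E(\vec P)$ contains \emph{only} minimal-to-maximal edges and hence no two composable edges. Once that observation about $\vec P$ is in hand, every property of $A \hstrong \vec P$ follows by a one-line argument tracing the second coordinate, and the first coordinate (the graph $A$ part, including the possibility $u = v$) plays no role in the obstruction. I would also remark at the end that $A \hstrong \vec P$ is nonempty/well-formed as a digraph on $V(A) \times V(\vec P)$ by construction, so calling $\dimension{A \hstrong \vec P}$ makes sense, which is the point of stating the lemma.
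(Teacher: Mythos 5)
Your proof is correct and follows essentially the same route as the paper: trace the second coordinate to show that an incoming edge at $(u,p)$ would force an edge into $p$ in $\vec P$ (and symmetrically for outgoing edges), so minimality/maximality in $\vec P$ transfers to $A\hstrong\vec P$ and every vertex of the product is minimal or maximal. The only difference is that you additionally spell out acyclicity and transitivity, which the paper leaves implicit as consequences of the height-two structure; that is a harmless (indeed slightly more careful) elaboration, not a different argument.
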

\begin{proof} Consider any vertex $(a, p)\in V(A)\times V(\vec{P})$. Observe that if $p$ is a minimal element in $\vec{P}$, then $(a, p)$ is also a minimal element in $A\hstrong \vec{P}$; otherwise, if there is a vertex $(a', p')\in V(A)\times V(\vec{P})$ such that $(a', p')(a, p)\in E(A\hstrong \vec{P})$, then $p'p\in E(\vec{P})$, which contradicts the fact that $p$ is minimal in $\vec{P}$. A similar argument shows that if $p$ is a maximal element in $\vec{P}$, then $(a, p)$ is also maximal in $A\hstrong \vec{P}$. Since {\em every} vertex $(a, p)\in V(A)\times V(\vec{P})$ is either a minimal or maximal element (or both) in $A\hstrong \vec{P}$, the graph product $A\hstrong \vec{P}$ is a height-two poset.

\end{proof}
%
%
Our proof of Eq.\eqref{eq:subadditive dimension} has two steps\danupon{Change from three to two}. In the first step (Lemma~\ref{lem:poset decompose}), we write the poset $(G \cdot H) \hstrong \vec{P}$ as the intersection of two other posets $\vec{P}_1$ and $\vec{P}_2$, where $\dimension{(G \cdot H) \hstrong \vec{P}} \leq \dimension{\vec{P}_1} + \dimension{\vec{P}_2}$. In the second step, we bound the dimensions of $\vec{P}_1$ and $\vec{P}_2$.

\paragraph{Step 1: Decomposition of poset} This step is summarized in the following lemma.

\begin{lemma}\label{lem:poset decompose}
Consider any undirected graph $A$ and a height-two poset $\vec{P}$.
 Denote by $U$ and $V$ the set of minimal and maximal elements of $\vec P$, respectively. (Since $\vec{P}$ is of height two, $U \cup V = V(\vec{P})$.) Then, $E(A \hstrong \vec P)$ can be written as
$$E(A\hstrong \vec P) = E(A\hstrong \vec K_{U, V}) \cap E(K_A\hstrong \vec P)\,$$
where $\vec K_{U, V}$ is a complete height-two poset with $U$ and $V$ as the sets of minimal and maximal elements respectively, i.e., $E(\vec K_{U, V})=\{uv : u\in U, v\in V\}$.
\end{lemma}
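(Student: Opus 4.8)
The plan is to prove the set equality $E(A\hstrong \vec P) = E(A\hstrong \vec K_{U,V}) \cap E(K_A\hstrong \vec P)$ directly, by a double inclusion on edges. Throughout, recall that a vertex of each of these three graph products is a pair $(a,p)$ with $a\in V(A)$ and $p\in V(\vec P)$, and that since $\vec P$ is height-two, every $p\in V(\vec P)$ lies in $U\cup V$ (with the minimal/maximal classification $U,V$); also $E(\vec K_{U,V})=\{uv: u\in U,\ v\in V\}$ and $E(K_A)=\{aa': a\ne a'\in V(A)\}$. By Lemma~\ref{lem:poset graph times poset}, $A\hstrong\vec P$ is itself a height-two poset, so all three terms are honest graphs on the same vertex set.

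\textbf{Inclusion $\subseteq$.} Suppose $(a,p)(a',p')\in E(A\hstrong\vec P)$. By the definition of $\hstrong$ this means ($aa'\in E(A)$ or $a=a'$) \emph{and} $pp'\in E(\vec P)$. Since $\vec P$ is acyclic and height-two, any edge $pp'\in E(\vec P)$ must go from a minimal element to a maximal one, i.e.\ $p\in U$ and $p'\in V$; hence $pp'\in E(\vec K_{U,V})$. Combined with the first conjunct, $(a,p)(a',p')\in E(A\hstrong\vec K_{U,V})$. On the other hand, $aa'\in E(A)$ or $a=a'$ certainly implies $aa'\in E(K_A)$ or $a=a'$, so together with $pp'\in E(\vec P)$ we get $(a,p)(a',p')\in E(K_A\hstrong\vec P)$. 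Thus the edge lies in the intersection.

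\textbf{Inclusion $\supseteq$.} Suppose $(a,p)(a',p')$ lies in both $E(A\hstrong\vec K_{U,V})$ and $E(K_A\hstrong\vec P)$. Membership in the second gives ($aa'\in E(K_A)$ or $a=a'$) and $pp'\in E(\vec P)$ — the second conjunct is exactly what we want for the ``$\vec P$ part''. Membership in the first gives ($aa'\in E(A)$ or $a=a'$) and $pp'\in E(\vec K_{U,V})$ — the first conjunct is exactly the ``$A$ part'' we want. Conjoining the $A$-part from the first with the $\vec P$-part from the second yields ($aa'\in E(A)$ or $a=a'$) and $pp'\in E(\vec P)$, i.e.\ $(a,p)(a',p')\in E(A\hstrong\vec P)$. (The extra information $pp'\in E(\vec K_{U,V})$ and $aa'\in E(K_A)\cup\{a=a'\}$ is simply discarded; this is the only slightly subtle point — one must take the ``$A$-refinement'' from one factor and the ``$\vec P$-refinement'' from the other, which is legitimate precisely because $\hstrong$ is a conjunction of an $A$-condition and a $\vec P$-condition, and $\vec K_{U,V}\supseteq$-dominates $\vec P$ on its edge set only after the acyclic/height-two structure forces $E(\vec P)\subseteq E(\vec K_{U,V})$.)

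\textbf{Remark on where the work is.} There is no real obstacle here; the main thing to get right is the bookkeeping of which of the two AND-clauses of $\hstrong$ each factor controls, and the observation (using height-two-ness and acyclicity) that $E(\vec P)\subseteq E(\vec K_{U,V})$, so that intersecting with $E(A\hstrong\vec K_{U,V})$ loses no $\vec P$-edges but does impose the refined $A$-condition $aa'\in E(A)$ rather than merely $aa'\in E(K_A)$. If one wanted, the same argument shows more generally that for any graph $A$ and any height-two poset $\vec P$, $A\hstrong\vec P = (A\hstrong\vec K_{U,V})\cap(K_A\hstrong\vec P)$ as labelled graphs, which is all that is needed in the next step to bound $\dimension{(G\cdot H)\hstrong\vec P}$ via $\dimension{\vec P_1}+\dimension{\vec P_2}$.
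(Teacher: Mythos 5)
Your proposal is correct and takes essentially the same route as the paper: the paper proves the identity by a short chain of set equalities resting on exactly your two observations, namely that any edge $pp'\in E(\vec P)$ forces $p\in U$ and $p'\in V$ (height-two-ness), and that the $K_A$-condition ``$aa'\in E(K_A)$ or $a=a'$'' is vacuously true, so it can be inserted or discarded freely. Your double-inclusion phrasing is just a reorganization of that same argument, so nothing is missing.
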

\begin{proof} The lemma follows from simple logical implications as shown in Fig.~\ref{fig:proof of poset decomposition}.
\begin{figure*}
\setlength{\fboxsep}{0pt}
\centering{
\fbox{
\begin{minipage}{\columnwidth}
\noindent

\begin{align*}
E(A\hstrong \vec P)
&= \{(a, p)(a',p') : \text{($aa'\in E(A)$ or $a=a'$) and $pp'\in E(\vec P)$}\}\\
&= \{(a, p)(a',p') : \text{($aa'\in E(A)$ or $a=a'$) and $p\in U$ and $p'\in V$} \\
&\ \ \ \ \ \text{and ($a\neq a'$ or $a=a'$) and $pp'\in E(\vec P)$}\}\\
&=\{(a, p)(a',p') : \text{($aa'\in E(A)$ or $a=a'$) and $p\in U$ and $p'\in V$}\} \cap \\
&\ \ \ \ \ \{(a, p)(a',p') : \text{($a\neq a'$ or $a=a'$) and $pp'\in E(\vec P)$}\}\\
&=E(A\hstrong \vec K_{U, V}) \cap E(K_A\hstrong \vec P)\,.
\end{align*}

\end{minipage}}}
\caption{Decomposition of Poset.}\label{fig:proof of poset decomposition}
\end{figure*}
The second equality is because $pp'\in E(\vec P)$ implies that $p\in U$ and $p'\in V$ and the fact that the statement ``$a\neq a'$ or $a=a'$'' is a true statement.

\end{proof}


Let $U$ and $V$ be as in the above lemma. This allows us to write

\begin{align}
\label{eq:poset1} E((G\cdot H)\hstrong \vec P) =E((G\cdot H)\hstrong \vec K_{U, V}) \cap E(K_{G\cdot H} \hstrong \vec P)\,.
\end{align}

Note that both $(G\cdot H)\hstrong \vec K_{U, V}$ and $K_{G\cdot H} \hstrong \vec P$ are height-two posets (by Lemma~\ref{lem:poset graph times poset}). Moreover, they have the same vertex set, which is $V(G)\times V(H)\times V(\vec P)$. We next apply the following lemma which relates graph intersection to poset dimension.

\begin{lemma}\label{lem:poset times to sum}
Let $\vec{P}_1$ and $\vec P_2$ be any height-two posets on the same vertex set $V'$. Let $\vec P=(V', E(\vec P_1)\cap E(\vec P_2))$. Then, $\vec P$ is a height-two poset. Moreover, $\dimension{\vec P}\leq \dimension{\vec P_1}+\dimension{\vec P_2}$.
\end{lemma}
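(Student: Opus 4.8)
The plan is to prove the two assertions in turn, the first being routine. To see that $\vec P$ is a height-two poset: acyclicity is immediate because $E(\vec P)=E(\vec P_1)\cap E(\vec P_2)\subseteq E(\vec P_1)$ and a subgraph of an acyclic digraph is acyclic; transitivity holds because the intersection of two transitive relations is transitive (if $uv,vw\in E(\vec P)$ then $uv,vw\in E(\vec P_1)$ forces $uw\in E(\vec P_1)$, and symmetrically $uw\in E(\vec P_2)$, so $uw\in E(\vec P)$); and for the height-two property, a vertex with in-degree $0$ in $\vec P_1$ still has in-degree $0$ in the subgraph $\vec P$, hence is minimal in $\vec P$, and likewise a maximal vertex of $\vec P_1$ stays maximal in $\vec P$ — so since every vertex is minimal or maximal in $\vec P_1$, the same holds in $\vec P$.

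For the dimension bound I would switch to the ``intersection of linear extensions'' description of dimension, which is equivalent to the ``embedding into $\reals^d$'' description used in Definition~\ref{def:dimension} (this is the equivalence noted in the footnote there). Writing $d_i=\dimension{\vec P_i}$, fix linear extensions $L_1^1,\dots,L_1^{d_1}$ of $\vec P_1$ with $E(\vec P_1)=\bigcap_{j} E(L_1^j)$, and similarly $L_2^1,\dots,L_2^{d_2}$ of $\vec P_2$ with $E(\vec P_2)=\bigcap_j E(L_2^j)$. Because $E(\vec P)\subseteq E(\vec P_i)\subseteq E(L_i^j)$ and each $L_i^j$ is a total order, every one of these $d_1+d_2$ total orders is also a linear extension of $\vec P$; moreover their common intersection is $E(\vec P_1)\cap E(\vec P_2)=E(\vec P)$. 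Hence $\vec P$ is realized as the intersection of $d_1+d_2$ of its own linear extensions, so $\dimension{\vec P}\le d_1+d_2=\dimension{\vec P_1}+\dimension{\vec P_2}$.

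Stated in the embedding language this is the same argument: take realizers $\phi_1:V'\to\reals^{d_1}$ and $\phi_2:V'\to\reals^{d_2}$ of $\vec P_1$ and $\vec P_2$ in which every coordinate is injective on $V'$ (a harmless normalization, corresponding to choosing coordinate orders that are strict linear extensions), and set $\phi(x)=(\phi_1(x),\phi_2(x))\in\reals^{d_1+d_2}$. For distinct $u,v$ we then have $\phi(u)<\phi(v)$ iff $\phi(u)$ is coordinatewise $\le\phi(v)$, iff $\phi(u)$ is coordinatewise strictly below $\phi(v)$ (by injectivity of each coordinate), iff $\phi_1(u)<\phi_1(v)$ and $\phi_2(u)<\phi_2(v)$, iff $uv\in E(\vec P_1)$ and $uv\in E(\vec P_2)$, iff $uv\in E(\vec P)$; so $\phi$ realizes $\vec P$.

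I expect the only genuinely delicate point to be the insistence on tie-free realizers. With an arbitrary $\reals^d$-embedding, two distinct vertices may share a coordinate value (or even coincide), and then the plain concatenation $(\phi_1,\phi_2)$ can wrongly certify a pair as comparable in $\vec P$ merely because it is comparable in one of $\vec P_1,\vec P_2$ — so the naive proof is not correct as stated. Routing the argument through the linear-extension characterization (equivalently, normalizing each coordinate to a strict total order) is precisely what removes this degeneracy, after which every step above is mechanical.
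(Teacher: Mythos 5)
Your proof is correct, and for the first assertion it coincides with the paper's argument: the paper, like you, observes that minimal and maximal vertices of $\vec P_1$ remain minimal and maximal in the subgraph $\vec P$; your additional check of acyclicity and transitivity is fine but not strictly needed once every vertex is minimal or maximal, since then no directed path of length two exists. For the dimension bound your route differs from the paper's in exactly the place you flag: the paper performs the plain concatenation $\phi'(p)=(\phi_1(p),\phi_2(p))$ of realizers of $\vec P_1$ and $\vec P_2$ and concludes with the chain $pp'\in E(\vec P_1)\cap E(\vec P_2)\iff \phi_1(p)<\phi_1(p')\ \text{and}\ \phi_2(p)<\phi_2(p')\iff \phi'(p)<\phi'(p')$, with no tie-breaking step; as you observe, under Definition~\ref{def:dimension} (which does not force realizers to be injective) the last equivalence is immediate only left-to-right, since $\phi'(p)<\phi'(p')$ could be witnessed entirely inside the first block while $\phi_2(p)=\phi_2(p')$. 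Your detour through linear extensions (equivalently, coordinate-injective realizers) repairs this degeneracy; what the paper's version buys is brevity and staying entirely inside the embedding definition, at the cost of implicitly assuming tie-free realizers. Be aware, though, that your fix is not entirely free either: extracting $d_i$ linear extensions from an embedding realizer of $\vec P_i$ is precisely the direction of the footnote's equivalence where ties matter (a two-element antichain maps into $\R^1$ with a tie under the embedding definition, yet needs two linear extensions), so a fully self-contained write-up should either justify the ``WLOG coordinate-injective'' normalization directly (a short perturbation/tie-breaking argument, using that classes of vertices tied in every coordinate are antichains with identical outside relations) or note that in such degenerate cases the inequality holds trivially. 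Modulo that citation, every step of your argument is sound.
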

\begin{proof}
The proof that $\vec P$ is a height-two poset is essentially the same as the proof of Lemma~\ref{lem:poset graph times poset}.
Consider any vertex $p\in V'$. Observe that if $p$ is a minimal element in $\vec P_1$, then it is also a minimal element in $\vec P$; otherwise, if there is a vertex $p'\in V'$ such that $p'p\in E(\vec P)$, then $p'p\in E(\vec P_1)$, which contradicts the fact that $p$ is minimal in $\vec P_1$. A similar argument shows that if $p$ is a maximal element in $\vec P_1$, then it is also maximal in $\vec P$. Since {\em every} vertex in $\vec P$ is either minimal or maximal (or both), $\vec P$ is a height-two poset.

We now argue that $\dimension{\vec P}\leq \dimension{\vec P_1}+\dimension{\vec P_2}$. Let $d_i=\dimension{\vec P_i}$ and let $\phi_i:V'\rightarrow \reals^{d_i}$ be a mapping that realizes poset $\vec P_i$. We define $\phi':V'\rightarrow \reals^{d_1+d_2}$ as a concatenation of $\phi_1$ and $\phi_2$. That is, for any $p\in V'$, we let $\phi'(p) = (\phi_1(p), \phi_2(p))$. We finish the proof by showing that $\phi'$ realizes $\vec P$ using the following simple logical implications.

\begin{align*}
pp'\in E(\vec P_1) \cap E(\vec P_2) &\iff pp'\in E(\vec P_1) \mbox{ and } pp'\in E(\vec P_2)\\
&\iff \phi_1(p) < \phi_1(p') \mbox{ and } \phi_2(p) < \phi_2(p')\\
&\iff \phi'(p)<\phi'(p')\,. 
\end{align*}

\end{proof}
Using the above lemma and Eq.\eqref{eq:poset1}, we get
\begin{align}
\dimension{(G\cdot H)\hstrong \vec P} \leq \dimension{(G\cdot H)\hstrong \vec K_{U, V}} + \dimension{K_{G\cdot H} \hstrong \vec P}.\label{eq:poset2}
\end{align}

\paragraph{Step 2: Bounding the dimensions} Our next step is to bound the dimension numbers of $(G\cdot H)\hstrong \vec K_{U, V}$ and $K_{G\cdot H} \hstrong \vec P$ separately. The nice thing is that these graph products are not in the general form anymore -- one of the graphs in each product is ``complete''. This allows us to bound the dimension numbers of these graphs as in the next two lemmas.
\begin{lemma}\label{lem:poset complete graph product} For any set $L$ of vertices and any height-two poset $\vec P$,
$\dimension{K_L\hstrong \vec P}\leq\dimension{\vec P}$.
\end{lemma}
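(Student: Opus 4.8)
The plan is to observe that forming $K_L \hstrong \vec P$ merely replaces each element $p$ of $\vec P$ by $|L|$ pairwise-incomparable ``parallel copies'' $(i,p)$, $i \in L$, while leaving the order relation among the original elements intact; consequently any realizer of $\vec P$ lifts verbatim to a realizer of $K_L \hstrong \vec P$ of the same dimension.

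First I would record that $K_L \hstrong \vec P$ is a height-two poset, so that $\dimension{K_L \hstrong \vec P}$ is well-defined, which is immediate from Lemma~\ref{lem:poset graph times poset}. Then I would simplify the edge set of $K_L \hstrong \vec P$: since $K_L$ is complete, for all $i,j \in L$ the condition ``$ij \in E(K_L)$ or $i = j$'' holds, so by the definition of $\hstrong$ (Definition~\ref{def:product}),
\[
E(K_L \hstrong \vec P) = \{(i,p)(j,q) : pq \in E(\vec P)\}\,.
\]
In other words, the first coordinate is irrelevant to adjacency: $(i,p)(j,q)$ is an edge exactly when $pq \in E(\vec P)$.

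Next, let $d = \dimension{\vec P}$ and let $\phi : V(\vec P) \to \R^d$ be a mapping that realizes $\vec P$, so that for distinct $p,q$ we have $pq \in E(\vec P) \iff \phi(p) < \phi(q)$. Define $\psi : V(K_L \hstrong \vec P) \to \R^d$ by $\psi((i,p)) = \phi(p)$. I claim $\psi$ realizes $K_L \hstrong \vec P$. Take distinct vertices $(i,p)$ and $(j,q)$. If $p \neq q$, then $(i,p)(j,q) \in E(K_L \hstrong \vec P) \iff pq \in E(\vec P) \iff \phi(p) < \phi(q) \iff \psi((i,p)) < \psi((j,q))$, using the simplified edge set and the fact that $\phi$ realizes $\vec P$. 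If $p = q$ (so necessarily $i \neq j$), then $pp \notin E(\vec P)$ since $\vec P$ has no loops (being acyclic), hence $(i,p)(j,q) \notin E(K_L \hstrong \vec P)$; and $\psi((i,p)) = \phi(p) = \phi(q) = \psi((j,q))$, so $\psi((i,p)) \not< \psi((j,q))$. In both cases adjacency matches the coordinatewise order, so $\psi$ realizes $K_L \hstrong \vec P$, giving $\dimension{K_L \hstrong \vec P} \le d = \dimension{\vec P}$.

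There is essentially no real obstacle here; the one point that deserves (minor) attention is that $\psi$ is not injective — the copies $(i,p)$, $i \in L$, all map to $\phi(p)$ — so one must verify that these copies are pairwise incomparable in $K_L \hstrong \vec P$, which is precisely what the $p = q$ case above checks.
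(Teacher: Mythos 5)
Your proposal is correct and follows essentially the same route as the paper: both simplify the edge set of $K_L\hstrong \vec P$ using completeness of $K_L$ and then reuse a realizer $\phi$ of $\vec P$ via the first-coordinate-forgetting map $(i,p)\mapsto\phi(p)$. Your explicit check of the $p=q$ case is a minor added detail that the paper handles implicitly within its chain of equivalences.
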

\begin{proof}
%
Let $d=\dimension{\vec P}$ and $\phi:V(\vec P)\rightarrow \reals^d$ be a mapping that realizes $\vec P$. We define $\phi':V(K_L)\times V(\vec P)\rightarrow \reals^d$ as $\phi'(v, p)=\phi(p)$. We complete the proof with the fact that $\phi'$ realizes $K_L\times \vec P$, proved as follows.
\begin{align*}
&(v, p)(v', p') \in E(K_L\hstrong \vec P) \\
&\iff \text{($(vv'\in E(K_L))$ or $v=v'$) and $pp'\in E(\vec P)$}\\
&\iff pp' \in E(\vec P)\\
&\iff \phi(p)<\phi(p')\\
&\iff \phi'(v, p)<\phi'(v', p')\,. 
\end{align*}
\end{proof}

We note that the equality can be attained in Lemma~\ref{lem:poset complete graph product}, but it is not important to us. The same holds for the next lemma.



\begin{lemma}\label{lem:poset bipartite complete graph product} For any undirected graph $A$ and sets $U$ and $V$ of vertices,
$\dimension{A\hstrong \vec{K}_{U, V}}\leq \dimension{A\hstrong \vec{K}_2}$, where $\vec K_{U,V}$ is as in Lemma~\ref{lem:poset decompose}.
\end{lemma}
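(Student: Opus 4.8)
The plan is to pull back a realizer of $A\hstrong\vec K_2$ to one of $A\hstrong\vec K_{U,V}$ without increasing the dimension. Since $\vec K_2$ is a height-two poset, Lemma~\ref{lem:poset graph times poset} guarantees that $A\hstrong\vec K_2$ is a (height-two) poset, so $d:=\dimension{A\hstrong\vec K_2}$ makes sense; fix a mapping $\phi\colon V(A)\times\{1,2\}\to\reals^{d}$ that realizes it. Because $\vec K_{U,V}$ is a poset, $U$ and $V$ are disjoint, so I can unambiguously define $\phi'\colon V(A)\times(U\cup V)\to\reals^{d}$ by $\phi'(a,p)=\phi(a,1)$ when $p\in U$ and $\phi'(a,p)=\phi(a,2)$ when $p\in V$. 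The whole proof then reduces to checking that $\phi'$ realizes $A\hstrong\vec K_{U,V}$, which yields $\dimension{A\hstrong\vec K_{U,V}}\le d$, as desired.

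For that verification I would run a case analysis on where the second coordinates of a pair of distinct vertices $(a,p),(a',p')$ lie, using two elementary facts. First, by the definition of $\hstrong$, $(a,p)(a',p')\in E(A\hstrong\vec K_{U,V})$ holds exactly when $p\in U$, $p'\in V$, and ($aa'\in E(A)$ or $a=a'$). Second, since $\phi$ realizes $A\hstrong\vec K_2$ and every edge of $\vec K_2$ runs from $1$ to $2$, we have $\phi(a,i)<\phi(a',j)$ exactly when $i=1$, $j=2$, and ($aa'\in E(A)$ or $a=a'$). When $p\in U$ and $p'\in V$, both the edge relation in $A\hstrong\vec K_{U,V}$ and the order relation $\phi'(a,p)<\phi'(a',p')$ unwind --- by the first and second fact respectively --- to the identical condition ``$aa'\in E(A)$ or $a=a'$'', so they agree. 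In the remaining configurations ($p,p'\in U$; or $p,p'\in V$; or $p\in V$ and $p'\in U$) the first fact rules out an edge, while the second fact, applied to the corresponding pair of $\phi$-images ($\phi(a,1),\phi(a',1)$; or $\phi(a,2),\phi(a',2)$; or $\phi(a,2),\phi(a',1)$), shows $\phi'(a,p)\not<\phi'(a',p')$; so both relations are false and again agree. This exhausts all cases and finishes the proof.

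I do not anticipate a real difficulty: the content is just the slogan that in $A\hstrong\vec K_{U,V}$ all minimal elements behave like vertex $1$ of $\vec K_2$ and all maximal elements behave like vertex $2$, and the argument is a short mechanical case check in the same spirit as Lemma~\ref{lem:poset complete graph product} and the case analysis in Lemma~\ref{lem:phi realizes B}. The only point worth an explicit sentence is why $U$ and $V$ may be taken disjoint, which is forced since otherwise $\vec K_{U,V}$, as defined by $E(\vec K_{U,V})=\{uv:u\in U,v\in V\}$, would fail to be a valid loop-free height-two poset; in any case this disjointness is automatic in the way the lemma is applied via Lemma~\ref{lem:poset decompose}.
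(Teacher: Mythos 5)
Your proof is correct and takes essentially the same route as the paper: the paper defines $r(i)$ to be the minimal vertex of $\vec K_2$ for $i\in U$ and the maximal vertex otherwise, sets $\phi'(a,i)=\phi(a,r(i))$, and verifies realization via the equivalence $ii'\in E(\vec K_{U,V})\iff r(i)r(i')\in E(\vec K_2)$, which is exactly your collapse of $U$ to $1$ and $V$ to $2$ followed by the same case check. Your side remark on why $U$ and $V$ may be taken disjoint is a reasonable extra observation; the paper leaves this implicit in the same way.
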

\begin{proof}
Recall that the set of minimal and maximal elements of $\vec K_{U, V}$ are $U$ and $V$, respectively. Let the minimal and maximal element of $\vec K_2$ be $u$ and $v$, respectively. Let $d=\dimension{A\hstrong \vec K_2}$ and $\phi: V(A)\times V(\vec K_2)\rightarrow \reals^d$ be a mapping that realizes $A\hstrong \vec K_2$.
Define a function $r:V(\vec K_{U, V})\rightarrow V(\vec K_2)$ as $r(i)=u$ for all $i\in U$ and $r(i)=v$ otherwise. Now, define a mapping $\phi':V(A) \times V(\vec K_{U, V})\rightarrow \reals^d$ as
$$\phi'(a, i)=\phi(a, r(i)).$$
We finish the lemma by showing that $\phi'$ realizes poset $A\hstrong \vec K_{U, V}$. Observe that for any $i$ and $i'$ in $V(\vec K_{U, V})$, we have $ii'\in E(\vec K_{U, V})\iff r(i)r(i')\in E(\vec K_2)$. Thus,
\begin{align*}
&(a, i)(a',  i')\in E(A\hstrong \vec K_{U,V})\\
&\iff (aa'\in E(A) \mbox{ or } a=a') \mbox{ and } ii'\in E(\vec K_{U, V})\\
&\iff (aa'\in E(A) \mbox{ or } a=a')\mbox{ and } r(i)r(i')\in E(\vec K_2)\\
&\iff  (a,r(i))(a', r(i')) \in E(A \hstrong \vec{K}_2)\\
&\iff \phi(a, r(i))<\phi(a', r(i'))\\
&\iff \phi'(a, i)<\phi'(a', i')
\end{align*}

\end{proof}

Applying Lemma \ref{lem:poset complete graph product} and \ref{lem:poset bipartite complete graph product} to Eq.\eqref{eq:poset2}, we get
{
\begin{align*}
\dimension{(G\cdot H)\hstrong \vec P} &\leq \dimension{(G\cdot H)\hstrong \vec K_2}+ \dimension{\vec P} \,.
\end{align*}
}
Finally, we apply Eq.\eqref{eq:subadditive dimension special} proved in Section~\ref{sec:poset-ineq} to get the desired inequality:

\begin{align*}
\dimension{(G\cdot H)\hstrong \vec P}
\leq \dimension{G\hstrong \vec P}
+\chi(G)\dimension{H \hstrong \vec P}+\dimension{\vec P}.
\end{align*}

\section{Hardness from Graph Products}
\label{sec: applications}\label{sec: proof of induced matching}

In this section, we show applications of subadditivity inequalities presented in Theorem~\ref{theorem:subadditive} in proving the tight hardness of approximating the induced matching number, the semi-induced matching number, and the poset dimension number.  Moreover, we prove the hardness of $d^{1/2-\epsilon}$ for approximating the induced matching number of $d$-regular bipartite graphs.
%
%


\subsection{Bipartite Induced Matching.}\label{subsec:induced matching}\label{sec:hardness-semi-induced}
In this section, we prove that induced and semi-induced matching problems in bipartite graphs are hard to approximate to within a factor of $n^{1-\epsilon}$.
%
%
Recall that we use  $\bip[G]=G \times K_2$ and $\bipp[G]=G\hstrong K_2$.
%

We have already sketched the proof of the hardness of the induced matching problem in Section~\ref{sec:intro proof} and will give more detail here. We can actually say something stronger than just the hardness
of these problems. In fact, it is hard to distinguish between the case where an input graph $G$ has large $\induce{G}$ and small $\sinduce{G}$, as stated in Theorem~\ref{thm:semi-induced} below.


\begin{theorem}\label{thm:semi-induced}
Given any bipartite graph $G$ and $\epsilon >0$, unless $\NP \subseteq
\ZPP$, no polynomial-time algorithm can distinguish between the following two cases:
\squishlist
  \item (\yi) $\induce{G}\geq |V(G)|^{1-\epsilon}$.

  \item (\ni) $\sinduce{G}\leq |V(G)|^{\epsilon}$.
\squishend
\end{theorem}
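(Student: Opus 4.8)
The plan is to carry out the amplification already previewed in Section~\ref{sec:intro proof}: start from H{\aa}stad's hardness of \MAXIS and boost it by a $k$-fold disjunctive product, but this time track the \emph{semi-induced} matching number through the argument (for the \ni side) in parallel with the induced matching number (for the \yi side), invoking Theorem~\ref{theorem:subadditive}.

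First I would record two auxiliary facts. The lower bound $\induce{X\hstrong K_2}\ge\alpha(X)$ holds for every graph $X$: if $S$ is an independent set of $X$, then $\{(v,1)(v,2):v\in S\}$ is an induced matching of the bipartite graph $X\hstrong K_2$, since for distinct $v,w\in S$ we have $vw\notin E(X)$, so neither $(v,1)(w,2)$ nor $(w,1)(v,2)$ is an edge. The upper bound I need is the semi-induced counterpart of Eq.\eqref{eq:elbassioni},
\[\sinduce{X\hstrong K_2}\ \le\ \sinduce{X\times K_2}+\alpha(X),\]
which I would prove as follows. Write $X\hstrong K_2=(X\times K_2)\cup M$ with $M=\{(v,1)(v,2):v\in V(X)\}$, fix a total order $\sigma$ and a maximum $\sigma$-semi-induced matching $\mset$ of $X\hstrong K_2$, and split $\mset=\mset'\cup\mset''$ with $\mset''=\mset\cap M$. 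Then $\mset'$ is a $\sigma$-semi-induced matching of $X\times K_2$ (same vertex set, and $E(X\times K_2)\subseteq E(X\hstrong K_2)$), so $|\mset'|\le\sinduce{X\times K_2}$; and $\{v:(v,1)(v,2)\in\mset''\}$ is independent in $X$, because if $vw\in E(X)$ for two such vertices then, taking the smaller-$\sigma$ endpoint of each of the edges $(v,1)(v,2)$ and $(w,1)(w,2)$ as its left endpoint, the defining condition in Definition~\ref{def:semi induced number} forbids an edge of $X\hstrong K_2$ that is in fact present --- a short case check according to whether each left endpoint lies on the ``$1$'' or the ``$2$'' side of the bipartition. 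Hence $|\mset''|\le\alpha(X)$, which gives the inequality.

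Now the amplification. Given $\epsilon>0$, put $\delta:=\epsilon/2$ and let $k:=\lceil 4/\epsilon\rceil$, a constant. By H{\aa}stad's theorem~\cite{Hastad96}, unless $\NP\subseteq\ZPP$ no polynomial-time algorithm distinguishes $N$-vertex graphs with $\alpha\ge N^{1-\delta}$ from those with $\alpha\le N^{\delta}$, and we may assume $N$ exceeds any prescribed constant. From such a graph $G_0$ we output $G:=(G_0)^{k}\hstrong K_2$, where $(G_0)^{k}=G_0\vee\cdots\vee G_0$; this is bipartite, has $n:=2N^{k}$ vertices, and is built in polynomial time since $k$ is constant. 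Using $\induce{X\hstrong K_2}\ge\alpha(X)$ and the well-known multiplicativity $\alpha((G_0)^{k})=\alpha(G_0)^{k}$ we get $\induce{G}\ge\alpha(G_0)^{k}$; using the auxiliary upper bound, $k-1$ applications of Eq.\eqref{eq:subadditive semi-induce} with $J=K_2$, the trivial bound $\sinduce{G_0\times K_2}\le N$, and multiplicativity again,
\[\sinduce{G}\ \le\ \sinduce{(G_0)^{k}\times K_2}+\alpha(G_0)^{k}\ \le\ k\,\sinduce{G_0\times K_2}+\alpha(G_0)^{k}\ \le\ kN+\alpha(G_0)^{k}.\]
It remains to check the gap, and here the choice $k\epsilon\ge 4$ does the work. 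In the \yi, $\alpha(G_0)\ge N^{1-\delta}$, so $\induce{G}\ge N^{k(1-\delta)}\ge 2N^{k(1-\epsilon)}\ge(2N^{k})^{1-\epsilon}=n^{1-\epsilon}$, the middle step being $N^{k(\epsilon-\delta)}=N^{k\epsilon/2}\ge 2$. In the \ni, $\alpha(G_0)\le N^{\delta}$, so $\sinduce{G}\le kN+N^{k\delta}=kN+N^{k\epsilon/2}\le N^{k\epsilon}\le n^{\epsilon}$, using $N^{k\epsilon/2}\ge 2$ for the second summand and $N^{k\epsilon-1}\ge N^{3}\ge 2k$ (valid once $N$ exceeds a constant depending on $\epsilon$) for the first. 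Therefore a polynomial-time algorithm separating the \yi and \ni cases for $G$ would separate the two cases for $G_0$, which is impossible unless $\NP\subseteq\ZPP$.

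With Theorem~\ref{theorem:subadditive} and the multiplicativity of $\alpha$ under $\vee$ already available, the only genuinely new ingredient is the semi-induced inequality $\sinduce{X\hstrong K_2}\le\sinduce{X\times K_2}+\alpha(X)$; its case analysis is the one place requiring a little care, and everything else is the product amplification of Section~\ref{sec:intro proof} plus the exponent bookkeeping above.
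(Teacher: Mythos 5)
Your proposal is correct and follows essentially the same route as the paper's proof: H{\aa}stad's independent-set hardness, amplification by the $k$-fold disjunctive power, the bounds $\alpha(X)\le\induce{X\hstrong K_2}$ and $\sinduce{X\hstrong K_2}\le\sinduce{X\times K_2}+\alpha(X)$, and repeated application of Eq.\eqref{eq:subadditive semi-induce}. The only differences are cosmetic (your choice $k=\lceil 4/\epsilon\rceil$ with $\delta=\epsilon/2$ versus the paper's $k=1/\epsilon$, and a slightly more explicit case check in the semi-induced upper-bound lemma), so nothing further is needed.
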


Note that $\sinduce{G}\geq \induce{G}$; thus, Theorem~\ref{thm:semi-induced} implies that no polynomial-time algorithm can distinguish between the cases where $\induce{G}\geq |V(G)|^{1-\epsilon}$ and $\induce{G}\leq |V(G)|^{\epsilon}$ as well as the cases where $\sinduce{G}\geq |V(G)|^{1-\epsilon}$ and $\sinduce{G}\leq |V(G)|^{\epsilon}$. Theorem~\ref{thm:semi-induced} thus implies the hardness of both induced and semi-induced matching problems in bipartite graphs.


\paragraph{Proof of Theorem~\ref{thm:semi-induced}}
Our proof is based on a reduction from the maximum independent set problem. As discussed earlier, we start from the result of~\cite{Hastad96} instead of~\cite{KhotPonnuswami}, to keep the parameters simple.
\begin{theorem}[\cite{Hastad96}]
\label{thm: independent set hardness}
Let $\epsilon >0$ be any constant. Given a graph $G$,
unless \NP = \ZPP, no polynomial-time algorithm can distinguish between the following two
cases:
\squishlist
\item (\yi) 
  $\alpha(G)\geq |V(G)|^{1-\epsilon}$.
\item (\ni) 
$\alpha(G)\leq |V(G)|^{\epsilon}$.
\squishend
\end{theorem}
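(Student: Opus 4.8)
The plan is to obtain Theorem~\ref{thm: independent set hardness} from the $\PCP$ characterization of $\NP$, following H{\aa}stad's argument. Three ingredients are involved: (i) the FGLSS correspondence between probabilistically checkable proofs and the clique/independent-set gap; (ii) gap amplification by parallel repetition; and (iii) a $\PCP$ verifier for $\NP$ whose soundness error $s$ is as small as $2^{-(1-o(1))r}$ in its randomness $r$, while the number $f$ of ``free'' answer bits it reads is $o(r)$ --- equivalently, a verifier of vanishing \emph{amortized free-bit complexity}.

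First I would fix a two-prover one-round proof system for an $\NP$-complete problem (say $\maxsatf$) with perfect completeness, constant soundness $s_0<1$, logarithmic randomness, and a constant number of free bits; this is a standard consequence of the $\PCP$ theorem. Given any verifier $V$ tossing $r$ coins and reading at most $f$ free bits, the FGLSS reduction builds a graph $G_V$ on $N\le 2^{r+f}$ vertices --- one vertex per (random string, accepting local view) pair --- with an edge between two views iff they are mutually consistent. The key fact is that $\omega(G_V)$ equals $2^r$ times the optimal prover acceptance probability; passing to the complement, perfect completeness yields $\alpha(\overline{G_V})\ge 2^r$ and soundness $s$ yields $\alpha(\overline{G_V})\le s\cdot 2^r$.

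Next comes amplification. Running the basic system $k$ times in parallel drives its soundness down to $s_0^{\Omega(k)}$ (analyzed through Raz's parallel repetition theorem) while randomness and free-bit counts grow linearly in $k$; a randomness-efficient form of repetition lets one additionally push $r$ down to $(1+o(1))\log(1/s)$, and H{\aa}stad's long-code based construction keeps $f=o(\log(1/s))$. Plugging $N = 2^{r+f} = 2^{(1+o(1))r}$ and $\log(1/s)=(1-o(1))r$ into the FGLSS bounds, the yes-case independent set has size $2^r = N^{1-o(1)}\ge N^{1-\epsilon}$ while the no-case independent set has size at most $s\cdot 2^r = 2^{o(1)\cdot r} = N^{o(1)}\le N^{\epsilon}$, once $k$ is large enough. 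A randomized graph-product step in the style of Berman--Schnitger (building on Blum) can be inserted to tidy the exponents; it is the randomness here (and inside the $\PCP$ amplification) that makes the conclusion $\NP=\ZPP$ rather than $\P=\NP$.

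The main obstacle is ingredient (iii): building a $\PCP$ for $\NP$ with vanishing amortized free-bit complexity. This needs the long-code encoding together with a Fourier-analytic soundness analysis --- bounding the high-degree Fourier mass of an alleged long code and exploiting the consistency constraints of the parallel-repeated two-prover game --- carried out so economically in randomness that the $2^r$ factor does not erode the exponent back below $1-\epsilon$. The FGLSS step, parallel repetition as a black box, and the final graph product are comparatively routine. Since only the \emph{statement} of Theorem~\ref{thm: independent set hardness} is used in the sequel, we quote it from \cite{Hastad96} rather than reproving it.
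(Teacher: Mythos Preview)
The paper does not prove this theorem; it is stated with a citation to H{\aa}stad and used as a black box. Your proposal does the same in the end (``we quote it from \cite{Hastad96} rather than reproving it''), so the two agree. Your added sketch of H{\aa}stad's argument---FGLSS plus a $\PCP$ of vanishing amortized free-bit complexity built from the long code, with parallel repetition for amplification and a randomized graph product to clean up exponents---is an accurate summary of how the cited result is actually obtained, but none of it appears in or is needed by the present paper.
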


We start from a graph $G$ given by Theorem~\ref{thm: independent set hardness} and return as an output a graph
$\bipp[G^k]$, where $k=(1/\epsilon)$ and $G^k=G\vee G\vee \ldots \vee G$ (there are $k$ copies of $G$).
By construction, the number of vertices in $\bipp[G^k]$ is $n=2|V(G)|^k$.

If $G$ is a \yi, then we know that $\alpha(G)\geq |V(G)|^{1-\epsilon}$. We will use the following lemma, essentially due to~\cite{ElbassioniRRS09}, but since it is not explicitly stated in \cite{ElbassioniRRS09}, we shall provide the proof for completeness.

\parinya{I changed this since we don't need the upper bound part.}
\begin{lemma}[Implicit in \cite{ElbassioniRRS09}]\label{lem:elbassioni} For any graph $G$,
\[\alpha(G)\leq \induce{\bipp[G]} \,.\]
\end{lemma}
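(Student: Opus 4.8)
The plan is to exhibit, from any independent set of $G$, an induced matching of $\bipp[G]$ of the same size. Recall that $\bipp[G]=G\hstrong\vec K_2$ (or $G\hstrong K_2$ in the undirected case), so by the definition of $\hstrong$ with $V(K_2)=\{1,2\}$, the vertex set is $\{(u,1),(u,2):u\in V(G)\}$ and the edges are exactly the pairs $(u,1)(v,2)$ such that $uv\in E(G)$ or $u=v$. In particular, for every $u\in V(G)$ there is a ``vertical'' edge $(u,1)(u,2)$, and the graph is bipartite with parts $V_1=\{(u,1)\}$ and $V_2=\{(u,2)\}$.

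First I would fix a maximum independent set $S\subseteq V(G)$, so $|S|=\alpha(G)$, and set $\mset=\{(u,1)(u,2):u\in S\}$. Since the endpoints $(u,1),(u,2)$ are pairwise distinct as $u$ ranges over $S$, $\mset$ is a matching in $\bipp[G]$ of size $|S|$. It remains to check that $\mset$ is \emph{induced}, i.e., that no two of its edges are joined by an edge of $\bipp[G]$.

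For this, take two distinct edges $(u,1)(u,2)$ and $(v,1)(v,2)$ of $\mset$ with $u,v\in S$ and $u\neq v$. The four potential connecting edges are $(u,1)(v,1)$, $(u,2)(v,2)$, $(u,1)(v,2)$ and $(v,1)(u,2)$. The first two are non-edges because $\bipp[G]$ is bipartite with $V_1,V_2$ as above. For $(u,1)(v,2)$: by the edge rule it is present iff $uv\in E(G)$ or $u=v$; but $u\neq v$ and $u,v\in S$ with $S$ independent, so $uv\notin E(G)$, hence it is a non-edge. The symmetric argument handles $(v,1)(u,2)$. Thus $\mset$ is an induced matching, giving $\induce{\bipp[G]}\ge|\mset|=\alpha(G)$.

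I do not anticipate a real obstacle here: the only thing to be careful about is making the definition of $\hstrong$ with $K_2$ fully explicit (the ``$u=v$ or $uv\in E(G)$'' clause is what produces the vertical edges and is exactly what makes the vertical edges of an independent set mutually non-adjacent), and invoking bipartiteness to dispose of the same-side pairs. Note we only used one direction of the desired chain $\alpha(G)\le\induce{\bipp[G]}\le\induce{\bip[G]}+\alpha(G)$, as remarked in the excerpt; the upper bound is not needed for Lemma~\ref{lem:elbassioni} as restated.
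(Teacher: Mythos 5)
Your proposal is correct and follows essentially the same route as the paper: take an independent set $I$, form the matching of vertical edges $\{(u,1)(u,2):u\in I\}$ in $\bipp[G]$, and observe that any edge joining two such matching edges would force $uv\in E(G)$, contradicting independence. Your write-up is just a more explicit case check (bipartiteness for same-side pairs, the $\hstrong$ edge rule for cross pairs) of the argument the paper gives.
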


\begin{proof}

First, we show the lower bound of $\induce{\bipp[G]}$.
Let $I \subseteq V(G)$ be an independent set in $G$.
Clearly, the set of edges $\mset=\set{(u,1)(u,2):  u \in I}$ corresponding to $I$
is an induced matching in $\bip[G]$ since if there is $u, u'\in I$ such that $(u, 1)(u', 2)\in \mset$, then $uu'\in E(G)$, contradicting the fact that $I$ is an independent set.
Thus, $\bip[G]$ has an induced matching of size at least $\alpha(G)$. 
\end{proof}

We recall the following standard fact in graph theory. 

\begin{lemma}[Folklore; See e.g. \cite{TrevisanLecture}]\label{lem:independence number product} For any graphs $G$ and $H$, $\alpha(G\vee H)=\alpha(G)\alpha(H).$ In particular, $\alpha(G^k)=(\alpha(G))^k.$
\end{lemma}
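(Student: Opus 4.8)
The plan is to prove the identity $\alpha(G\vee H)=\alpha(G)\alpha(H)$ by establishing the two inequalities separately, and then to deduce the iterated statement by induction using associativity of $\vee$.

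For the lower bound $\alpha(G\vee H)\geq\alpha(G)\alpha(H)$, I would take a maximum independent set $I$ of $G$ and a maximum independent set $J$ of $H$ and check that the product set $I\times J=\{(u,a): u\in I,\ a\in J\}$ is independent in $G\vee H$. For two distinct vertices $(u,a),(v,b)\in I\times J$, either $u\neq v$, in which case $uv\notin E(G)$ since $I$ is independent, or $u=v$, in which case there is no edge $uv$ at all since $G$ has no loops; either way $uv\notin E(G)$, and symmetrically $ab\notin E(H)$. By the definition of the disjunctive product this means $(u,a)(v,b)\notin E(G\vee H)$, so $I\times J$ is independent and has size $|I|\,|J|=\alpha(G)\alpha(H)$.

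For the upper bound $\alpha(G\vee H)\leq\alpha(G)\alpha(H)$, let $S$ be a maximum independent set of $G\vee H$. For $u\in V(G)$ write $S_u=\{a\in V(H):(u,a)\in S\}$ for the fiber over $u$, and let $U=\{u\in V(G): S_u\neq\emptyset\}$ be the projection of $S$ onto $V(G)$. First I would argue that each $S_u$ is independent in $H$: if $a,b\in S_u$ are distinct, then $(u,a),(u,b)$ are distinct and non-adjacent in $G\vee H$, and since $uu\notin E(G)$ the non-adjacency forces $ab\notin E(H)$; hence $|S_u|\leq\alpha(H)$. Next I would argue that $U$ is independent in $G$: if $u,v\in U$ are distinct, pick any $a\in S_u$ and $b\in S_v$; then $(u,a),(v,b)$ are distinct and non-adjacent, which forces $uv\notin E(G)$; hence $|U|\leq\alpha(G)$. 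Combining, $|S|=\sum_{u\in U}|S_u|\leq |U|\cdot\alpha(H)\leq\alpha(G)\alpha(H)$, which gives the upper bound.

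Finally, for the ``in particular'' claim, I would observe that $\vee$ is associative, which is immediate from the edge rule: $(u,a,x)(v,b,y)$ is an edge of $(G\vee H)\vee K$ iff $uv\in E(G)$ or $ab\in E(H)$ or $xy\in E(K)$, a condition symmetric in the three factors. Hence $G^k$ is well-defined, and a straightforward induction on $k$ using the equality just proved gives $\alpha(G^k)=\alpha(G^{k-1}\vee G)=\alpha(G^{k-1})\,\alpha(G)=(\alpha(G))^k$. I do not expect any genuine obstacle; the only points that require a little care are the degenerate case $u=v$ in the lower bound and making sure the fiber/projection argument uses the ``or'' in the product definition in the right direction.
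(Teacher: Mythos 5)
Your proof is correct. The paper does not prove this lemma at all --- it is invoked as folklore with a pointer to Trevisan's lecture notes --- and your argument (product of independent sets for the lower bound, independence of the projection onto $V(G)$ and of each fiber $S_u$ for the upper bound, then associativity of $\vee$ and induction for the $k$-fold claim) is exactly the standard argument that citation stands in for, so there is nothing to add or correct.
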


It follows from the above lemmas that $\induce{\bipp[G^k]}\geq \alpha(G^k) \geq (\alpha(G))^k\geq |V(G)|^{k(1-\epsilon)}=\Omega(n^{1-\epsilon})$.

\medskip Now if $G$ is a \ni, we can invoke the next lemma, implicitly used in~\ref{lem:elbassioni}.
\begin{lemma}
\label{lemma: upper bounding the semi-induced matching for no-instance}
For any graph $G$,we have
\[
\sinduce{\bipp[G]}
     \leq \sinduce{\bipm[G]} + \alpha(G)\,.
\]
\end{lemma}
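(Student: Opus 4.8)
The plan is to bound $\sinduce{\bipp[G]}$ by splitting any semi-induced matching into the part that "lives in $\bipm[G]$" and the part that does not, exactly as in the proof sketch given just before for $\induce{\bipp[G]}$. Recall that $\bipp[G] = G\hstrong K_2$ differs from $\bipm[G]=G\times K_2$ only in the extra edges $(u,1)(u,2)$ for $u\in V(G)$. So first I would fix an optimal total order $\sigma$ on $V(\bipp[G])$ witnessing $\sinduce{\bipp[G]}=\sinducesigma{\bipp[G]}{\sigma}$, and let $\mset$ be a maximum $\sigma$-semi-induced matching. Write $\mset = \mset_1\cup\mset_2$, where $\mset_1=\mset\cap E(\bipm[G])$ consists of the edges already present in $\bipm[G]$, and $\mset_2=\mset\setminus\mset_1$ consists of the "diagonal" edges, i.e.\ edges of the form $(u,1)(u,2)$. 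Then $|\mset|\le|\mset_1|+|\mset_2|$, and it suffices to show $|\mset_1|\le\sinduce{\bipm[G]}$ and $|\mset_2|\le\alpha(G)$.

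For the first bound, $\mset_1$ is a set of edges of $\bipm[G]$, and it is still $\sigma$-semi-induced when we restrict $\sigma$ to $V(\bipm[G])=V(\bipp[G])$: the semi-induced condition for a pair of edges in $\mset_1$ only forbids certain edges of $\bipp[G]$, and since those forbidden edges are non-diagonal (both endpoints come from the two sides in the "crossing" pattern dictated by Definition~\ref{def:semi induced number}), they lie in $E(\bipm[G])$ as well, so the condition still holds in $\bipm[G]$. Here I should double-check the one subtlety: in Definition~\ref{def:semi induced number} the forbidden edges for a pair $uu',vv'$ with $\sigma(u)<\sigma(u')$, $\sigma(u)<\sigma(v)<\sigma(v')$ are $uv'$ and $uv$; I need that these are never diagonal edges of $\bipp[G]$, which follows because $u,u'$ are endpoints of a (non-diagonal) edge $uu'\in\mset_1$ and hence lie on opposite sides of the bipartition, as do $v,v'$, so $u$ and $v$ (resp.\ $u$ and $v'$) cannot be the two copies of a single vertex provided $u,v$ lie on the same side --- one must be careful about which side each endpoint lies on, but in all cases the forbidden pairs are genuine $\bipm[G]$-type (non-diagonal) potential edges. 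Thus $\mset_1$ is a $\sigma$-semi-induced matching of $\bipm[G]$, giving $|\mset_1|\le\sinducesigma{\bipm[G]}{\sigma}\le\sinduce{\bipm[G]}$.

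For the second bound, from each diagonal edge $(u,1)(u,2)\in\mset_2$ extract the vertex $u\in V(G)$; let $I$ be the resulting set. I claim $I$ is independent in $G$. If not, take $u\ne u'\in I$ with $uu'\in E(G)$; then $(u,1)(u',2)\in E(\bipp[G])$. Now order the two diagonal edges by $\sigma$: say the endpoints of the relevant "$uv$/$uv'$" pattern in Definition~\ref{def:semi induced number} are determined by the relative $\sigma$-values of $(u,1),(u,2),(u',1),(u',2)$; in every arrangement, the presence of the edge $(u,1)(u',2)$ (or $(u',1)(u,2)$, whichever matches the forbidden pattern) contradicts the $\sigma$-semi-induced property of $\mset$. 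Since $\mset_2$ is a matching (all its edges are vertex-disjoint diagonals), $|\mset_2|=|I|\le\alpha(G)$. Combining, $\sinduce{\bipp[G]}=|\mset|\le|\mset_1|+|\mset_2|\le\sinduce{\bipm[G]}+\alpha(G)$, as desired.

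The main obstacle I anticipate is the careful bookkeeping in the second paragraph: verifying that restricting a $\sigma$-semi-induced matching to $\bipm[G]$-edges really stays semi-induced, and in the third paragraph that a pair of diagonal edges over adjacent vertices always triggers the forbidden-edge clause of Definition~\ref{def:semi induced number} regardless of how $\sigma$ orders the four vertices involved. Both reduce to a short finite case analysis on the possible $\sigma$-orderings of the (at most) four vertices in play, using the bipartiteness of $\bipp[G]$ to rule out the degenerate cases; this is routine but is the only place where something could go subtly wrong.
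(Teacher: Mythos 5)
Your proof is correct and follows essentially the same route as the paper: split $\mset$ into the edges lying in $\bipm[G]$ and the diagonal edges, bound the first part by $\sinducesigma{\bipm[G]}{\sigma}$ and the second by $\alpha(G)$ via the independence of the vertices carrying diagonal edges. One simplification worth noting: the bound $|\mset_1|\leq\sinducesigma{\bipm[G]}{\sigma}$ needs none of your bookkeeping about whether the forbidden pairs are diagonal (that side-claim is in fact shaky as stated) --- since $E(\bipm[G])\subseteq E(\bipp[G])$, every edge forbidden by the semi-induced condition is already absent from $E(\bipp[G])$ and hence from $E(\bipm[G])$, which is exactly the paper's one-line justification.
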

\begin{proof}
Let $\sigma$ be any total order on $V(G)$. Consider any $\sigma$-semi-induced matching $\mset$ in
$\bipp[G]$.
We may write $\mset=\mset_1\cup \mset_2$, where $\mset_1$ consists of
edges that are also present in $\bipm[G]$ and
$\mset_2=\mset\setminus\mset_1$.
It can be seen that $|\mset_1|\leq\sinducesigma{\bipm[G]}{\sigma}$
and $|\mset_2|\leq\alpha(G)$.
The former inequality is because $E(\bipm[G])\subseteq E(\bipp[G])$. The latter inequality is because we can define an independent set in $G$ by
choosing vertices corresponding to edges in $\mset_2$ (since every edge in $\mset_2$ is in the form $(v, 1)(v, 2)$ for some $v\in V(G)$). Since this is true for all $\sigma$, the lemma follows.

\end{proof}

\begin{corollary}[Immediate from Theorem~\ref{theorem:subadditive}]
\label{cor: main for semi induced matching}
For any integer $k$ and graph $G$, $\sinduce{\bipm[G^k]} \leq \sinduce{\bipm[G]}k\,.$
\end{corollary}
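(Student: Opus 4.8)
The plan is a straightforward induction on $k$, using inequality~\eqref{eq:subadditive semi-induce} of Theorem~\ref{theorem:subadditive} with the choice $J=K_2$. Recall $\bipm[\cdot]=(\cdot)\times K_2$ and $G^k=G\vee G\vee\cdots\vee G$. The base case $k=1$ is the identity $\sinduce{\bipm[G]}=\sinduce{\bipm[G]}\cdot 1$. For the inductive step, the one point that needs to be observed (and it is the only thing resembling an obstacle, albeit a routine one) is that the disjunctive product $\vee$ is associative, so that $G^k$ and $G^{k-1}\vee G$ denote the same graph under the obvious identification of vertex sets $V(G)^{k}=V(G)^{k-1}\times V(G)$. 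Granting this, $G^k = G^{k-1}\vee G$.

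Now apply~\eqref{eq:subadditive semi-induce} with the roles ``$G$'' $:=G^{k-1}$, ``$H$'' $:=G$ and ``$J$'' $:=K_2$:
\[
\sinduce{\bipm[G^k]} \;=\; \sinduce{(G^{k-1}\vee G)\times K_2} \;\leq\; \sinduce{G^{k-1}\times K_2}+\sinduce{G\times K_2} \;=\; \sinduce{\bipm[G^{k-1}]}+\sinduce{\bipm[G]}.
\]
By the induction hypothesis, $\sinduce{\bipm[G^{k-1}]}\leq \sinduce{\bipm[G]}(k-1)$, so
\[
\sinduce{\bipm[G^k]} \;\leq\; \sinduce{\bipm[G]}(k-1)+\sinduce{\bipm[G]} \;=\; \sinduce{\bipm[G]}\,k,
\]
which completes the induction and hence the proof. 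No calculation beyond this is needed; the whole content is the repeated application of the subadditivity inequality, exactly paralleling the argument sketched for $\induce{\cdot}$ in Section~\ref{sec:intro proof} (with $G^k$ there being built via the disjunctive product as well). I would only add a sentence justifying the associativity of $\vee$ directly from its edge-set definition: $(u,a)(v,b)\in E((G\vee H)\vee F)$ iff $uv\in E(G)$ or $ab\in E(H)$ or (the $F$-coordinates are adjacent), which is symmetric in how the three factors are grouped.
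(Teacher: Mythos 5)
Your proof is correct and is essentially the paper's argument: the paper declares the corollary ``immediate'' from Theorem~\ref{theorem:subadditive}, and the intended derivation is exactly the iterated application of Eq.\eqref{eq:subadditive semi-induce} with $J=K_2$ to $G^k=G^{k-1}\vee G$, as sketched for $\induce{\cdot}$ in Section~\ref{sec:intro proof}. Your added remark on the associativity of $\vee$ (up to the obvious relabeling of vertex tuples) is a harmless, correct tightening of a step the paper leaves implicit.
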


By applying Lemma~\ref{lemma: upper bounding the semi-induced matching for no-instance}, we have that
$\sinduce{\bipp[G^k]}\leq
 \sinduce{\bipm[G^k]} + \alpha(G)^k$, and
by invoking Corollary~\ref{cor: main for semi induced matching},
we have
$\sinduce{\bipp[G^k]}
  \leq \sinduce{\bipm[G]}k + \alpha(G)^k$.
Then we plug in $k=1/\epsilon$ and $\alpha(G) \leq |V(G)|^{\epsilon}$
and conclude that
$\sinduce{\bipp[G^k]} \leq
O(|V(G)|)\leq{n^{2\epsilon}}$ when $G$ is a \ni.
This completes the proof of Theorem~\ref{thm:semi-induced}.

To get a better hardness result, we start the reduction from Theorem~1 in~\cite{KhotPonnuswami} using the value of $k = \log^{\gamma} |V(G)|$ instead of $(1/\epsilon)$ (where $\gamma$ is as in Theorem~1 in~\cite{KhotPonnuswami}). This will give the hardness of $n/2^{\log^{3/4+ \gamma} n}$ under the assumption that $\NP \not \subseteq {\sf BPTIME}(n^{\poly \log n})$.


\subsection{Induced and Semi-induced Matchings on $d$-Regular Graphs.}
\label{sec:indmatching-d-regular}

Here we show the hardness result of the induced matching
problem on $d$-regular bipartite graphs.
For this, we need an instance $G$ of the maximum independent set
problem such that $G$ is $d$-regular.
It can be seen that the following hardness result follows
from Trevisan's construction in~\cite{Trevisan01} on the hardness of
the maximum independent set problem on bounded degree graphs.
As it is not guaranteed that an instance $G$ obtained from Trevisan's
construction has regular degree, we have to slightly modify the
construction in the same way as in~\cite{ChalermsookPricing} and~\cite{AndrewsCGKTZ10}.

\begin{theorem}[\cite{Trevisan01}, modified from Theorem~4 in \cite{ChalermsookPricing}]
\label{thm: bounded-degree MIS}
Let $\lambda:\mathbb{N}\rightarrow\mathbb{N}$ be any function.
Assuming that $\NP \not\subseteq {\sf ZPTIME}(n^{O(\lambda(n))})$, there is no polynomial-time algorithm that can solve the following problem.

For any constant $\epsilon >0$ and any integer $q$, given a graph $G$ of size $q^{O(\lambda(q))}$ such that all vertices have degree $\Delta=2^{O(\lambda(q))}$, the goal is to distinguish between the following two cases:
\squishlist
  \item ({\sc Yes-Instance}) $\alpha(G)\geq |V(G)|/\Delta^{\epsilon}$.
  \item ({\sc No-Instance}) $\alpha(G)\leq |V(G)|/\Delta^{1-\epsilon}$.
\squishend
\end{theorem}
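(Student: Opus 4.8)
The plan is to obtain this statement from Trevisan's hardness of \MAXIS\ on bounded-degree graphs, followed by a standard degree-regularization gadget, exactly as indicated in \cite{ChalermsookPricing,AndrewsCGKTZ10}. Concretely, Theorem~4 of \cite{ChalermsookPricing} (adapting \cite{Trevisan01}) already produces, from a hard \sat-instance of size $\approx q$ and in time $q^{O(\lambda(q))}$, a graph $G_0$ with $|V(G_0)| = q^{O(\lambda(q))}$ whose \emph{maximum} degree is at most $\Delta = 2^{O(\lambda(q))}$, such that $\alpha(G_0) \ge |V(G_0)|/\Delta^{\epsilon}$ in the \yi\ case and $\alpha(G_0) \le |V(G_0)|/\Delta^{1-\epsilon}$ in the \ni\ case, under the assumption $\NP \not\subseteq {\sf ZPTIME}(n^{O(\lambda(n))})$ (the quasi-polynomial blow-up and the zero-error assumption coming from the H{\aa}stad-style randomized PCP amplification needed to drive $\Delta$ above any constant). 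The only discrepancy with the claimed statement is that $G_0$ need not be regular.

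To fix this, I would apply the following regularization. Set $m=\Delta$ and take $2m$ disjoint copies of $G_0$, partitioned into two groups $A,B$ of $m$ copies each. For each vertex $v$ of $G_0$ with $\deg_{G_0}(v)=d_v$, let its deficiency be $k_v=\Delta-d_v\le\Delta=m$, and add between the $m$ copies of $v$ lying in $A$ and the $m$ copies lying in $B$ an arbitrary $k_v$-regular bipartite graph (such a graph exists because each side has $m\ge k_v$ vertices, and bipartite regular graphs carry no parity obstruction). In the resulting graph $G$ every vertex then has degree $d_v+k_v=\Delta$, so $G$ is $\Delta$-regular, with $|V(G)|=2m\,|V(G_0)| = 2\Delta\,|V(G_0)| = q^{O(\lambda(q))}$, and $G$ is produced in time polynomial in this size, hence $q^{O(\lambda(q))}$.

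It remains to check that the independence-number gap survives. On the one hand, the new edges only join an $A$-copy of a vertex to a $B$-copy of the \emph{same} vertex, so there are no edges at all between two $A$-copies; placing a fixed maximum independent set of $G_0$ inside every $A$-copy therefore yields an independent set of $G$, giving $\alpha(G)\ge m\,\alpha(G_0)$. On the other hand, the intersection of any independent set of $G$ with a single copy is independent in that copy of $G_0$, so $\alpha(G)\le 2m\,\alpha(G_0)$. Plugging in the two bounds for $\alpha(G_0)$ gives $\alpha(G)\ge |V(G)|/(2\Delta^{\epsilon})$ in the \yi\ case and $\alpha(G)\le |V(G)|/\Delta^{1-\epsilon}$ in the \ni\ case; running the preceding step with parameter $\epsilon/2$ in place of $\epsilon$ (and using that $\Delta$ is large) absorbs the factor of $2$, yielding exactly the form in the statement.

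The main obstacle is entirely in the bookkeeping of the first step rather than in the gadget: one has to trace through Trevisan's PCP-and-degree-reduction tradeoff to be sure its parameters can be packaged as degree $2^{O(\lambda(q))}$, size $q^{O(\lambda(q))}$, running time $q^{O(\lambda(q))}$, under the assumption $\NP\not\subseteq{\sf ZPTIME}(n^{O(\lambda(n))})$, for an arbitrary growth rate $\lambda$ --- the regularization itself, and the sandwiching $m\,\alpha(G_0)\le\alpha(G)\le 2m\,\alpha(G_0)$ that makes both the \yi\ and \ni\ sides go through, are routine.
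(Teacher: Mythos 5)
Your proposal takes essentially the same route as the paper: the paper gives no proof of this theorem, importing it from \cite{Trevisan01} via Theorem~4 of \cite{ChalermsookPricing} and merely asserting that the (possibly non-regular) instance is ``slightly modified'' as in \cite{ChalermsookPricing,AndrewsCGKTZ10}, and your explicit two-group copy gadget is a correct instantiation of exactly that modification --- every copy of a vertex $v$ ends with degree $d_v+k_v=\Delta$, the sandwich $m\,\alpha(G_0)\le\alpha(G)\le 2m\,\alpha(G_0)$ with $|V(G)|=2\Delta|V(G_0)|$ preserves the \yi/\ni gap, and the stray factor $2$ is absorbed by rescaling $\epsilon$, all within size and time $q^{O(\lambda(q))}$. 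So the argument is sound and coincides with the paper's (un-spelled-out) plan.
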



This theorem gives the hardness result of the bounded degree version of the maximum independent set problem, and it allows us to use the value of $\lambda$ to specify the degree of vertices we want. For instance, if we use $\lambda(q) =c$ for some constant $c$, then we get the hardness of the constant-degree maximum independent set problem, and the hardness assumption is $\NP \not\subseteq \ZPP$. But, if we choose $\lambda(q) = O(\log \log q)$, then we have the hardness for the logarithmic-degree maximum independent set problem with the hardness assumption of $\NP \not\subseteq {\sf ZPTIME}(n^{O(\log \log n)})$.

We will need the following lemma which will also be used later to prove the hardness of pricing problems.

\begin{lemma}
\label{lemma: semi-induced matching in terms of graph products}
Let $\Delta:\mathbb{N}\rightarrow \mathbb{N}$. Assuming that  $\NP \nsubseteq {\sf ZPTIME}(n^{O(\log
  \Delta(n))})$, there is no polynomial time algorithm that can solve the following problem: For any $\epsilon$ and integer $q$, given a $\Delta(q)$-regular graph $G$ on
$O(q^{O(\log\Delta(q))})$ vertices and an empty graph $H$ on
$\Delta(q)$ vertices, the goal is to distinguish between the following cases:
\squishlist
\item (\yi) $\induce{\bipp[G \vee H]}\geq |V(G)| (\Delta(q))^{1-\epsilon}$

\item (\ni) $\sinduce{\bipp[G \vee H]} \leq |V(G)| (\Delta(q))^{\epsilon}$.
\squishend
%
\end{lemma}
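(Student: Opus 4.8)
The plan is to reduce from the bounded-degree maximum independent set problem of Theorem~\ref{thm: bounded-degree MIS}, using an empty-graph gadget and the subadditivity of the semi-induced matching number. Given $\epsilon>0$ and an integer $q$, I would invoke Theorem~\ref{thm: bounded-degree MIS} with accuracy parameter $\epsilon/2$ to obtain the $\Delta$-regular graph $G$ on $q^{O(\log\Delta(q))}$ vertices (where $\Delta=\Delta(q)$, so that $\log\Delta(\cdot)$ plays the role of $\lambda(\cdot)$), for which it is hard under $\NP\not\subseteq{\sf ZPTIME}(n^{O(\log\Delta(n))})$ to distinguish $\alpha(G)\ge|V(G)|/\Delta^{\epsilon/2}$ from $\alpha(G)\le|V(G)|/\Delta^{1-\epsilon/2}$. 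Let $H$ be the empty graph on $\Delta$ vertices; the reduction simply outputs the pair $(G,H)$, computable in time polynomial in $|V(G)|$. The key features of the gadget are that $\alpha(H)=\Delta$ and that $\bipm[H]=H\times K_2$ is edgeless, so $\sinduce{\bipm[H]}=0$.

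For the completeness direction (\yi): if $\alpha(G)\ge|V(G)|/\Delta^{\epsilon/2}$, then by Lemma~\ref{lem:independence number product} we get $\alpha(G\vee H)=\alpha(G)\alpha(H)=\Delta\cdot\alpha(G)\ge|V(G)|\Delta^{1-\epsilon/2}\ge|V(G)|\Delta^{1-\epsilon}$, and Lemma~\ref{lem:elbassioni} applied to $G\vee H$ yields $\induce{\bipp[G\vee H]}\ge\alpha(G\vee H)\ge|V(G)|\Delta^{1-\epsilon}$, as needed.

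For the soundness direction (\ni): suppose $\alpha(G)\le|V(G)|/\Delta^{1-\epsilon/2}$, so that $\alpha(G\vee H)=\Delta\cdot\alpha(G)\le|V(G)|\Delta^{\epsilon/2}$. Applying Eq.\eqref{eq:subadditive semi-induce} with $J=K_2$ gives $\sinduce{\bipm[G\vee H]}\le\sinduce{\bipm[G]}+\sinduce{\bipm[H]}=\sinduce{\bipm[G]}$. Since $\bipm[G]=G\times K_2$ is bipartite with both sides of size $|V(G)|$, every matching — hence every semi-induced matching — has at most $|V(G)|$ edges, so $\sinduce{\bipm[G]}\le|V(G)|$. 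Finally, Lemma~\ref{lemma: upper bounding the semi-induced matching for no-instance} applied to $G\vee H$ gives
\[\sinduce{\bipp[G\vee H]}\le\sinduce{\bipm[G\vee H]}+\alpha(G\vee H)\le|V(G)|+|V(G)|\Delta^{\epsilon/2}\le|V(G)|\Delta^{\epsilon},\]
the last inequality holding once $\Delta$ exceeds a constant depending on $\epsilon$ (for smaller $\Delta$ the claimed gap is trivial). Combining the two cases with the polynomial-time reduction and the assumption of Theorem~\ref{thm: bounded-degree MIS} proves the lemma.

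I do not expect a genuinely hard step: the proof is an assembly of pieces already in the excerpt. The points needing care are (i) picking the gadget $H$ so that $\alpha(H)$ exactly rescales the $\Delta^{1-\epsilon}$ gap while $\bipm[H]$ contributes nothing to the semi-induced matching number; (ii) noticing that the crude bound $\sinduce{\bipm[G]}\le|V(G)|$ already suffices, because a full $\Delta^{\epsilon}$ factor of slack is available and no structural control on semi-induced matchings of $\bipm[G]$ is required; and (iii) the bookkeeping of the constant and the $\epsilon\mapsto\epsilon/2$ rescaling, together with checking that the hardness hypothesis lines up with Theorem~\ref{thm: bounded-degree MIS} after identifying $\lambda$ with $\log\Delta$.
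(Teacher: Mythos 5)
Your proposal is correct and follows essentially the same route as the paper: the same construction ($G$ from Theorem~\ref{thm: bounded-degree MIS} with $\lambda=O(\log\Delta)$, $H$ an empty graph on $\Delta(q)$ vertices, output $\bipp[G\vee H]$), completeness via Lemma~\ref{lem:elbassioni} and multiplicativity of $\alpha$ under $\vee$, and soundness via Lemma~\ref{lemma: upper bounding the semi-induced matching for no-instance} plus Eq.\eqref{eq:subadditive semi-induce} with the crude bound $\sinduce{\bipm[G]}\leq|V(G)|$. If anything, your $\epsilon\mapsto\epsilon/2$ rescaling and the observation that $\sinduce{\bipm[H]}=0$ make the constants slightly cleaner than the paper's ``$O(|V(G)|\Delta^{\epsilon})$'' bound.
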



Note that we use $\Delta(q) = c$ for some constant $c$ in the proof of Theorem~\ref{thm: bounded degree induced matching}, while we choose $\Delta(q) =\poly \log q$ when proving the hardness of pricing problems in the next section.
From the lemma, the hardness of the induced matching problem on $d$-regular graphs follows immediately.\danupon{Say what assumptions we get from these parameters and say that they are natural.}

\begin{proof}[Of Lemma~\ref{lemma: semi-induced matching in terms of graph products}]
Let $\epsilon>0$ be a constant. Let $G$ be a $\Delta(q)$-regular graph obtained from Theorem~\ref{thm: bounded-degree MIS} (choosing $\lambda(q) = O(\log \Delta(q))$ so that we get the graph $G$ of degree $\Delta(q)$ and $|V(G)|= q^{O(\log \Delta(q))}$.)
We output a graph $\bipp[G \vee H]$, where
$H=\bar K_{\Delta(q)}$ is an empty graph on $\Delta(q)$ vertices. This finishes the construction.

Notice that the number of vertices in $\bipp[G \vee H]$ is $n = 2|V(G)||V(H)| = 2 |V(G)| \Delta(q)$. In the \yi, by Lemma~\ref{lem:elbassioni} and \ref{lem:independence number product}, we have that $\induce{\bipp[G \vee H]} \geq \alpha(G\vee H)\geq \alpha(G) \alpha(H) \geq |V(G)| \Delta(q)^{1-\epsilon}$ because $\alpha(G) \geq |V(G)|/\Delta(q)^{\epsilon}$ in the \yi.

In the \ni, we have $\sinduce{\bipp[G \vee H]} \leq \alpha(G \vee H) + \sinduce{\bip[G \vee H]}$ (by Lemma~\ref{lemma: upper bounding the semi-induced matching for no-instance}).
%
%
The first term is at most $\alpha(G) \alpha(H) \leq |V(G)|\Delta(q)^{\epsilon}$ (by Lemma~\ref{lem:independence number product}). The second term is at most, by Eq.\eqref{eq:subadditive semi-induce} in Theorem~\ref{theorem:main}, $\sinduce{\bip[G]} + \sinduce{\bip[H]}\leq |V(G)| + \Delta(q) \leq 2|V(G)|$, for our choice of $\Delta(q)$. Therefore, in the \ni, the value of the solution is at most $O(|V(G)| \Delta(q)^{\epsilon})$.

Since $\lambda(q) = O(\log \Delta(q))$, we get the complexity assumption of $\NP \not\subseteq {\sf ZPTIME}(n^{O(\log \Delta(n))})$.

\end{proof}

\begin{theorem}
\label{thm: bounded degree induced matching}
Let $d \leq \poly\log n$ be any sufficiently large number. For any constant $\epsilon >0$, unless $\NP \subseteq {\sf ZPTIME}(n^{\poly \log n})$, it is hard to approximate the induced matching problem on $d$-regular graphs to within a factor of $d^{1/2 -\epsilon}$.
\end{theorem}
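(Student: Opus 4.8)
The plan is to obtain the theorem as an almost immediate consequence of Lemma~\ref{lemma: semi-induced matching in terms of graph products}: I will instantiate that lemma with an appropriate choice of the function $\Delta$, observe that the graph it produces, $\bipp[G\vee H]$, is regular of the prescribed degree $d$ (and bipartite, since $\bipp[X]=X\hstrong K_2$ is always bipartite, cf.\ Section~\ref{sec:bip}), and then read off the inapproximability factor from the gap in the lemma. Concretely, given a sufficiently large $d$ with $d\le\poly\log n$, I apply Lemma~\ref{lemma: semi-induced matching in terms of graph products} with $\Delta$ chosen so that $\Delta(q)^2+1=d$, so $\Delta(q)=\sqrt{d-1}$; if $d-1$ is not a perfect square, take $\Delta(q)=\lfloor\sqrt{d-1}\rfloor$ and pad the construction to make it exactly $d$-regular, which only affects constants. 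The lemma's accuracy parameter is set to some small $\epsilon'>0$ depending on $\epsilon$.

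The one genuine computation is verifying the degree of $\bipp[G\vee H]$. Since the graph $H=\bar K_{\Delta(q)}$ in the lemma has no edges, a vertex $(u,a)$ of $G\vee H$ is adjacent precisely to those $(v,b)$ with $uv\in E(G)$, so its degree is $\deg_G(u)\cdot|V(H)|=\Delta(q)\cdot\Delta(q)=\Delta(q)^2$; hence $G\vee H$ is $\Delta(q)^2$-regular. Forming $X\hstrong K_2$ then raises every degree by exactly one, because in $X\hstrong K_2$ a vertex $(x,i)$ is joined to $(y,3-i)$ exactly when $xy\in E(X)$ \emph{or} $x=y$ (and there are no edges within a side). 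Therefore $\bipp[G\vee H]$ is $(\Delta(q)^2+1)$-regular, i.e.\ $d$-regular and bipartite, as wanted.

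It remains to translate the gap. In the \yi the lemma gives $\induce{\bipp[G\vee H]}\ge |V(G)|\,\Delta(q)^{1-\epsilon'}$, and in the \ni it gives $\sinduce{\bipp[G\vee H]}\le |V(G)|\,\Delta(q)^{\epsilon'}$; since $\induce{\cdot}\le\sinduce{\cdot}$ always, the same upper bound holds for $\induce{\bipp[G\vee H]}$ in the \ni. Thus an algorithm approximating the (semi-)induced matching number within any factor smaller than $\Delta(q)^{1-2\epsilon'}$ would distinguish the two cases, so one may take the threshold value $Z=|V(G)|\,\Delta(q)^{1-\epsilon'}$. Because $\Delta(q)=\Theta(\sqrt d)$, for $d$ large enough and $\epsilon'$ small enough we have $\Delta(q)^{1-2\epsilon'}\ge d^{1/2-\epsilon}$, giving the claimed $d^{1/2-\epsilon}$ hardness for both the induced and the semi-induced matching problems on $d$-regular bipartite graphs. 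For the hypothesis: the output instance has $n=2|V(G)|\,\Delta(q)$ vertices, and $d\le\poly\log n$ forces $\Delta(q)\le\poly\log n$, hence $\log\Delta(n)=O(\poly\log n)$ and ${\sf ZPTIME}(n^{O(\log\Delta(n))})\subseteq{\sf ZPTIME}(n^{\poly\log n})$, so the assumption of Lemma~\ref{lemma: semi-induced matching in terms of graph products} is implied by $\NP\not\subseteq{\sf ZPTIME}(n^{\poly\log n})$, exactly the hypothesis of the theorem.

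There is no real conceptual obstacle here: Lemma~\ref{lemma: semi-induced matching in terms of graph products} already packages the subadditivity inequalities (Theorem~\ref{theorem:subadditive}) and the bounded-degree independent-set hardness (Theorem~\ref{thm: bounded-degree MIS}) into exactly the form needed. The only points requiring care are bookkeeping: making the degree come out to be exactly $d$ rather than $\Theta(d)$ (the padding remark above), and checking that when $\Delta(q)$ is pushed up toward $\poly\log n$ the size and degree parameters stay consistent with the stated complexity assumption. Both are routine.
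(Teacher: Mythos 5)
Your proposal is correct and follows essentially the same route as the paper: instantiate Lemma~\ref{lemma: semi-induced matching in terms of graph products} with an empty $H$ on $\Delta$ vertices, observe that $\bipp[G\vee H]$ is $(\Delta^2+1)$-regular and bipartite, translate the $\induce$/$\sinduce$ gap $\Delta^{1-2\epsilon}$ into $d^{1/2-O(\epsilon)}$, and note that $\Delta\le\poly\log$ keeps the reduction and complexity assumption within ${\sf ZPTIME}(n^{\poly\log n})$. The only difference is your explicit padding remark to hit an arbitrary target degree $d$ exactly, a routine detail the paper's proof leaves implicit.
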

\begin{proof}
From Lemma~\ref{lemma: semi-induced matching in terms of graph products}, observe that the degree of each vertex in $\bipp[G \vee H]$ is $d = \Delta^2 +1$: each vertex $(v,a,1) \in \bipp[G \vee H]$ is connected to $(v,a,2)$ and other vertices $(u,b,2)$ for all $u \in V(G)$ and $b \in V(H)$. The gap between \yi and \ni is $\Delta^{1-2\epsilon} \geq d^{1/2 - O(\epsilon)}$. We use $\Delta \leq O(\poly \log n)$, so the running time of the reduction is $n^{O(\log \log n)}$.
\end{proof}


\subsection{Poset Dimension.}
We now prove the $n^{1-\epsilon}$-hardness of approximating poset
dimension. Note that here we use  $\bip[G]=G \times \vec{K}_2$ and $\bipp[G]=G\hstrong \vec{K}_2$.
We denote by $G^k=G\cdot G\cdot \,\ldots\, \cdot G$ where $G$ appears $k$ times.

\paragraph{Construction} We will need the following hardness result of the graph coloring problem, due to Feige and Kilian~\cite{FeigeK98}.
(In fact, there is a stronger hardness  result by Khot and Ponnuswami~\cite{KhotPonnuswami}, but we use the result of Feige and Kilian to keep the presentation simple.)

\begin{theorem}[\cite{FeigeK98}]
\label{thm: coloring hardness}
Let $\epsilon >0$ be any constant. Given a graph $G$, unless \NP = \ZPP, no polynomial-time algorithm can distinguish between the following two cases:
\squishlist
  \item (\yi) $\chi(G) \leq |V(G)|^{\epsilon}$.

  \item (\ni) $\chi(G) \geq |V(G)|^{1-\epsilon}$.
\squishend
\end{theorem}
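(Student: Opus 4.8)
\medskip\noindent\textbf{Proof proposal.} This statement is the chromatic-number inapproximability theorem of Feige and Kilian~\cite{FeigeK98}, quoted here as a black box; the plan below is how one would reprove it. The one easy direction is the \ni\ bound: for every graph $G$ one has $\chi(G) \ge |V(G)|/\alpha(G)$, so any hardness of distinguishing $\alpha(G) \ge |V(G)|^{1-\epsilon}$ from $\alpha(G) \le |V(G)|^{\epsilon}$ (e.g.\ Theorem~\ref{thm: independent set hardness}) immediately makes it hard to refute $\chi(G) \ge |V(G)|^{1-\epsilon}$ in the \ni\ case. The real content is the \yi\ direction, since a generic hard independent-set instance with $\alpha(G) \ge |V(G)|^{1-\epsilon}$ need not have small chromatic number: one must reduce from a \emph{structured} source in which the \yi\ witness also exhibits an explicit small proper colouring. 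I would therefore open up the PCP machinery rather than use Theorem~\ref{thm: independent set hardness} as a black box.

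The skeleton is the standard one for strong PCP-style hardness. First, fix $\epsilon$ and apply Raz's parallel-repetition theorem to the basic two-prover one-round proof system for \sat, obtaining a system with completeness $1$, soundness $s \le 2^{-\Theta(t)}$, and answer alphabets of size $2^{O(t)}$, where $t = t(\epsilon)$ is chosen at the end. Second, attach to this proof system the Feige--Kilian ``covering'' graph $G$: its vertices carry a verifier random string together with an element of a combinatorial hiding gadget built on a single prover's answer set, and its edges are arranged so that (i) a satisfying assignment yields, via the honest prover strategy, a proper colouring of $G$ whose colour classes are indexed by consistent answer patterns, hence using a number of colours bounded by $2^{O(t)}$, which is at most $|V(G)|^{\epsilon}$ for $|V(G)|$ large; and (ii) conversely, any proper colouring of $G$ with fewer than $|V(G)|^{1-\epsilon}$ colours has a colour class (an independent set) large enough that the hiding gadget forces it to ``decode'' to a prover strategy accepted with probability exceeding $s$, contradicting soundness. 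The hiding gadget is chosen at random from a family that has the required dispersing/expansion property with high probability, and the property can be verified in polynomial time once the gadget is drawn; this is the only randomized step, and it is why the conclusion is stated under $\NP=\ZPP$ rather than $\P=\NP$.

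Concretely the steps are: (1) pick $\epsilon$, set the repetition parameter $t=t(\epsilon)$, and run parallel repetition to get the two-prover system above; (2) build the covering graph $G$ from the proof system and a randomly drawn, then verified, hiding gadget, tracking the vertex count $n=|V(G)|$ as a polynomial in the original formula size; (3) completeness: from a satisfying assignment produce the honest proof and read off a proper colouring of $G$ with at most $n^{\epsilon}$ colours (property (i)); (4) soundness: from a proper colouring of $G$ with fewer than $n^{1-\epsilon}$ colours extract a colour class of size $> n^{\epsilon}$ and decode it into a prover strategy with acceptance probability $> s$, a contradiction (property (ii)); (5) conclude: a polynomial-time algorithm distinguishing the two cases, composed with the randomized reduction, decides \sat\ with zero error in expected polynomial time, i.e.\ $\sat\in\ZPP$. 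I expect the main obstacle to be the quantitative interplay between properties (i) and (ii): the hiding gadget must simultaneously keep the \yi\ colour budget down to $n^{\epsilon}$ and drive the \ni\ decoding threshold down to $n^{\epsilon}$, and a naive construction blows up the colour budget as fast as it shrinks the threshold, yielding only a $\sqrt{n}$-type gap. The delicate part of Feige--Kilian is to engineer a covering in which the colour budget grows merely polylogarithmically while the soundness error continues to decay, so that taking $t$ large enough separates the two thresholds by the full $n^{1-\epsilon}$ factor; getting this asymmetry (and the efficient certification of the random gadget) right is where essentially all the work lies, the rest being routine FGLSS/parallel-repetition bookkeeping together with $\chi(G)\ge |V(G)|/\alpha(G)$.
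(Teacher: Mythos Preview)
The paper does not prove this theorem; it is quoted verbatim from Feige and Kilian~\cite{FeigeK98} and used purely as a black box to seed the poset-dimension reduction. You correctly identify this and state as much in your first sentence, so on the comparison front there is nothing to compare: both you and the paper treat the result as a citation.

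Your additional sketch of the Feige--Kilian argument is a fair high-level outline and correctly flags the two nontrivial points: that the \yi\ side requires a structured instance exhibiting an explicit small colouring (not just a large independent set), and that the randomised ``covering'' gadget is the source of the $\ZPP$ rather than $\P$ assumption. One small quibble: your opening remark that the \ni\ bound is ``the easy direction'' via $\chi(G)\ge |V(G)|/\alpha(G)$ is slightly misleading as phrased, since the chromatic-number theorem is not obtained by post-processing an independent-set hardness instance but by building a single instance in which both the \yi\ colouring bound and the \ni\ clique-cover bound hold simultaneously; you do acknowledge this in the next sentence, so it is a matter of exposition rather than a gap.
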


Our reduction starts from the instance $G$ given by Theorem~\ref{thm: coloring hardness}. Then we output $\bip[G^k]$ where $k = 1/\epsilon$. The construction size is $n = 2|V(G)|^k$.

\paragraph{Analysis} We need the following lemma, similar in spirit to
Lemma~\ref{lem:elbassioni}.
Since we state the lemma in our language, we provide the proof for completeness.

\begin{lemma}[Implicit in \cite{HJ07}]\label{lemma: poset dimension connections}
For any graph $G$, $\chi(G) \leq \dimension{\bip[G]} \leq \dimension{\bipp[G]} + \chi(G)$
\end{lemma}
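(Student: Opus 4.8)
The plan is to prove the two inequalities $\chi(G)\le\dimension{\bip[G]}$ and $\dimension{\bip[G]}\le\dimension{\bipp[G]}+\chi(G)$ separately. In both arguments I would work with the realizer (linear-extension) formulation of poset dimension, which is equivalent to the embedding definition used in the paper (as noted after Definition~\ref{def:dimension}); concretely, this lets me assume that a map realizing a poset of dimension $d$ lands in $\R^d$ and takes pairwise distinct values in each coordinate (each coordinate being the rank function of a linear extension of the poset). Recall also that $\bip[G]=G\times\vec K_2$ and $\bipp[G]=G\hstrong\vec K_2$ are both height-two posets with minimal class $V_1=\{(v,1):v\in V(G)\}$ and maximal class $V_2=\{(v,2):v\in V(G)\}$, that they have the same vertex set, and that $E(\bipp[G])=E(\bip[G])\cup\{(u,1)(u,2):u\in V(G)\}$, a disjoint union.

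\emph{Lower bound.} Let $d=\dimension{\bip[G]}$ and fix a realizer $L_1,\dots,L_d$ of $\bip[G]$, i.e.\ total orders on $V_1\cup V_2$ whose intersection is $\bip[G]$. For each $u\in V(G)$ the copies $(u,1)$ and $(u,2)$ are incomparable in $\bip[G]$, since $G$ has no loop at $u$; hence some $L_i$ satisfies $(u,2)<_{L_i}(u,1)$, and I set $c(u):=i$ for one such $i$. I claim $c$ is a proper colouring. If $uv\in E(G)$ and $c(u)=c(v)=i$, then $\bip[G]$ contains the arcs $(u,1)(v,2)$ and $(v,1)(u,2)$, so $L_i$ would contain the cycle $(u,1)<_{L_i}(v,2)<_{L_i}(v,1)<_{L_i}(u,2)<_{L_i}(u,1)$, which is absurd. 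Thus $\chi(G)\le d$.

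\emph{Upper bound.} Let $d=\dimension{\bipp[G]}$, let $\phi_0\colon V_1\cup V_2\to\R^d$ realize $\bipp[G]$ with pairwise distinct coordinate values, and let $c\colon V(G)\to[k]$ be an optimal colouring, $k=\chi(G)$. I would append $k$ coordinates to $\phi_0$ designed to kill exactly the relations $(u,1)<(u,2)$ while preserving all others: define $\psi\colon V_1\cup V_2\to\R^k$ by setting, for each colour $\ell\in[k]$, $\psi((u,1))[\ell]=1$ and $\psi((u,2))[\ell]=0$ if $c(u)=\ell$, and $\psi((u,1))[\ell]=0$ and $\psi((u,2))[\ell]=1$ if $c(u)\ne\ell$; then put $\phi:=(\phi_0,\psi)\colon V_1\cup V_2\to\R^{d+k}$. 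I would then verify three facts. (i) If $\phi(a)<\phi(b)$ with $a\ne b$, then $\phi(a)[i]\le\phi(b)[i]$ for all $i\le d$, and by distinctness of the coordinates of $\phi_0$ these are all strict, so $\phi_0(a)<\phi_0(b)$ and hence $ab\in E(\bipp[G])$. (ii) At coordinate $d+c(u)$ we have $\phi((u,1))>\phi((u,2))$, so $\phi((u,1))\not<\phi((u,2))$. (iii) If $xy\in E(G)$ then $x\ne y$ and $c(x)\ne c(y)$; one checks $\psi((x,1))[\ell]\le\psi((y,2))[\ell]$ for every $\ell$ (the only nontrivial coordinate is $\ell=c(x)$, where $c(y)\ne\ell$ gives $\psi((y,2))[\ell]=1$), and since $\phi_0((x,1))<\phi_0((y,2))$ this yields $\phi((x,1))<\phi((y,2))$. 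Combining these: by (i) every $\phi$-relation lies in $E(\bipp[G])$, by (ii) none is of the form $(u,1)(u,2)$, hence every $\phi$-relation lies in $E(\bip[G])$; by (iii) conversely every arc of $\bip[G]$ is a $\phi$-relation; and since both posets are height-two with $V_1$ minimal and $V_2$ maximal there are no within-level or upward pairs left to consider. So $\phi$ realizes $\bip[G]$ and $\dimension{\bip[G]}\le d+k=\dimension{\bipp[G]}+\chi(G)$.

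The main point to be careful about is fact (i): that appending the colour coordinates cannot \emph{manufacture} an order relation absent from $\bip[G]$. This is precisely where I rely on $\phi_0$ taking distinct values in each coordinate --- with an arbitrary real embedding two incomparable vertices might coincide, and then the appended $\psi$-coordinates could order them spuriously; invoking the realizer formulation rules this out. Everything else is a routine check.
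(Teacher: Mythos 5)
Your proof is correct and follows essentially the same route as the paper's: the lower bound uses the observation that any single coordinate/linear extension can reverse the pairs $(u,1),(u,2)$ only on an independent set (the same four-term cycle contradiction as the paper's ``hitting'' claim), and the upper bound appends one coordinate per color class that reverses exactly that class's diagonal pairs, which is the paper's construction with values $\{0,1\}$ in place of $\{0,1,2,3\}$. If anything, your verification is slightly tidier than the paper's sketch --- you avoid its extra ``two more coordinates'' remark by handling within-level pairs through the coordinate-distinctness of the base embedding, so your count matches the stated bound $\dimension{\bipp[G]}+\chi(G)$ exactly.
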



\begin{proof}
Recall that $\bipm[G]$ is almost identical to $\bipp[G]$ except that
$\bipm[G]$ has no edges of the form $(u, 1)(u,2)$ for all $u\in V(G)$.
We say that a mapping $\psi:\bipp[G] \rightarrow \R$ {\bf hits}
a vertex $u \in V(G)$ if $\psi((u,1)) > \psi((u,2))$, and
$\psi((v,1)) \leq \psi((w,2))$ whenever $vw \in E(G)$. In other words, $\psi$ is a linear order that ``reverses'' the direction of edge $(u,1)(u,2) \in \bipp[G]$.

The following claim was proved by Hegde and Jain in~\cite{HJ07}.
We restated it here in our terminology and also provide the proof for completeness.

\begin{claim}[\cite{HJ07}]
\label{claim: HJ}
Let $X \subseteq V(G)$. There is a mapping $\psi$ that hits all
vertices in $X$ if and only if $X$ is an independent set in $G$.
\end{claim}
\begin{proof}
One direction is easy to see. Suppose $X \subseteq V(G)$ contains $u,v$ such that $uv \in E(G)$, so the function $\psi$ hitting $\set{u,v}$ means that $\psi((u,2)) > \psi((u,1)) \geq \psi((v,2)) > \psi((v,1)) \geq \psi((u,2))$, an obvious contradiction. In short, we cannot expect to have a linear extension of a directed cycle.

Conversely, assume that $X$ is an independent set. We define function $\psi$ by processing vertices in $X$ in arbitrary order. When vertex $u \in X$ is considered, we set the values $\psi((u,1)) = 2$ and $\psi((u,2)) = 1$. After we finish, we set $\psi((u,1)) = 0$ and $\psi((u,2)) =  3$ for all other vertices $u$'s whose values $\psi$ were undefined. Now notice that the only way to violate the hitting property of $\psi$ is to have $\psi((u,1)) = 2$ and $\psi((v,2)) = 1$ for some $uv \in E(G)$, but this is impossible because $X$ is an independent set.

\end{proof}

Now we prove the inequality.

The lower bound follows immediately from Claim~\ref{claim: HJ}.
Let $\tilde \phi: V(\bip[G])\rightarrow\R^d$ be a mapping that realizes
$\bip[G]$.
For each coordinate $q$, define $\psi_{q}$ as
$\psi_{q}((u,1)) = \tilde\phi(((u,1))[q]$ and $\psi_{q}((u,2)) = \tilde \phi((u,2))[q]$, i.e. $\psi_{q}$ is function where we project the $q$th
coordinate of $\tilde \phi$.
Observe that, for each vertex $u \in V(G)$, there must be some $\psi_{q}$ that
hits the vertex $u$.
We argue that there is a valid coloring of $G$ using at most $d$
colors.
To see this, construct a coloring as follows.
For each vertex $u \in V(G)$, assign a color $q$ to $u$, where $q$
is the first coordinate such that $\psi_{q}$ hits the vertex $u$.
Claim~\ref{claim: HJ} guarantees that each color class is an
independent set.
Thus, the coloring is valid and $\chi(G) \leq d$.

To prove the upper bound, let $\phi$ be a function that realizes the
poset $\bipp[G]$.
Then, for any vertices $u \neq v$ of $G$, we have
$\phi((u,1))\leq\phi((v,2))$ if and only if $uv \in E(G)$. Then we need to extend $\phi$ into $\tilde \phi$ such that (i) Each vertex $u \in V(G)$ is hit by some coordinate of $\tilde \phi$, and (ii) For any two vertices of the form $(u,i)$ and  $(v,i)$ where $u \neq v$ and $i \in \set{1,2}$, we must have some coordinates $q, q'$ such that $\tilde \phi((u,1))[q] < \tilde \phi((v,1))[q]$ and $\tilde \phi((u,1))[q'] > \tilde \phi((v,1))[q']$. We only need two more coordinates to satisfy (ii). To deal with (i), it suffices to find a collection of functions $\psi_j$ such
that each vertex $u \in V(G)$ is hit by some $\psi_j$.
Then $\tilde\phi$ can be defined by concatenating $\phi$ with all the
mappings $\psi_j$.
Each such $\psi_j$ can be obtained from Claim~\ref{claim: HJ} by
defining, for each $j$, $\psi_j$ to be a map that hits all vertices in
a color class $j$.

\end{proof}


We will also need the following lemma which bounds the chromatic number of the $k$-fold product of graphs.

\begin{lemma}[{\cite{LinialV89,GaoZ96,KlavzarH02} and \cite[Cor. 3.4.5]{Scheinerman1997fractional})}]
\label{lem:dimension number product} For any graph $G$ and any number $k$,
$\left(\frac{\chi(G)}{\log|V(G)|}\right)^k\leq \chi(G^k)\leq (\chi(G))^k$.
\end{lemma}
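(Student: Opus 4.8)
The plan is to handle the two bounds separately, proving the upper bound directly and routing the lower bound through the fractional chromatic number $\chi_f$.

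For the upper bound I would first show $\chi(G\cdot H)\le \chi(G)\chi(H)$ and then iterate. Fix proper colourings $c_G:V(G)\to[\chi(G)]$ and $c_H:V(H)\to[\chi(H)]$ and colour the vertex $(u,a)$ of $G\cdot H$ by the pair $(c_G(u),c_H(a))$. If $(u,a)(v,b)\in E(G\cdot H)$ then by definition either $uv\in E(G)$, so $c_G(u)\ne c_G(v)$, or $u=v$ and $ab\in E(H)$, so $c_H(a)\ne c_H(b)$; either way the two pairs differ, so this is a proper colouring with $\chi(G)\chi(H)$ colours. Applying this $k-1$ times yields $\chi(G^k)\le(\chi(G))^k$.

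For the lower bound the plan rests on three ingredients. (i) $\chi(F)\ge\chi_f(F)$ for every graph $F$, since a proper colouring is in particular a fractional colouring; in particular $\chi(G^k)\ge\chi_f(G^k)$. (ii) The fractional chromatic number is \emph{exactly multiplicative} under the lexicographic product, $\chi_f(G\cdot H)=\chi_f(G)\,\chi_f(H)$, so $\chi_f(G^k)=(\chi_f(G))^k$. I would cite this (it is Corollary~3.4.5 of Scheinerman--Ullman, as the statement indicates); if one wanted it from scratch, ``$\le$'' comes from tensoring optimal fractional colourings of $G$ and $H$, and ``$\ge$'' from LP duality, writing $\chi_f(F)=\max_{w\ge 0,\,w\ne 0}\frac{\sum_v w_v}{\alpha_w(F)}$ and using that an independent set of $G\cdot H$ projects to an independent set of $G$ and restricts to an independent set of $H$ on each $G$-fibre, whence $\alpha_w(G\cdot H)$ factorises appropriately. (iii) For every $N$-vertex graph $F$, $\chi(F)\le(1+\ln N)\,\chi_f(F)$, equivalently $\chi_f(F)\ge\chi(F)/\log N$: an optimal fractional colouring is a fractional cover of $V(F)$ by independent sets of total weight $\chi_f(F)$, and greedy set cover rounds it to a proper colouring using at most $H_N\le 1+\ln N$ times the LP value. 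Combining, with $N=|V(G)|$,
\[
\chi(G^k)\ \ge\ \chi_f(G^k)\ =\ (\chi_f(G))^{k}\ \ge\ \left(\frac{\chi(G)}{\log|V(G)|}\right)^{k},
\]
which is the claimed inequality (up to the exact constant absorbed in $\log$; the same conclusion, with sharpened constants, is the content of Linial--Vazirani, Gao--Zhu and Klav\v{z}ar, which I would cite for the bound exactly as stated).

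The only genuinely delicate point is ingredient (ii): multiplicativity of $\chi_f$ is special to the lexicographic product and fails for the tensor, disjunctive or Cartesian products, so one must be careful to invoke the right identity. The product colouring for the upper bound and the greedy-cover rounding in (iii) are routine, and (i) is immediate, so once (ii) is in hand the rest is bookkeeping.
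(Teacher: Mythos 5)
Your proof is correct, and since the paper does not prove this lemma at all --- it is quoted from the literature --- your argument is essentially a reconstruction of exactly the facts the paper cites. The upper bound via the pair colouring $(c_G(u),c_H(a))$ is the standard one, and your lower-bound chain $\chi(G^k)\ge\chi_f(G^k)=(\chi_f(G))^k\ge(\chi(G)/\log|V(G)|)^k$ uses precisely the cited Corollary~3.4.5 of Scheinerman--Ullman (multiplicativity of $\chi_f$ under the lexicographic product) together with the greedy/LP-rounding bound $\chi(F)\le(1+\ln N)\chi_f(F)$ that underlies the Linial--Vazirani and Gao--Zhu statements, so the route matches the intended sources. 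Two small remarks. First, the constant: your argument gives $\chi_f(G)\ge\chi(G)/(1+\ln|V(G)|)$, and whether $1+\ln N\le\log N$ holds depends on the base convention and on $N$ being moderately large; this is harmless for the paper's use of the lemma (where $\log|V(G)|$ is only ever bounded by $|V(G)|^{\epsilon}$), but if you want the inequality literally as stated you should either cite it in that form or note that the additive $1$ is absorbed for large enough $|V(G)|$. Second, your parenthetical that multiplicativity of $\chi_f$ fails for the disjunctive product is not accurate: since $E(G\cdot H)\subseteq E(G\vee H)$ and every independent set of $G\vee H$ is contained in a product of independent sets of $G$ and $H$, one gets $\chi_f(G\vee H)=\chi_f(G)\chi_f(H)$ as well; the failure is for the tensor and Cartesian products. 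Neither point affects the correctness of your proof, and the genuinely load-bearing ingredient (multiplicativity under the lexicographic product) is invoked correctly.
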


We are now ready to analyze the gap between the \yi and \ni.

Suppose that $G$ is a \ni. Then $\chi(G) \geq  |V(G)|^{1-\epsilon}$. By Lemma \ref{lemma: poset dimension connections} and \ref{lem:dimension number product} and for sufficiently large $|V(G)|$ (so that $\log  |V(G)| \leq |V(G)|^{\epsilon}$), we have that
$\dimension{\bipm[G^k]}\geq \chi(G^k)\geq \left(\frac{\chi(G)}{\log |V(G)|}\right)^k \geq \left(\frac{|V(G)|^{1-\epsilon}}{\log |V(G)|}\right)^k \geq \frac{n^{1-\epsilon}}{(2\log |V(G)|)^k} \geq n^{1-O(\epsilon)}.$


For the \yi, we have that $\dimension{\bipm[G^k]}\leq \dimension{\bipp[G^k]}+\chi(G^k)$. By Lemma~\ref{lem:dimension number product}, the term $\chi(G^k)$ can be upper bounded by $\chi(G)^k \leq |V(G)|^{\epsilon k} = |V(G)| \leq n^{\epsilon}$ because $\chi(G) \leq |V(G)|^{\epsilon}$ in the \yi. We use the following claim to bound the term $\dimension{\bipp[G^k]}$.

\begin{claim}
\label{cor:poset dimension}
For any graph $G$ and integer $k$,
$\dimension{\bipp[G^k]} \leq \chi(G)\dimension{\bipp[G]}k+k.$
\end{claim}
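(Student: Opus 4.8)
The plan is to prove Claim~\ref{cor:poset dimension} by induction on $k$, using Eq.\eqref{eq:subadditive dimension} of Theorem~\ref{theorem:subadditive} (equivalently the already-established special case Eq.\eqref{eq:subadditive dimension special}) as the only structural ingredient. The base case $k=1$ is immediate, since $\chi(G)\ge 1$ gives $\dimension{\bipp[G]}\le\chi(G)\dimension{\bipp[G]}+1$. For the inductive step I would invoke associativity of the lexicographic product to write $G^k=G\cdot G^{k-1}$, and then apply Eq.\eqref{eq:subadditive dimension} to this decomposition with the height-two poset taken to be $\vec K_2$, noting $\dimension{\vec K_2}=1$. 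Because $G\hstrong\vec K_2=\bipp[G]$ and $G^{k-1}\hstrong\vec K_2=\bipp[G^{k-1}]$ by definition, this yields the recursion
\[
\dimension{\bipp[G^k]}\;=\;\dimension{(G\cdot G^{k-1})\hstrong\vec K_2}\;\le\;\dimension{\bipp[G]}+\chi(G)\,\dimension{\bipp[G^{k-1}]}+1 ,
\]
into which one substitutes the inductive hypothesis for $\dimension{\bipp[G^{k-1}]}$ and collects terms, with the $k$ accumulated additive $1$'s absorbed into the final $+k$.

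The step I expect to be the genuine obstacle is controlling how the chromatic-number factor accumulates along the recursion: the coefficient $\chi(G)$ in front of $\dimension{\bipp[G^{k-1}]}$ multiplies a quantity that already carries its own factor of $\chi(G)$ through the induction, so a careless unrolling produces $\chi(G)^{k-1}$ rather than a single $\chi(G)$. The cleanest way I see to organize this is to iterate instead the complementary decomposition $G^k=G^{k-1}\cdot G$, where at each stage the first factor is $G^{j}$ and Eq.\eqref{eq:subadditive dimension} therefore contributes the coefficient $\chi(G^{j})$; unrolling this gives
\[
\dimension{\bipp[G^k]}\;\le\;\dimension{\bipp[G]}\Bigl(1+\sum_{j=1}^{k-1}\chi(G^{j})\Bigr)+(k-1)\;\le\;\dimension{\bipp[G]}\,k\,\chi(G^{k-1})+k ,
\]
using that $\chi(G^{j})$ is nondecreasing in $j$ (since $G^{j-1}$ is a retract of $G^{j}$). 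One then bounds $\chi(G^{k-1})$; this is where care is needed to land precisely on the form stated in the claim, the convenient bound for the application being $\chi(G^{k-1})\le\chi(G)^{k-1}$ via Lemma~\ref{lem:dimension number product}, which already suffices there since in the reduction $\chi(G)$ is only polynomially large. I expect the write-up of exactly this reduction — matching the single factor $\chi(G)$ and the linear dependence on $k$ in the statement — to be the only delicate point.

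Finally I would check the degenerate cases separately: $\chi(G)=1$, where $G^k$ is edgeless so $\dimension{\bipp[G^k]}$ is trivially $O(k)$; and small $k$. I would also verify that all additive constants from Eq.\eqref{eq:subadditive dimension special} — in particular the unit contribution of $\dimension{\vec K_2}$ at each of the $k$ stages — are indeed swallowed by the $+k$ term.
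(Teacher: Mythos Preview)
Your diagnosis of the ``genuine obstacle'' is correct, and you should stop expecting to resolve it: the paper's own proof has precisely this gap. The paper unrolls the recursion
\[
\dimension{\bipp[G^k]}\;\le\;\dimension{\bipp[G^{k-1}]}+\chi(G)\,\dimension{\bipp[G]}+1,
\]
i.e.\ the very decomposition $G^k=G^{k-1}\cdot G$ you propose, but writes $\chi(G)$ at every step where Eq.~\eqref{eq:subadditive dimension} actually produces $\chi(G^{j-1})$ (the chromatic number of the \emph{first} factor). This is an oversight in the paper; the claim as literally stated does not follow from Theorem~\ref{theorem:subadditive}, and there is no hidden trick that collapses $\chi(G^{k-1})$ down to a single $\chi(G)$.

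Your corrected unrolling
\[
\dimension{\bipp[G^k]}\;\le\;\dimension{\bipp[G]}\Bigl(1+\sum_{j=1}^{k-1}\chi(G^{j})\Bigr)+(k-1)\;\le\;k\,\chi(G)^{k-1}\,\dimension{\bipp[G]}+k
\]
is what the tools genuinely yield, and as you already note it is all that is needed downstream: in the \yi one has $\chi(G)\le|V(G)|^{\epsilon}$ and $k=1/\epsilon$, so $\chi(G)^{k-1}\le|V(G)|^{1-\epsilon}$, giving $\dimension{\bipp[G^k]}\le O(|V(G)|^{2-\epsilon}k)=n^{O(\epsilon)}$. So your analysis is more careful than the paper's; simply state the claim with $\chi(G)^{k-1}$ (or $\chi(G^{k-1})$) in place of $\chi(G)$ and move on.
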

%
\begin{proof}
Note that $\dimension{\vec{K}_2}=1$. By
Theorem~\ref{theorem:subadditive}, we have that

\begin{align*} \dimension{\bipp[G^k]} &\leq
\dimension{\bipp[G^{k-1}]}+\chi(G)\cdot\dimension{\bipp[G]} +1\\ 
&\leq
  \dimension{\bipp[G^{k-2}]}+2\cdot\chi(G)\cdot\dimension{\bipp[G]} +2\\ 
&\vdots \\
&\leq \dimension{\bipp[G]}+(k-1)\cdot\chi(G)\cdot\dimension{\bipp[G]} + (k-1)\\ 
&\leq k\cdot\chi(G)\cdot\dimension{\bipp[G]}+k
\end{align*}
\end{proof}

By Claim~\ref{cor:poset dimension} and the fact that 
$\dimension{\bipp[G]} \leq 2|V(G)|$, we have 
\[
\dimension{\bipp[G^k]}
  \leq \chi(G)\dimension{\bipp[G]}k+k 
  \leq 2|V(G)|^{\epsilon} |V(G)| k +k
\]
This implies that 
$\dimension{\bipp[G^k]} 
  \leq O(|V(G)|^{1+\epsilon}) 
  \leq n^{2\epsilon}$.
Therefore, $\dimension{\bip[G^k]} \leq n^{O(\epsilon)}$, implying 
the gap of $n^{1-O(\epsilon)}$.

\section{More Applications}
\label{sec:rest}

In this section, we present all other applications discussed in the introduction.

\subsection{Maximum Feasible Subsystem with 0/1 Coefficients.}
\label{sec:mfs}


The following reduction follows the ideas implicit in Theorem~3.5 in \cite{ElbassioniRRS09}. For completeness, we include the proof in \fullversion{Appendix~\ref{sec: omitted proofs from rest}}{the full version}.


\begin{theorem}[\cite{ElbassioniRRS09}]
\label{thm: elbassioni MRFS}
Consider an instance $G=(V_1\cup V_2, E)$ of the bipartite semi-induced matching problem. There is a polynomial time reduction that, for any $0<\beta\leq |V(G)|$, outputs an instance $\aset= (A,\ell, \mu)$ of {\sc Mrfs} satisfying the following properties:
\squishlist
    \item ({\sc Size}) Matrix $A$ is an $m$-by-$n$ matrix, where $m = |V_1|$, $n=|V_2|$ and  $L = \max_{i\in [n] } \set{\ell_i} \leq (\beta n)^{O(n)}$.


    \item (\yi) There is a solution $\x \in \R_{+}^n$ that satisfies at least $\induce{G}$ constraints in $\aset$.


    \item (\ni) There is no solution $\x \in \R_{+}^n$ that ``$\beta$-satisfies'' more than $\sinduce{G}$ constraints in $\aset$; i.e., $|\set{i:\ell_i \leq a_i^T\x \leq \beta\mu_i}| \leq \sinduce{G}$ for all $\x$.
\squishend
\end{theorem}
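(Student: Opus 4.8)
The plan is to give a gap-preserving, ``positional number system'' reduction in the spirit of \cite{ElbassioniRRS09}. Fix a total order on $V_1$ and relabel $V_1=\{1,\dots,m\}$ accordingly; for $i\in V_1$ write $S_i=\{a\in V_2:ia\in E\}$. The output instance $\aset=(A,\ell,\mu)$ has one nonnegative variable $x_a$ for each $a\in V_2$ (so $n=|V_2|$) and one constraint $C_i$ for each $i\in V_1$ (so $m=|V_1|$); the $i$-th row $a_i\in\{0,1\}^n$ is the incidence vector of $S_i$, so $a_i^{T}\x=\sum_{a\in S_i}x_a$. Pick the base $W:=2\beta n+1$ and set $\ell_i:=W^{i}$, $\mu_i:=2W^{i}$. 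The reduction is clearly polynomial time, $A$ is $m$-by-$n$ with $0/1$ entries, and $L=\max_i\ell_i=W^{m}$, which is at most $(\beta n)^{O(n)}$ in the parameter range of interest (where $\beta\ge 1$ and $m=O(n)$); this gives the \textsc{Size} property.

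For the \yi{} property, I would take an induced matching $\mset$ of $G$ with $|\mset|=\induce{G}$, order its edges $(i_1,a_1),\dots,(i_k,a_k)$ so that $i_1<\dots<i_k$, and set $x_{a_r}:=W^{i_r}$ for $r=1,\dots,k$ and $x_a:=0$ otherwise. Because $\mset$ is an \emph{induced} matching, for $s\ne r$ the vertex $a_s$ is non-adjacent to $i_r$, i.e.\ $a_s\notin S_{i_r}$; hence $a_{i_r}^{T}\x=x_{a_r}=W^{i_r}\in[\ell_{i_r},\mu_{i_r}]$, so all $k=\induce{G}$ constraints $C_{i_1},\dots,C_{i_k}$ are satisfied.

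The core of the argument is the \ni{} (soundness) property, which I would establish by contradiction. Suppose $\x\in\R^n_{+}$ $\beta$-satisfies the set $T=\{j_1<\dots<j_t\}\subseteq V_1$ of constraints with $t>\sinduce{G}$. I extract from $\x$ a $\sigma$-semi-induced matching of size $t$, where $\sigma$ is the fixed order, by scanning $T$ in increasing order: at step $r$, the ``junk mass'' $\sum\{x_a:a\in S_{j_r}\cap(S_{j_1}\cup\dots\cup S_{j_{r-1}})\}$ is small, since any such $a$ lies in some $S_{j_{r'}}$ with $r'<r$, whence $x_a\le a_{j_{r'}}^{T}\x\le\beta\mu_{j_{r'}}=2\beta W^{j_{r'}}\le 2\beta W^{j_r-1}$, and there are at most $n$ such vertices, so the junk mass is at most $2\beta n\,W^{j_r-1}<W^{j_r}=\ell_{j_r}\le a_{j_r}^{T}\x$ by the choice of $W$. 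Hence some $a_r\in S_{j_r}\setminus(S_{j_1}\cup\dots\cup S_{j_{r-1}})$ has $x_{a_r}>0$; in particular $a_r\notin\{a_1,\dots,a_{r-1}\}$. Then $\mset'=\{(j_1,a_1),\dots,(j_t,a_t)\}$ is a matching, and since $a_r\notin S_{j_{r'}}$ for every $r'<r$, it satisfies the defining condition of a $\sigma$-semi-induced matching (the $V_1$-endpoint of an earlier edge is non-adjacent to the $V_2$-endpoint of a later one). Thus $t=|\mset'|\le\sinducesigma{G}{\sigma}\le\sinduce{G}$, contradicting $t>\sinduce{G}$; equivalently $|\{i:\ell_i\le a_i^{T}\x\le\beta\mu_i\}|\le\sinduce{G}$ for every $\x$.

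The main obstacle is precisely this soundness step: calibrating $W$ as a function of $\beta$ and $n$ so that the combined mass of all variables used in earlier rounds is provably dominated by the next lower bound $\ell_{j_r}$, while keeping $W$ (hence $L=W^{m}$) small enough for the \textsc{Size} bound; and verifying that the extracted edges form a genuinely \emph{semi-induced} matching for the correct orientation of the order --- the asymmetry ``$a_r$ must avoid all earlier sets but may meet later ones'' is exactly what the definition of semi-induced matching permits, and it is what makes the bound come out as $\sinduce{G}$ rather than something weaker. The yes-direction and the size/running-time accounting are routine once $W$ is fixed.
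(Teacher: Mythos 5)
Your proposal is correct and takes essentially the same route as the paper: the identical incidence-matrix constraints with geometrically scaled two-sided bounds indexed by the $V_1$-vertices, the same completeness argument from an induced matching, and a soundness argument resting on the same magnitude-separation idea (you extract a fresh, previously uncovered element from each $\beta$-satisfied constraint via a junk-mass bound, while the paper extracts a heaviest element of value at least $\ell_i/n$ and rules out back-edges by violating the $\beta$-relaxed upper bound of the earlier constraint --- equivalent bookkeeping). Two cosmetic points: spell out the total order $\sigma$ on all of $V(G)$ (all of $V_1$ in the fixed order, then all of $V_2$), and note that your $L=W^{m}=(\beta n)^{O(m)}$ matches the paper's appendix restatement of the size bound rather than the $(\beta n)^{O(n)}$ in the displayed statement, which is harmless in the regime $m=O(n)$ where the theorem is applied.
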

Now, we prove the hardness of approximating \MFS, which holds even in the following {\em bi-criteria} setting. For any instance $\aset$, we denote by $\opt(\aset)$ the maximum number of constraints that can be satisfied, i.e., $\opt(\aset) = \max_i |\set{i: \ell_i \leq a_i^T {\bf x} \leq \mu_i}|$. For any $0<\alpha, \beta \leq m$, we say that an algorithm is an $(\alpha, \beta)$-approximation algorithm if, for any instance $\aset$ of \MFS, the algorithm returns a solution ${\bf x}$ that $\beta$-satisfies at least $\opt(\aset)/\alpha$ constraints; i.e., $|\set{i: \ell_i \leq a_i^T\x \leq \beta\mu_i}| \geq \opt(\aset)/\alpha$. The non-bi-criteria setting (defined in Section~\ref{sec:intro approx}) is when $\beta=1$.

\begin{corollary} Let $\epsilon>0$ be any constant. There is no polynomial-time $(m^{1-\epsilon}, m+n)$-approximation algorithm for \MFS unless $\NP\subseteq \ZPP$. Moreover, when considering an approximation factor in terms of $L$, finding $(\log^{1-\epsilon} L, m+n)$-approximation algorithm cannot be done in polynomial time, unless $\NP \subseteq \ZPP$.
\end{corollary}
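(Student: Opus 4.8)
The plan is to compose two results already in hand: the $n^{1-\epsilon}$-hardness of bipartite (semi-)induced matching (Theorem~\ref{thm:semi-induced}) and the gap-preserving reduction from that problem to \MFS\ (Theorem~\ref{thm: elbassioni MRFS}), instantiated with the slack parameter $\beta$ set to $m+n$.

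First I would fix the target $\epsilon>0$, pick an auxiliary constant $\epsilon':=\epsilon/8$ (any $\epsilon'<\epsilon/2$ works), and invoke Theorem~\ref{thm:semi-induced} with parameter $\epsilon'$: assuming $\NP\not\subseteq\ZPP$, we obtain a bipartite graph $G=(V_1\cup V_2,E)$ on $N:=|V(G)|$ vertices such that no polynomial-time algorithm distinguishes the \yi\ ($\induce{G}\ge N^{1-\epsilon'}$) from the \ni\ ($\sinduce{G}\le N^{\epsilon'}$). Then feed $G$ into Theorem~\ref{thm: elbassioni MRFS} with $\beta:=m+n=|V_1|+|V_2|=N$; note this is exactly the boundary case $\beta=|V(G)|$ permitted by that theorem. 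The output is an \MFS\ instance $\aset=(A,\ell,\mu)$ with $A$ an $m\times n$ matrix, $m=|V_1|$, $n=|V_2|$, $m+n=N$, and $L=\max_i\ell_i\le(\beta n)^{O(n)}\le N^{O(N)}$.

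Next I would read off the gap. On a \yi, Theorem~\ref{thm: elbassioni MRFS} gives $\opt(\aset)\ge\induce{G}\ge N^{1-\epsilon'}$. On a \ni, no $\x\in\R_{+}^n$ can $\beta$-satisfy more than $\sinduce{G}\le N^{\epsilon'}$ constraints, i.e.\ $|\{i:\ell_i\le a_i^T\x\le(m+n)\mu_i\}|\le N^{\epsilon'}$ for every $\x$. Suppose for contradiction that a polynomial-time $(m^{1-\epsilon},m+n)$-approximation algorithm exists. On a \yi\ it outputs $\x$ that $(m+n)$-satisfies at least $\opt(\aset)/m^{1-\epsilon}\ge N^{1-\epsilon'}/m^{1-\epsilon}\ge N^{1-\epsilon'}/N^{1-\epsilon}=N^{\epsilon-\epsilon'}$ constraints, where I used $m\le N$. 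Since $\epsilon-\epsilon'=7\epsilon/8>\epsilon/8=\epsilon'$, for all large enough $N$ this strictly exceeds the bound $N^{\epsilon'}$ on the number of $(m+n)$-satisfiable constraints on a \ni. Hence counting $(m+n)$-satisfied constraints in the algorithm's output distinguishes \yi\ from \ni, contradicting Theorem~\ref{thm:semi-induced}; so no such algorithm exists unless $\NP\subseteq\ZPP$. (Large $N$ is fine since the underlying independent-set instances range over arbitrarily large graphs.)

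For the $L$-version I would only swap the approximation factor. From $L\le N^{O(N)}$ we get $\log L=O(N\log N)$, hence $\log^{1-\epsilon}L=O\!\big(N^{1-\epsilon}(\log N)^{1-\epsilon}\big)\le N^{1-\epsilon+\epsilon'}$ for all large $N$. Re-running the argument with $\log^{1-\epsilon}L$ in place of $m^{1-\epsilon}$, a \yi\ forces the algorithm to $(m+n)$-satisfy at least $N^{1-\epsilon'}/N^{1-\epsilon+\epsilon'}=N^{\epsilon-2\epsilon'}=N^{3\epsilon/4}$ constraints, again strictly more than the $N^{\epsilon'}$ possible on a \ni, which rules out a polynomial-time $(\log^{1-\epsilon}L,m+n)$-approximation algorithm unless $\NP\subseteq\ZPP$. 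The argument is essentially bookkeeping; the only points needing care — and the mildest of obstacles — are verifying that $\beta=m+n$ lies within the range allowed by Theorem~\ref{thm: elbassioni MRFS} (it equals $|V(G)|$) and choosing the auxiliary $\epsilon'<\epsilon/2$ so that neither the approximation factor ($m^{1-\epsilon}$, resp.\ $\log^{1-\epsilon}L=N^{1-\epsilon+o(1)}$) nor the polynomial size-blowup of the reduction erodes the $N^{1-2\epsilon'}$ gap supplied by Theorem~\ref{thm:semi-induced}.
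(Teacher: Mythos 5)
Your proposal is correct and follows essentially the same route as the paper: compose Theorem~\ref{thm:semi-induced} with Theorem~\ref{thm: elbassioni MRFS} at $\beta=m+n=|V(G)|$, read off the gap between the \yi\ and \ni\ cases, and convert to the $\log^{1-\epsilon}L$ factor via the bound $L\le(\beta n)^{O(n)}$. The only difference is presentational: you track the auxiliary constant $\epsilon'$ and the inequality $m\le|V(G)|$ explicitly, where the paper simply writes the gap as $m^{1-O(\epsilon)}$.
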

\begin{proof}
We start from the graph $G$ given by Theorem~\ref{thm:semi-induced} and invoke Theorem~\ref{thm: elbassioni MRFS} on $G$. For \yi where $\induce{G}\geq |V(G)|^{1-\epsilon}$, we have that there is a solution ${\bf x}$ that satisfies $\induce{G}\geq |V(G)|^{1-\epsilon}=m^{1-\epsilon}$ constraints.
%
%
In the \ni where $\sinduce{G}\leq |V(G)|^{\epsilon}$, there is no solution that $\beta$-satisfies $\sinduce{G}\leq |V(G)|^{\epsilon}\leq m^{\epsilon}$, for any $0<\beta\leq |V(G)|=m+n$. Thus, even when we allow the solution to $(m+n)$-satisfies the constraints, there is still an $m^{1-O(\epsilon)}$ gap.

Theorem~\ref{thm: elbassioni MRFS} guarantees that $L \leq 2^{O(n \log n)} \leq 2^{n^{1+\epsilon}}$, so the hardness of $n^{1-O(\epsilon)}$ can also be written as $\log^{1-O(\epsilon)} L$.
\end{proof}

We remark that the hardness factor can be improved to $m/2^{\log^{3/4+\gamma} m}$. Our bounds are nearly tight since it is trivial to get a factor of $m$-approximation, and since~\cite{ElbassioniRRS09} showed an $O(\log (nL))$-approximation algorithm.

\subsection{Pricing Problems.}

In this section, we revisit \UDP and \SMP and give an alternative proof of the hardness
results in~\cite{ChalermsookPricing}. As discussed in the introduction, our proof illustrates the insight that the maximum expanding sequence problem, which is equivalent to the bipartite semi-induced matching problem (see \fullversion{Appendix~\ref{sec:equivalence semi induced and expanding seq}}{the full version}), is the main source of hardness of these pricing problems. 

We start by defining the pricing problems we consider. In Unit-Demand Min-Buying Pricing (\UDP), we have a collection of items $\iset = [n]$ and customers $\cset$ where each consumer $c \in \cset$ is associated with set $S_c \subseteq [n]$ and a budget $B_c$.
Once the price $p:\iset \rightarrow\R_{+}$ is fixed, each consumer $c$
buys the cheapest item in $S_c$ if the price of such item is at most
$B_c$; otherwise, the consumer buys nothing.
Our goal is to set the price $p$ so that the profit is maximized.

In Single-Minded Pricing (\SMP), the setting is the same except that now each consumer $c$ buys the whole set $S_c$ of its items if $\sum_{i \in S_c} p(i) \leq B_c$; otherwise, the consumer $c$ buys nothing. Again, the goal is to set the prices $p$ so that the profit is maximized.

For any instance $\pset$ of \UDP or \SMP, we denote by $\opt(\pset)$ the revenue that can be obtained by an optimal price function.


Our contribution lies in proving the following theorem that makes connections between the bipartite semi-induced matching problem and pricing problems. The proof of this theorem borrows many ideas from~\cite{BriestK11,ChalermsookPricing}. \fullversion{The proof is included in Appendix~\ref{sec: omitted proofs from rest}.}{The proof is deferred to the full version.}

\begin{theorem}
\label{thm: semi-induced matching to UDP}
There are reductions with a running time of $|V(G)|^{O(|V(H)|)}$
that transform input graph $G'= \bipp[G \vee H]$ into an
instance $(\cset, \iset)$ of \UDP or \SMP such that

\[\induce{G'} \leq \opt(\cset, \iset) \leq 2 \sinduce{G'} + O(|V(G)|(1+ |E(H)|))\]

Furthermore, the number of consumers and items are
$|\cset| = |V(H)|^{O(|V(H)|)}|V(G)|$ and $|\iset|  = |V(G)||V(H)|$, respectively, and each consumer $c \in \cset$ satisfies $|S_c| \leq O(\Delta^2)$.
\end{theorem}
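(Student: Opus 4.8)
The plan is to adapt the reductions of Briest and Krysta~\cite{BriestK11} and of Chalermsook et~al.~\cite{ChalermsookPricing} from the maximum expanding sequence problem, exploiting the fact that the input $G'=\bipp[G\vee H]$ comes in a product form that keeps the ``$H$-side'' small. Recall that $G'=(G\vee H)\hstrong K_2$ is bipartite with sides $V_1=\{(v,a,1)\}$ and $V_2=\{(v,a,2)\}$, and that $(v,a,1)(v',a',2)\in E(G')$ iff $vv'\in E(G)$, or $aa'\in E(H)$, or $(v,a)=(v',a')$. First I would identify the item set $\iset$ with $V_2$, so that $|\iset|=|V(G)||V(H)|$, and to each $(v,a)\in V(G)\times V(H)$ associate the set $T_{v,a}:=N_{G'}((v,a,1))\subseteq\iset$; note $|T_{v,a}|\le \deg_G(v)\,|V(H)|+\deg_H(a)+1$, which in the intended regime ($G$ being $\Delta$-regular and $H=\bar K_\Delta$) is $O(\Delta^2)$. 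Under the equivalence between bipartite semi-induced matching and expanding sequences, a size-$r$ semi-induced matching of $G'$ is precisely a length-$r$ expanding subsequence of the family $(T_{v,a})$ in a suitable order, and an induced matching is such a subsequence with the extra ``no backward edge'' property.

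\paragraph{Building the pricing instances} For \UDP\ I would create, for every vertex $(v,a,1)\in V_1$ and every one of the $|V(H)|^{O(|V(H)|)}$ possible ``profiles'' of $V(H)$ (a total order together with its induced expanding pattern), one consumer with demand set $S_c=T_{v,a}$ and a budget read off a fixed geometric price schedule along that profile; this yields $|\cset|=|V(H)|^{O(|V(H)|)}|V(G)|$ consumers, each with $|S_c|=O(\Delta^2)$, and a construction time of $|V(G)|^{O(|V(H)|)}$. For \SMP\ I would use the analogous single-minded gadget: the consumer wants the whole set $S_c$ and its budget is the sum of the relevant prices; the geometric schedule forces the min-buying behaviour and the affordable-full-set behaviour to yield revenues within a factor of $2$ of each other, which is the source of the constant $2$ in the statement.

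\paragraph{The two inequalities} For $\induce{G'}\le\opt(\cset,\iset)$: take a maximum induced matching of $G'$, which (having no backward edges) certifies an expanding sequence in which, moreover, the distinguished element of each set avoids all later sets as well; price the distinguished item of each matched pair along the schedule and set all other prices to the top, so that each of the $\induce{G'}$ corresponding consumers buys its distinguished item at its budget and cannot be diverted to a cheaper item, giving revenue $\ge\induce{G'}$. For $\opt(\cset,\iset)\le 2\sinduce{G'}+O(|V(G)|(1+|E(H)|))$: fix an optimal price function $p$, restrict to the paying consumers, and order them by the price of the item they buy; following~\cite{BriestK11,ChalermsookPricing}, the purchased items read along this order essentially form a $\sigma$-semi-induced matching of $G'$ for the appropriate $\sigma$ -- bounding that part by $\sinduce{G'}$ up to the factor $2$ above -- while the residual revenue is charged to a collection of at most $O(|V(G)|(1+|E(H)|))$ ``unavoidable'' contributions coming from the non-product part of $G\vee H$. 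Summing the two bounds, together with the item and consumer counts from the construction, completes the proof.

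\paragraph{Main obstacle} I expect the soundness direction to be the crux: one must calibrate the geometric schedule and the ``profile'' gadget precisely enough that, after discarding the $O(|V(G)|(1+|E(H)|))$ free revenue, the remaining paying consumers witness a \emph{semi-}induced matching (not merely an induced one), and that the UDP and SMP analyses diverge by no more than the stated factor $2$; this is exactly the technical content imported from~\cite{BriestK11,ChalermsookPricing}. The product structure $\bipp[G\vee H]$ is what makes the blow-up tolerable -- it lets $|V(H)|$ (hence the $|V(H)|^{O(|V(H)|)}$ consumer blow-up and the $O(\Delta^2)$ bound on $|S_c|$) be chosen as small as the application in Section~\ref{sec:indmatching-d-regular} requires.
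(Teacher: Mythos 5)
There is a genuine gap: the heart of this theorem is the soundness direction, and your proposal does not actually prove it -- it defers it to \cite{BriestK11,ChalermsookPricing} (``this is exactly the technical content imported from...''). But those papers do not contain this statement; the whole point of the theorem is the new analysis that works on the product instance $G'=\bipp[G\vee H]$ and produces the additive $O(|V(G)|(1+|E(H)|))$ slack. The paper's construction is also simpler than your ``profile'' gadget: it uses the \emph{same} instance for \UDP and \SMP, takes one demand set per vertex $(u,a,1)$ (its neighborhood in $G'$), gives it budget $n^{-3a}$ depending only on the $H$-coordinate $a$, and duplicates that consumer $n^{3a}$ times so that a bundle paying its full budget contributes exactly one unit of revenue. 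Soundness then proceeds by (i) a price-doubling argument showing that at least $K/2$ of the revenue comes from ``tight'' indices $(u,a)$ paying within a factor $n$ of their budget -- this, and not any relation between \UDP and \SMP revenues, is where the factor $2$ comes from, so your explanation of the constant $2$ is incorrect; (ii) discarding non-canonical tight indices, bounded by $|V(G)|$ (at most one $G$-type per vertex of $G$, via the geometric price separation) plus $|V(G)||E(H)|$ ($H$-type, charged to edges of $H$); and (iii) showing the diagonal edges $(u,a,1)(u,a,2)$ over the surviving canonical tight indices form a $\sigma$-semi-induced matching, with \emph{different} orderings $\sigma$ for \UDP (increasing in $a$) and \SMP (decreasing in $a$). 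None of steps (i)--(iii) appears in your sketch, and they do not follow from the cited reductions as black boxes.

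Two further concrete errors: in your completeness argument you set all non-matched items ``to the top,'' which works for \UDP but breaks \SMP -- a single-minded consumer must afford its \emph{entire} set, so the paper sets the unmatched items' prices to $0$ for \SMP (and to $\infty$ only for \UDP); and your phrase ``its budget is the sum of the relevant prices'' is not meaningful, since budgets are part of the instance and must be fixed before the adversary chooses prices (the paper fixes them to $n^{-3a}$). So while your high-level picture (items $=V_2$, neighborhoods as demand sets, recover a semi-induced matching from paying consumers, charge the rest to $|V(G)|(1+|E(H)|)$) matches the paper's, the proof as proposed is not a proof: the tight/canonical-index analysis and the order-dependent recovery of the matching are missing, and the \SMP completeness step and the stated origin of the factor $2$ are wrong.
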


Note that the running time and the number of consumers for \UDP can be slightly improved with a more careful analysis as follows: The running time can be made $2^{O(|V(H)|)}\poly |V(G)|$, and the number of consumers is $2^{O(|V(H)|)}|V(G)|$.

Applying the subadditivity property (Theorems~\ref{theorem:subadditive}), the hardness of the induced and semi-induced matching problems (Lemma~\ref{lemma: semi-induced matching in terms of graph products}) and  Theorem \ref{thm: semi-induced matching to UDP}, we get the following result, which is an alternative proof of the result in \cite{ChalermsookPricing}.

\begin{theorem}
\label{thm: UDP and SMP hardness}
For any constant $\epsilon >0$, both \SMP and \UDP are hard to approximate
to within a factor of $\log^{1-\epsilon} |\cset|$ and $\paren{\max_{c \in \cset} |S_c|}^{1/2-\epsilon}$, where $\cset$ is the set of consumers, unless $\NP
\subseteq {\sf ZPTIME}(n^{O(\poly \log n)})$.
\end{theorem}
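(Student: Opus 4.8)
\medskip\noindent\textbf{Proof plan.} The plan is to compose three ingredients while tracking size and time parameters: the hardness of bounded-degree maximum independent set (Theorem~\ref{thm: bounded-degree MIS}), the semi-induced matching construction $G\mapsto\bipp[G\vee H]$ underlying Lemma~\ref{lemma: semi-induced matching in terms of graph products}, and the pricing reduction of Theorem~\ref{thm: semi-induced matching to UDP}. Rather than quoting Lemma~\ref{lemma: semi-induced matching in terms of graph products} as a black box, I would re-run its reduction chain directly from Theorem~\ref{thm: bounded-degree MIS}, because the pricing reduction runs in quasi-polynomial time and this must be folded into the final complexity accounting; this is exactly what turns the conclusion into $\NP\subseteq{\sf ZPTIME}(n^{O(\poly\log n)})$ rather than $\NP=\ZPP$.

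Fix $\epsilon>0$ and an auxiliary constant $\epsilon'=\epsilon'(\epsilon)$ (take $\epsilon'=\epsilon/8$) to be used as the parameter of Theorem~\ref{thm: bounded-degree MIS}, and use the degree function $\Delta(q)=(\log q)^{\Theta(1)}=\poly\log q$, chosen large enough that $\Delta(q)\ge(\log q)^2$. Starting from a $3$SAT instance $\phi$ of size $N$ and taking $q=N$, invoke Theorem~\ref{thm: bounded-degree MIS} with $\lambda(q)=\Theta(\log\log q)$: this produces, in randomized time $N^{O(\log\log N)}$, a $\Delta$-regular graph $G$ on $|V(G)|=N^{O(\log\log N)}$ vertices such that (\yi) $\alpha(G)\ge|V(G)|/\Delta^{\epsilon'}$ and (\ni) $\alpha(G)\le|V(G)|/\Delta^{1-\epsilon'}$, unless $\NP\subseteq{\sf ZPTIME}(n^{O(\log\log n)})$. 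Now let $H=\bar K_{\Delta}$ (edgeless) and $G'=\bipp[G\vee H]$. Combining Lemma~\ref{lem:elbassioni}, Lemma~\ref{lem:independence number product}, Lemma~\ref{lemma: upper bounding the semi-induced matching for no-instance} and inequality~\eqref{eq:subadditive semi-induce} of Theorem~\ref{theorem:main} exactly as in the proof of Lemma~\ref{lemma: semi-induced matching in terms of graph products}, one gets $\induce{G'}\ge|V(G)|\Delta^{1-\epsilon'}$ in the \yi\ and $\sinduce{G'}\le|V(G)|\Delta^{\epsilon'}$ in the \ni.

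Next, feed $G'$ into Theorem~\ref{thm: semi-induced matching to UDP}. Since $H$ is edgeless, $|E(H)|=0$, so the additive term $O(|V(G)|(1+|E(H)|))$ collapses to $O(|V(G)|)$, and we obtain an instance $(\cset,\iset)$ of \UDP\ (resp.\ \SMP) with $\induce{G'}\le\opt(\cset,\iset)\le 2\sinduce{G'}+O(|V(G)|)$, where $|\cset|=|V(H)|^{O(|V(H)|)}|V(G)|=\Delta^{O(\Delta)}|V(G)|$, $|\iset|=|V(G)|\Delta$, $\max_{c}|S_c|\le O(\Delta^2)$, produced in time $|V(G)|^{O(|V(H)|)}=N^{O(\Delta\log\log N)}=N^{O(\poly\log N)}$. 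Hence in the \yi\ $\opt(\cset,\iset)\ge|V(G)|\Delta^{1-\epsilon'}$, while in the \ni\ $\opt(\cset,\iset)\le O(|V(G)|\Delta^{\epsilon'})$, a multiplicative gap of $\Omega(\Delta^{1-2\epsilon'})$. Because $\Delta\ge(\log N)^2$ we have $\log|\cset|=O(\Delta\log\Delta+\log|V(G)|)=\Theta(\Delta\log\Delta)$, so $(\log|\cset|)^{1-\epsilon}=\Delta^{1-\epsilon+o(1)}$, which is dominated by $\Delta^{1-2\epsilon'}$ for large $N$; thus a polynomial-time $(\log|\cset|)^{1-\epsilon}$-approximation algorithm for \UDP\ (or \SMP) would separate the \yi\ from the \ni. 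Likewise, with $k=\max_c|S_c|=\Theta(\Delta^2)$ the gap is $\Omega(\Delta^{1-2\epsilon'})=\Omega(k^{1/2-\epsilon'})$, so a $k^{1/2-\epsilon}$-approximation also separates the two cases for large $N$. In either case, composing such an approximation algorithm (polynomial in the instance size, which is $N^{O(\poly\log N)}$) with the reduction above decides $3$SAT in randomized time $N^{O(\poly\log N)}$, i.e.\ $\NP\subseteq{\sf ZPTIME}(n^{O(\poly\log n)})$; and Theorem~\ref{thm: semi-induced matching to UDP} supplies the reduction for both \UDP\ and \SMP, so the argument is identical in the two cases.

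\medskip\noindent\textbf{Main obstacle.} The whole argument is parameter balancing, and the tension is delicate. On one side, $\Delta$ must be a genuine fixed power of $\log N$ (e.g.\ $\ge(\log N)^2$, not merely $\Theta(\log N)$) so that $\log|\cset|=\Theta(\Delta\log\Delta)$ and the $\Delta^{1-2\epsilon'}$ gap beats $(\log|\cset|)^{1-\epsilon}$; on the other side $\Delta$ must stay $\poly\log N$ so that $|V(H)|^{O(|V(H)|)}=\Delta^{O(\Delta)}=2^{O(\poly\log N)}$ and the reduction time $|V(G)|^{O(\Delta)}$ stays quasi-polynomial in $N$ — a polynomially large degree would make Theorem~\ref{thm: semi-induced matching to UDP} take exponential time and destroy everything. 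This is precisely why the hardness hypothesis degrades from $\NP\neq\ZPP$ to $\NP\not\subseteq{\sf ZPTIME}(n^{O(\poly\log n)})$. The only other point requiring care is choosing $H$ edgeless so that the slack term in Theorem~\ref{thm: semi-induced matching to UDP} is merely $O(|V(G)|)$ and does not swamp the $\Omega(\Delta^{1-2\epsilon'})$ gap.
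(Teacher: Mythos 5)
Your proposal is correct and follows essentially the same route as the paper: take the hard instance $\bipp[G\vee H]$ with $G$ of degree $\Delta=\poly\log q$ and $H$ empty (as in Lemma~\ref{lemma: semi-induced matching in terms of graph products}), feed it to the pricing reduction of Theorem~\ref{thm: semi-induced matching to UDP}, and balance $\Delta$ against $\log|\cset|$ and $k=O(\Delta^2)$ so that the $\Delta^{1-O(\epsilon)}$ gap beats both target factors under $\NP\not\subseteq{\sf ZPTIME}(n^{\poly\log n})$. The only differences (re-deriving the lemma inline and fixing $\Delta\geq(\log q)^2$ rather than $\Delta=\log^{O(1/\epsilon)}q$) are cosmetic parameter choices, not a different argument.
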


\begin{proof}

First, we take a $\Delta(q)$-regular graph $G$ on $O(q^{O(\log\Delta(q))})$ vertices and an empty graph $H$ on $\Delta(q)$ vertices as stated in Lemma~\ref{lemma: semi-induced matching in terms of graph products}.
Thus, assuming that  $\NP \nsubseteq {\sf ZPTIME}(n^{O(\log \Delta(n))})$, there is no polynomial-time algorithm that distinguishes between the case that $\induce{\bipp[G \vee H]}\geq |V(G)|(\Delta(q))^{1-\epsilon}$ and the case that $\sinduce{\bipp[G \vee H]}\leq |V(G)|(\Delta(q))^{\epsilon}$, for function $\Delta$ whose value will be specified later.


Then we apply Theorem~\ref{thm: semi-induced matching to UDP} on $\bipp[G \vee H]$ to obtain an instance $(\cset, \iset)$ of \UDP. This means that in the \yi, the optimal revenue from $(\cset, \iset)$ is at least $\induce{G'} \geq |V(G)| (\Delta(q))^{1-\epsilon}$. Additionally, the optimal revenue in the \ni is at most $2\sinduce{G'}  + O(|V(G)|(1+|E(H)|))$, which is $O(|V(G)|(\Delta(q))^{\epsilon})$ because $\sinduce{G'}\leq |V(G)|(\Delta(q))^{\epsilon}$ (by Lemma~\ref{lemma: semi-induced matching in terms of graph products}), and the term $|V(G)|(1+|E(H)|)$ is at most $O(|V(G)|) \leq |V(G)|\Delta^{\epsilon}$ since $H$ is an empty graph. This implies the gap of $(\Delta(q))^{1-O(\epsilon)}$ between the two cases.

Now, we choose $\Delta(q) = \log^b q$ where $b =O(1/\epsilon)$. So, $|V(H)| = \Delta(q) = \log^b q$ and $|V(G)| = q^{O(\log \log q)}$. By Theorem~\ref{thm: semi-induced matching to UDP}, the number of consumers is bounded by $|\cset| \leq 2^{O(|V(H)|)} |V(G)| \leq 2^{\log^{b+1} q} = 2^{(\Delta(q))^{1+\epsilon}}$, so we obtain the gap of $(\Delta(q))^{1-O(\epsilon)}\geq \log^{1- O(\epsilon)} |\cset|$ as desired.

Lemma~\ref{lemma: semi-induced matching in terms of graph products} holds with the assumption that $\NP \nsubseteq {\sf ZPTIME}(n^{O(\log \Delta(n))})={\sf ZPTIME}(n^{O(\log \log n)})$, and the running time of the reduction in Theorem~\ref{thm: semi-induced matching to UDP} is $|V(G)|^{O(|V(H)|)} = q^{\log^b q} = q^{\poly \log q}$, so the hardness assumption we need is $\NP \not\subseteq {\sf ZPTIME}(n^{\poly \log n})$.
\fullversion{

Now to get the hardness in terms of $k^{1/2-\epsilon} = \paren{\max_{c \in \cset} |S_c|}^{1/2 -\epsilon}$, notice that $k \leq O(\Delta(q)^2)$. Therefore, the hardness in terms of $k$ is $(\Delta(q))^{1-O(\epsilon)} = k^{1/2 - O(\epsilon)}$. This holds for any $\Delta(q) \leq \poly \log q$.
}{}

\end{proof}

\subsection{Donation Center Location Problem.}
\label{sec:donation}

The Donation Center Location {\sc (Dcl)} problem is defined as
follows. The input consists of a directed bipartite graph $G=(A \cup L, E)$ with
edges directed from the set $A$ of {\em agents} to the set $L$ of {\em donation
centers}. Each center $\ell \in L$ has a capacity $c_{\ell} \in \Z$ that represents the maximum number of clients that can be served, and each
vertex $a \in A$ has a strictly ordered preference ranking of its
neighbor in $L$. We are interested in choosing a subset $L' \subseteq L$ of
centers to open and an assignment of subset $A' \subseteq A$ of agents to
centers such that: (1)~The number of agents assigned to any center $\ell$
is at most $c_{\ell}$, and (2)~Each $a \in A'$ is assigned to its
highest-ranked neighbor in $L'$. Our goal is to maximize the number of
satisfied agents.

We therefore write an instance of {\sc Dcl} as a triple $\pset = (G=(A \cup L, E), \set{c_{\ell}}_{{\ell} \in L}, \set{\preceq_a}_{a \in A})$, where the relation $\preceq_a$ represents the rank relation of agent $a \in A$. Denote by $\opt(\pset)$ the optimal value of the instance $\pset$. The following theorem makes a connection between {\sc Dcl} and the semi-induced matching problem.

\begin{theorem}
\label{thm: semi-induced to DCL}
Let $G'$ be a bipartite graph. There is a polynomial time reduction that transforms $G'$ into an instance $\pset=(G,\set{c_{\ell}}, \set{\preceq_a})$ of {\sc Dcl} such that:
\[\induce{G'} \leq \opt(\pset) \leq \sinduce{G'} \]
Moreover, $|V(G)| = |V(G')|$, $c_{\ell} = 1$ for all $\ell \in L$, and $\preceq_a = \preceq^*$ for all $a \in A$, where $\preceq^*$ is some global preference.
\end{theorem}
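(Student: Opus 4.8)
The plan is to use the obvious reduction and to do all the real work in the analysis. Given the bipartite graph $G'=(V_1\cup V_2,E)$, I would construct the {\sc Dcl} instance $\pset=(G,\set{c_\ell},\set{\preceq_a})$ by taking the agents to be $A=V_1$, the centres to be $L=V_2$, orienting each edge of $G'$ from its $V_1$-endpoint to its $V_2$-endpoint, setting $c_\ell=1$ for every $\ell\in L$, and letting $\preceq_a=\preceq^*$ for all $a$, where $\preceq^*$ is an arbitrarily fixed total order on $L$. This already gives $|V(G)|=|V(G')|$ and all the stated side conditions, so only the two inequalities remain.

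\emph{Lower bound $\induce{G'}\le\opt(\pset)$.} Let $\mathcal M$ be a maximum induced matching of $G'$, and write each of its edges as $a\ell$ with $a\in V_1$, $\ell\in V_2$. Open exactly the centres $L'=\set{\ell : a\ell\in\mathcal M \text{ for some } a}$ and serve exactly $A'=\set{a : a\ell\in\mathcal M \text{ for some } \ell}$. Since $\mathcal M$ is an induced matching, no endpoint of one edge of $\mathcal M$ is adjacent to an endpoint of another, so every $a\in A'$ has precisely one neighbour in $L'$, namely its $\mathcal M$-partner; hence the forced assignment maps each $a$ to its partner (vacuously its highest-ranked open neighbour), and each opened centre receives exactly one agent, respecting $c_\ell=1$. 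This is a feasible {\sc Dcl} solution serving $|\mathcal M|$ agents, so $\opt(\pset)\ge\induce{G'}$; note this works for \emph{any} choice of $\preceq^*$.

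\emph{Upper bound $\opt(\pset)\le\sinduce{G'}$.} Fix an optimal solution, consisting of an open set $L'$ and, by condition~(2), the induced assignment $\pi$ sending each served agent $a\in A'$ to its highest-ranked neighbour in $L'$. Because $c_\ell=1$, the map $\pi\colon A'\to L'$ is injective, so $\mathcal M=\set{a\,\pi(a) : a\in A'}$ is a matching in $G'$ with $|\mathcal M|=\opt(\pset)$. I would then exhibit a total order $\sigma$ on $V(G')$ for which $\mathcal M$ is $\sigma$-semi-induced: let every vertex of $V_1$ precede every vertex of $V_2$; within $V_1$ order $A'$ so that $\sigma(a)<\sigma(a')$ iff $\pi(a')\succ^*\pi(a)$ (least-preferred assigned centre first), placing the remaining vertices arbitrarily; order $V_2$ arbitrarily. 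Using the characterization of $\sigma$-semi-induced matchings stated just after Definition~\ref{def:semi induced number} (order $\mathcal M$ as $u_1v_1,\dots,u_qv_q$ with the agents $u_i$ being $\sigma$-increasing), it suffices to rule out, for $i<j$, the edge $u_iv_j=a_i\ell_j$. If $a_i\ell_j\in E$, then $\ell_j\in L'$ is an open neighbour of $a_i$, so since $a_i$ is assigned to its highest-ranked open neighbour $\ell_i\ne\ell_j$, strictness of $\preceq^*$ forces $\ell_i\succ^*\ell_j$, i.e.\ $\pi(a_i)\succ^*\pi(a_j)$, whence $\sigma(a_j)<\sigma(a_i)$, contradicting $i<j$. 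Thus $\mathcal M$ is $\sigma$-semi-induced, giving $\opt(\pset)=|\mathcal M|\le\sinducesigma{G'}{\sigma}\le\sinduce{G'}$.

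The one delicate point, which I would take care over, is the bookkeeping in the upper bound: one must orient $\sigma$ so that in every matched edge the agent is the $\sigma$-smaller endpoint (so that $\mathcal M$ is ordered by its agents), pick the correct direction for the order that $\preceq^*$ induces on $A'$, and invoke the ``highest-ranked open neighbour'' rule at the right moment --- this last step, together with the injectivity of $\pi$, is exactly where unit capacity $c_\ell=1$ is essential. Everything else is routine.
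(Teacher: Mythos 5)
Your proposal is correct and follows essentially the same route as the paper's proof: one side of $G'$ becomes the agents and the other the centers, unit capacities make the recovered assignment a matching, inducedness gives the lower bound via each served agent having a unique open neighbor, and the ``highest-ranked open neighbor'' rule turns the optimal solution into a semi-induced matching with respect to an order aligned with the global preference $\preceq^*$. The only (immaterial) differences are that you swap which side plays agents versus centers and derive $\sigma$ from an arbitrary $\preceq^*$ and the optimal assignment, whereas the paper fixes both $\preceq^*$ and $\sigma$ by vertex indices.
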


\begin{proof}
Given a bipartite graph $G'= (V_1\cup V_2, E)$ where $V_1 = \set{(u,1): u \in [n]}$ and $V_2 = \set{(u,2): u\in [n]}$, we create an instance of {\sc Dcl} as follows. Each vertex $(u,1)$ represents a center $\ell(u)$, and each vertex $(v,2)$ represents an agent $a(v)$. The capacity of each center $\ell(u)$ is $c_{\ell(u)} = 1$, and each agent uses a global preference $\preceq^*$ that satisfies $\ell(u) \preceq^* \ell(u')$ if and only if $u >u'$ are integers.

First, we prove the lower bound of $\opt(\pset) \geq \induce{G'}$.
Given any induced matching $\mset$ in $G'$, we argue that there is a solution of value $|\mset|$ to the {\sc Dcl} instance. For each edge $(u,1)(v,2) \in \mset$, we open the center $\ell(u)$ and assign the agent $a(v)$ to $\ell(u)$. Now,  we only need to argue that all agents that are matched in the matching $\mset$ are satisfied. For each $(v,2)$, where $(u,1)(v,2) \in \mset$, assume that $(v,2)$ prefers some other center $\ell(u')$ to $\ell(u)$ that is currently open. This means that there must be an edge $(u',1)(v,2) \in E$, contradicting the fact that $\mset$ is an induced matching.

To prove the upper bound, given any solution $L' \subseteq L, A' \subseteq A$ and assignment $\phi: A' \rightarrow L'$, we show that we can construct a $\sigma$-semi-induced matching $\mset$ in $G'$ such that $|\mset| = |A'|$. Consider the following set:
\[\mset = \set{(u,1)(v,2): \ell(u) \in L' \mbox{ and } \phi(a(v)) = \ell(u)}\]
The set $\mset$ is indeed a matching because each agent is only assigned once, and each center has unit capacity. It is sufficient to show that the matching $\mset$ is $\sigma$-semi-induced for total order $\sigma$ defined by $\sigma(u) < \sigma(u')$ if and only if $u <u'$. Assume that this is not a $\sigma$-semi-induced matching. Then there must be an edge $(u',1)(v,2)$, but $\phi(a(v)) = \ell(u)$ and $u > u'$. This means that the agent $a(v)$ prefers $\ell(u')$ to $\ell(u)$, but $a(v)$ was assigned to $\ell(u)$ instead. This contradicts the fact that the solution is feasible.

\end{proof}

\begin{corollary}
Let $\epsilon >0$ be any constant. Unless $\NP \subseteq \ZPP$, it is hard to approximate {\sc Dcl} to within a factor of $n^{1-\epsilon}$ where $n$ is the number of vertices in the input graph.
\end{corollary}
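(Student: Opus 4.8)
The plan is to compose the gap-preserving reduction of Theorem~\ref{thm: semi-induced to DCL} with the hardness of the bipartite semi-induced matching problem from Theorem~\ref{thm:semi-induced}. First I would take a bipartite graph $G'$ produced by Theorem~\ref{thm:semi-induced}: under the assumption $\NP \not\subseteq \ZPP$, no polynomial-time algorithm can distinguish the case $\induce{G'} \geq |V(G')|^{1-\epsilon}$ (the \yi) from the case $\sinduce{G'} \leq |V(G')|^{\epsilon}$ (the \ni). I then feed $G'$ into the polynomial-time reduction of Theorem~\ref{thm: semi-induced to DCL} to obtain an instance $\pset = (G,\set{c_{\ell}},\set{\preceq_a})$ of {\sc Dcl} with $|V(G)| = |V(G')| =: n$ and $\induce{G'} \leq \opt(\pset) \leq \sinduce{G'}$.

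The analysis is then immediate. In the \yi we get $\opt(\pset) \geq \induce{G'} \geq |V(G')|^{1-\epsilon} = n^{1-\epsilon}$, while in the \ni we get $\opt(\pset) \leq \sinduce{G'} \leq |V(G')|^{\epsilon} = n^{\epsilon}$. Hence a polynomial-time $n^{1-2\epsilon}$-approximation algorithm for {\sc Dcl} would separate the two cases and thus imply $\NP \subseteq \ZPP$. Since $\epsilon>0$ is an arbitrary constant, re-parametrizing (take $\epsilon' = \epsilon/2$) yields the stated $n^{1-\epsilon}$-hardness for every constant $\epsilon>0$; one may also note, using the stronger version of Theorem~\ref{thm:semi-induced} mentioned in the Remark on Stronger Results, that the hardness factor can be pushed to $n/2^{(\log n)^{3/4+\gamma}}$ under the assumption $\NP\subseteq{\sf ZPTIME}(n^{\poly\log n})$.

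There is essentially no real obstacle here; the only points needing a sentence of care are bookkeeping ones: the hard instances of Theorem~\ref{thm:semi-induced} are bipartite, which is exactly the input format required by Theorem~\ref{thm: semi-induced to DCL}; the vertex count is preserved by the reduction, so the hardness parameter $n$ is consistently the number of vertices of the {\sc Dcl} graph; and the composition of two polynomial-time reductions is polynomial-time. One may additionally remark, exactly as recorded in the statement of Theorem~\ref{thm: semi-induced to DCL}, that the hardness already holds in the restricted setting where every center has unit capacity and all agents share a single global preference order, since the reduction outputs instances of precisely this form.
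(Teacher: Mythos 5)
Your proposal is correct and follows exactly the paper's own argument: start from the hard bipartite instance $G'$ of Theorem~\ref{thm:semi-induced}, invoke the reduction of Theorem~\ref{thm: semi-induced to DCL}, and compare $\opt(\pset)\geq\induce{G'}\geq n^{1-\epsilon}$ in the \yi against $\opt(\pset)\leq\sinduce{G'}\leq n^{\epsilon}$ in the \ni to obtain the $n^{1-2\epsilon}$ gap. The extra bookkeeping remarks (bipartiteness of the source instance, preservation of the vertex count, unit capacities and the global preference order) are consistent with the paper and only strengthen the write-up.
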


\begin{proof}
First, we take a graph $G'$ as in Theorem~\ref{thm:semi-induced}, and
we then invoke Theorem~\ref{thm: semi-induced to DCL} on $G'$ to obtain an instance $\pset = (G =(A \cup L, E), \set{c_l}, \set{\preceq_a})$ of {\sc DCL}. In the \yi, where $\induce{G'} \geq |V(G')|^{1-\epsilon}$, we have that $\opt(\pset) \geq |V(G)|{1-\epsilon}$. In the \ni, where $\sinduce{G'} \leq |V(G')|^{\epsilon}$, we have $\opt(\pset) \leq |V(G)|^{\epsilon}$. Therefore, we obtain a gap of $|V(G)|^{1-2\epsilon}$ as desired.

\end{proof}

\subsection{Boxicity, Cubicity, and Threshold Dimension.}\label{sec:boxicity}

We start by giving the definitions of the problems and related notions in graph theory.

\begin{definition}[Intersection Graph]
Given a graph $G=(V,E)$, we say that a set system $\set{S_v}_{v \in V(G)} $ is a {\em set system representation} of $G$ if
\[(\forall u,v \in V(G)) uv \in E  \mbox{ iff } S_u \cap S_v \neq \emptyset\]
\end{definition}

It is well-known that any graph $G$ can be represented by a set system: for each vertex $u \in V(G)$, we define set $S_u$ to contain edges incident to $u$. We are interested in a set system representation where each set corresponds to a geometric object.

\begin{definition}[Boxicity and Cubicity]
We say that the {\em boxicity} of a graph $G$ is at most $d$ if $G$ can be represented by a set system $\set{S_v}_{v \in V(G)}$ such that each set $S_v$ is a $d$-dimensional axis-parallel hyper-rectangle in $\R^d$. Similarly, we say that the {\em cubicity} of $G$ is at most $d$ if each set $S_v$ is a unit cube in $\R^d$.
\end{definition}

In other words, the boxicity of $G$, denoted by ${\sf box}(G)$, is the minimum dimension $d$ such that we can represent each node $v \in V(G)$ as a $d$-dimensional rectangle in the geometric setting. It is known that the boxicity is at most one and two in interval graphs and planar graphs, respectively.

\begin{definition}[Threshold Dimension]
A graph $G$ is a {\em threshold graph} if there is a real number $\eta$ and weight function $w: V(G) \rightarrow \R$ such that $uv \in E(G)\Leftrightarrow w(u) + w(v) \geq \eta$. For any graph $G$, the {\em threshold dimension} of $G$ is the minimum $k$ such that there exist threshold graphs $G_1,\ldots, G_k$ where $E(G) = \bigcup_{i=1}^k E(G_i)$.
\end{definition}

Adiga et~al.~\cite{AdigaBC10} show that the problems of approximating boxicity, cubicity, and threshold dimension are at least as hard as poset dimension (within a constant factor). We get the tight hardness of these problems by combining the reductions in \cite{AdigaBC10} with our hardness of poset dimension. We provide an outline of their reductions here and the readers to~\cite{AdigaBC10} for the complete proof.

First, they show that there is a polynomial-time algorithm that transforms any poset $\vec{P}$ into a graph $G_{\vec P}$ such that ${\sf box}(G_{\vec P}) \leq \dimension{\vec{P}} \leq 2 {\sf box}(G_{\vec P})$. This implies the hardness of approximating boxicity of graphs. Since cubicity is known to be within a logarithmic factor of boxicity, approximating cubicity is also as hard as boxicity (up to a factor of $\log n$). Also, they construct graph $G'_{\vec P}$ such that the threshold dimension of $G'_{\vec P}$ is the same as the dimension of poset $\vec{P}$, hence implying the hardness of approximating threshold dimension.


\section{Conclusion and Open Problems}

We have shown that simple techniques based on graph products are powerful tools in proving hardness of approximation. While some of these results are tight, some problems are still open. In particular, it remains to close the gap between $d^{1/2 -\epsilon}$ and $O(d)$ for the semi-induced matching problem on $d$-regular graphs. Also, there is a gap between $O(k)$-approximation (see~\cite{Balcan-Blum-Pricing}) and $k^{1/2-\epsilon}$-hardness for the $k$-hypergraph vertex pricing problem.

It is also interesting to further investigate the power of our techniques in proving hardness of approximation or even in other types of lower bounds. A potential starting point is to look at problems which share common structures to those studied in this paper. For example, \UDP is a special case of the multi-user Stackelberg network pricing problem, so graph products can be used to prove the hardness of this problem. However, the approximability of the single-user version is still wide open as there is a large gap between $(2-\epsilon)$-hardness and $O(\log n)$-approximation algorithm. In fact, the proof of the $(2-\epsilon)$-hardness can be viewed as a reduction from the independent set problem, but we found no graph product techniques that can be used to boost the hardness further. (See, e.g., \cite{BriestCKLN10,BriestHK12} for detail.) 



\bibliographystyle{plain}
\bibliography{ind-matching}


\pagebreak
\appendix
\section*{Appendix}

\section{Omitted Proofs from Section~\ref{sec:rest}}
\label{sec: omitted proofs from rest}

\subsection{Proof of Theorem~\ref{thm: elbassioni MRFS}}

\danupon{The statement and proof is changed in many places. Please check carefully (especially the use of $m$ and $n$). If this is correct then we have to change the statement in the main paper and maybe other places as well.}

\restatethm{Theorem}{\ref{thm: elbassioni MRFS}}
{
Consider an instance $G=(V_1\cup V_2, E)$ of the bipartite semi-induced matching problem. Let $m=|V_1|$ and $n=|V_2|$. There is a polynomial time reduction that, for any $\beta>0$, outputs an instance $\aset= (A,\ell, \mu)$ of {\sc Mrfs} satisfying the following properties:
\squishlist
    \item ({\sc Size}) Matrix $A$ is an $m$-by-$n$ matrix and  $L = \max_{i\in [m] } \set{\ell_i} = (\beta n)^{O(m)}$.

    \item (\yi) There is a solution $\x \in \R_{+}^n$ that satisfies at least $\induce{G}$ constraints in $\aset$.

    \item (\ni) There is no solution $\x \in \R_{+}^n$ that ``$\beta$-satisfies'' more than $\sinduce{G}$ constraints in $\aset$; i.e., $|\set{i:\ell_i \leq a_i^T\x \leq \beta\mu_i}| \leq \sinduce{G}$ for all $\x$.
\squishend
}

\begin{proof}
%
\danupon{We can simplify the proof by replace $V_1\cup V_2$ by $U\cup V$ and $(v, 1)$ by $u$ and $(v, 2)$ by $v$. Also use $n_1, n_2$ instead of $m, n$}
For the sake of presentation, we represent the set of vertices of $G$ as $V_1=\set{(u,1):u\in[m]}$ and $V_2=\set{(u,2):u\in[n]}$.

We define a linear system consisting of $n$ variables $\{x_{(w, 2)}\}_{w\in [n]}$ and the following $m$ constraints:
\begin{equation}\label{eq:constraints}
\forall u\in [m], \quad (\beta n)^{3u-1} \leq \sum_{w: (u, 1)(w,2) \in E(G)} x_w \leq (\beta n)^{3 u}
\end{equation}
Formally, for each $(u,w)\in[m]\times[n]$, define $a_{u,w}=1$ if $(u, 1)(w,2) \in E(G)$ and $a_{u,w}=0$ otherwise.
Then we create a constraint $u$ for each vertex $(u,1) \in V_1$ as $\ell_u \leq \sum_{(w,2) \in V_2} a_{u,w} x_w \leq \mu_u$ where $\ell_u=(\beta n)^{3u-1}$ and $\mu_u=(\beta n)^{3 u}$.


By the construction, the number of constraints is $m=|V_1|$, and the number of variables is $n= |V_2|$. Also, notice that $L = \max_{(u,1) \in V_1} \ell_u = (\beta n)^{O(m)}$. This proves the first property.

Let $\mset =\set{(u_i, 1)(w_i,2): i =1,\ldots,r }$ be an induced matching of size $r$ in $G$. We can define the following solution for linear system $\aset$: For each $i =1,\ldots, r$, we have $x_{w_i} = \ell_{u_i}$, and $x_{w'} = 0$ for all other $w'\in [n]$. It suffices to show that a constraint $u_i$ is satisfied for all $i \in [r]$. To see this, consider any constraint $u_i$, where $i\in[r]$. Only variables $x_{w_j}$ with $(u_i,1)(w_j,2) \in E(G)$ participate in this constraint, and the only variable with positive value is $x_{w_i}= \ell_{u_i}$; otherwise, it would contradict the fact that $\mset$ is an induced matching. This proves the second property.

Now, to prove the third property, assume that we have a solution $\x$ that $\beta$-satisfies $r$ constraints, for some $0<\beta\leq |V(G)|$; i.e, there exists a subset $V^*_1 \subseteq V_1$, denoted by $V_1^* = \set{(u_1, 1),\ldots, (u_r, 1)}$, such that
$$\forall (u_i, 1)\in V_1^* \quad (\beta n)^{3u_i-1} \leq \sum_{w: (u_i, 1)(w,2) \in E(G)} x_w \leq \beta(\beta n)^{3 u_i}\,.$$
%
%
We note the following claim.

\begin{claim}
For any $(u_i, 1)\in V_1^*$, there exists $w_i\in [n]$ such that $x_{w_i} \geq \ell_{u_i}/n$ and $(u_i,1)( w_i,2) \in E(G)$.
\end{claim}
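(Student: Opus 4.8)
The plan is to prove this claim by a simple averaging (pigeonhole) argument applied to the lower bound of the $\beta$-satisfied constraint indexed by $(u_i,1)$.

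First I would fix any $(u_i,1)\in V_1^*$ and recall that, since this constraint is $\beta$-satisfied, we have
\[
\sum_{w:\, (u_i,1)(w,2)\in E(G)} x_w \;\geq\; \ell_{u_i} \;=\; (\beta n)^{3u_i-1}.
\]
The number of terms in this sum is at most $n$, since the variables are indexed by $w\in[n]$ and each appears at most once. By averaging, at least one term $x_{w_i}$ with $(u_i,1)(w_i,2)\in E(G)$ must satisfy $x_{w_i}\geq \ell_{u_i}/n$. This $w_i$ is the desired index: by construction it is a neighbour of $(u_i,1)$ in $G$ (only neighbours contribute nonzero coefficients to the constraint), and it carries at least a $1/n$ fraction of the lower bound.

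The proof of the claim itself is entirely routine — there is no real obstacle here. The main work in the surrounding argument (which this claim feeds into) will be to argue that the edges $(u_i,1)(w_i,2)$, $i=1,\dots,r$, extracted this way form a $\sigma$-semi-induced matching in $G$ for the total order $\sigma$ that lists vertices of $V_1$ in the order $u_1,u_2,\dots$ (say sorted by index). The key point exploited there will be the exponential separation of the bounds: if some $w_i$ were also a neighbour of a vertex $(u_j,1)$ with $u_j>u_i$, then $x_{w_i}\geq \ell_{u_i}/n = (\beta n)^{3u_i-1}/n$ would already overshoot the upper bound $\beta\mu_{u_j}=\beta(\beta n)^{3u_j}$ only if the indices were out of order; conversely the chosen geometric gaps $(\beta n)^{3u-1}$ versus $(\beta n)^{3u}$ versus $(\beta n)^{3(u+1)-1}$ are wide enough that a single large variable cannot simultaneously help satisfy two constraints whose indices straddle it in the wrong direction. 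That counting of which variable can contribute to which constraint, together with the $\beta$-slack, is where the real care is needed; the present claim is just the first, easy extraction step.

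Thus I would simply write: \emph{Since $(u_i,1)$ is $\beta$-satisfied, $\sum_{w:(u_i,1)(w,2)\in E(G)} x_w \geq \ell_{u_i}$; as there are at most $n$ nonzero terms, some $x_{w_i}$ with $(u_i,1)(w_i,2)\in E(G)$ is at least $\ell_{u_i}/n$.} and then move on to the semi-induced matching verification as the substantive part of the theorem's proof.
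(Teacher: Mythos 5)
Your proof of the claim is correct and is exactly the paper's argument: the constraint indexed by $u_i$ is $\beta$-satisfied, hence its sum of at most $n$ variables (all indexed by neighbours of $(u_i,1)$) is at least $\ell_{u_i}$, so by averaging one such variable is at least $\ell_{u_i}/n$. Nothing further is needed for this step.
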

\begin{proof}
Consider the constraint $u_i$, which involves variables $x_{w'}$ for all $w'\in [n]$ such that $(u_i,1)( w',2) \in E(G)$. Since there are at most $n$ such variables and $\sum_{w': (u_i, 1)(w',2) \in E(G)} x_{w'} \geq \ell_{u_i}$, one of the variables $x_{w'}$ must have a value of at least $\ell_{u_i}/n$. Thus, $w_i=w'$ is the desired index, proving the claim.
\qedhere
\end{proof}
Next, we define a set of matching $\mset$ by $\set{(u_{i},1)(w_i,2)}_{i=1}^r$ where $w_i$ is as in the above claim.
It is not difficult to check that $\mset$ is a matching: For any $u_i>u_j$, note that $x_{w_i}\geq \ell_{u_i}/n = (\beta n)^{3u_i-1}/n>\beta (\beta n)^{3 u_j}=\beta \mu_{u_j}\geq x_{w_j}$; thus, $w_i\neq w_j$.

We define a total order $\sigma$ on $V(G)$ as $\sigma(v) = v$ for all $(v, 1) \in V_1$ and $\sigma(w) = m + w$ for all $(w, 2) \in V_2$.
We claim that $\mset$ is a $\sigma$-semi-induced matching.
To see this, assume that it is not. Then, by the definition of $\sigma$-semi-induced matching and the fact that $G$ is bipartite, there must be some edge $(u_i, 1)(w_j,2) \in E(G)$ for some $i$ and $j$ such that $u_i < u_j$. Observe that
$$\sum_{w': (u_i,1)( w',2)\in E(G)} x_{w'} \geq x_{w_j} \geq \ell_{u_j}/n = (\beta n)^{3u_j-1}/n>\beta (\beta n)^{3 u_i}=\beta \mu_{u_i}.$$
This means that constraint $u_i$ is violated by more than a factor of $\beta$, contradicting the assumption that $\x$ $\beta$-satisfies constraints corresponding to vertices in $V_1^*$. This proves the third claim and completes the proof of Theorem~\ref{thm: elbassioni MRFS}.
\end{proof}

\subsection{Equivalence between semi-induced matching and maximum expanding sequence}\label{sec:equivalence semi induced and expanding seq}

\danupon{To discuss: Should we change in the main paper that MES is equivalent to $\sigma$-semi-induced matching?}

In this section, we show that the maximum expanding sequence problem is in fact equivalent to the
semi-induced matching problem.


\paragraph{Maximum Expanding Sequence} We are given an ordered collection of sets
$\sset = \set{S_1,\ldots, S_m}$ over the ground elements $\uset$.
An {\em expanding sequence} $\phi = (\phi(1),\ldots, \phi(\ell))$ of
length $\ell$ is a selection of sets $S_{\phi(1)}, \ldots,
S_{\phi(\ell)}$ such that, for all $j: 2 \leq j \leq \ell$, we have $\phi(j-1)<\phi(j)$\danupon{I added this} and
$S_{\phi(j)} \not\subseteq \bigcup_{j' < j} S_{\phi(j')}$.
Our objective is to compute an expanding sequence of maximum
length.

Given an instance $(\sset, \uset)$ of the maximum expanding sequence problem, we denote by $\opt(\sset, \uset)$ the optimal value of the instance. The following theorem shows that this problem is equivalent to the maximum semi-induced matching problem.

\begin{theorem}
Let $(\sset, \uset)$ be an instance of the maximum expanding sequence problem. Then there is a polynomial-time reduction that constructs an instance $(G, \sigma)$ of the $\sigma$-semi-induced matching problem such that $\sinducesigma{G}{\sigma} = \opt(\sset, \uset)$. Conversely, given an instance $(G, \sigma)$ of the semi-induced matching problem, we can construct $(\sset, \uset)$ such that $\opt(\sset, \uset) = \sinducesigma{G}{\sigma}$.
\end{theorem}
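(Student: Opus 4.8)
The statement asserts a two-way equivalence between the maximum expanding sequence problem and the $\sigma$-semi-induced matching problem. The natural approach is to exhibit the two reductions and show each preserves the optimal value exactly. The key observation is that a semi-induced matching, when we order its edges $u_1v_1,\dots,u_qv_q$ with $\sigma(u_1)<\dots<\sigma(u_q)$, is governed by a condition that is essentially "for $i<j$, $u_i$ is not adjacent to $v_j$" (recall from Definition~\ref{def:semi induced number} and the discussion following it that $\mset$ is $\sigma$-semi-induced iff for all $i<j$, $\mset$ — equivalently $G$ — has no edge in $\set{u_iu_j, u_iv_j}$). Meanwhile an expanding sequence requires that each new set $S_{\phi(j)}$ introduces a ground element not covered by earlier sets. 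So the plan is to match "edge $u_iv_i$" with "set $S_{\phi(i)}$" and "the new element witnessing $S_{\phi(j)}$" with "the vertex $v_j$".

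\textbf{From expanding sequence to semi-induced matching.} Given $(\sset,\uset)$ with $\sset=\set{S_1,\dots,S_m}$, I would build a bipartite graph $G$ with left vertices $\set{\ell_1,\dots,\ell_m}$ (one per set) and right vertices $\set{r_e : e\in\uset}$ (one per ground element), with an edge $\ell_i r_e$ iff $e\in S_i$. Define the total order $\sigma$ so that all left vertices precede all right vertices, with $\ell_1,\dots,\ell_m$ in index order (the order of the right vertices is irrelevant, e.g. arbitrary). Now given an expanding sequence $\phi=(\phi(1),\dots,\phi(\ell))$, for each $j$ pick a witness element $e_j\in S_{\phi(j)}\setminus\bigcup_{j'<j}S_{\phi(j')}$ and form the edge set $\mset=\set{\ell_{\phi(j)}r_{e_j}: 1\le j\le \ell}$. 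This is a matching because the left endpoints are distinct (the $\phi(j)$ are strictly increasing) and the right endpoints $e_j$ are distinct (each $e_j$ is new, so not equal to any earlier $e_{j'}$). It is $\sigma$-semi-induced: writing the ordered edges as above with $u_j=\ell_{\phi(j)}$, $v_j=r_{e_j}$, for $i<j$ there is no edge $u_iu_j$ (both left, $G$ bipartite) and no edge $u_iv_j=\ell_{\phi(i)}r_{e_j}$ because $e_j\notin S_{\phi(i)}$ as $i<j$. Conversely, given a $\sigma$-semi-induced matching, order its edges so left endpoints increase, say $\ell_{a_1}r_{e_1},\dots,\ell_{a_q}r_{e_q}$ with $a_1<\dots<a_q$; set $\phi(j)=a_j$. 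For each $j$, the semi-induced property gives $e_j\notin S_{a_i}=S_{\phi(i)}$ for all $i<j$, hence $S_{\phi(j)}\not\subseteq\bigcup_{j'<j}S_{\phi(j')}$, so $\phi$ is an expanding sequence of the same length. This gives $\sinducesigma{G}{\sigma}=\opt(\sset,\uset)$.

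\textbf{From semi-induced matching to expanding sequence.} For the converse direction, given $(G,\sigma)$, I would first reduce to the case where $G$ is "bipartite with respect to $\sigma$" in the relevant sense: only edges $uv$ with $\sigma(u)<\sigma(v)$ and the semi-induced condition only constrains pairs $u_i$–$v_j$, so I can orient each edge from its lower to its higher $\sigma$-endpoint and treat "lower endpoints" as set-indices and "higher endpoints" as elements. Concretely, for each vertex $v$ let $S_v=\set{w : vw\in E(G),\ \sigma(v)<\sigma(w)}$, indexing the collection by the $\sigma$-order of the $v$'s, over ground set $\uset=V(G)$. A $\sigma$-semi-induced matching $\set{u_iv_i}$ (ordered, $\sigma(u_i)<\sigma(v_i)$, $\sigma(u_i)$ increasing) maps to the expanding sequence selecting $S_{u_1},\dots,S_{u_q}$: since no edge $u_iv_j$ exists for $i<j$, the element $v_j$ lies in $S_{u_j}$ but in none of $S_{u_1},\dots,S_{u_{j-1}}$; and the selected indices are increasing in $\sigma$, which we can take as the linear order on the collection. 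The reverse map recovers a matching as before. Hence $\opt(\sset,\uset)=\sinducesigma{G}{\sigma}$.

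\textbf{Main obstacle.} The arguments are almost bookkeeping once the dictionary "edge $\leftrightarrow$ chosen set, right-endpoint $\leftrightarrow$ witness new element" is fixed. The one point requiring care is the exact matching between the asymmetry in the definition of $\sigma$-semi-induced matching (which, as noted after Definition~\ref{def:semi induced number}, forbids only $u_iu_j$ and $u_iv_j$ for $i<j$, while \emph{allowing} $v_iu_j$ and $v_iv_j$) and the asymmetry in the expanding-sequence condition (new set must not be \emph{contained} in the union, but may share old elements). One has to verify that "allowing edges $v_ir_{e_j}$" corresponds precisely to "allowing $e_j$ to belong to a later set $S_{\phi(i')}$", which it does since the expanding condition only looks backward. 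I would make sure the ordering conventions on $\sigma$ (all left before all right in the forward direction; orienting by $\sigma$ in the backward direction) are stated cleanly so that "order the matching edges by left endpoint" is unambiguous, and double-check the distinctness of witnesses $e_j$ — this uses crucially that each $e_j$ is \emph{new}, not merely that $S_{\phi(j)}\not\subseteq\bigcup_{j'<j}S_{\phi(j')}$ with an arbitrary choice.
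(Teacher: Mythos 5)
Your forward direction (from $(\sset,\uset)$ to a bipartite graph with left vertices for sets, right vertices for ground elements, and $\sigma$ placing all left vertices first in index order) is exactly the paper's construction, and you in fact verify both inequalities more explicitly than the paper does. The problem is your converse direction. For a general instance $(G,\sigma)$, the construction $S_v=\set{w: vw\in E(G),\ \sigma(v)<\sigma(w)}$ over ground set $\uset=V(G)$ does \emph{not} satisfy $\opt(\sset,\uset)=\sinducesigma{G}{\sigma}$. Take $G$ to be the path $a$--$b$--$c$ with $\sigma(a)<\sigma(b)<\sigma(c)$. Then $S_a=\set{b}$, $S_b=\set{c}$, $S_c=\emptyset$, and $(S_a,S_b)$ is an expanding sequence of length $2$ (witnesses $b$ and $c$), whereas every matching in $G$ has size $1$, so $\sinducesigma{G}{\sigma}=1<2=\opt(\sset,\uset)$. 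The step that fails is precisely the one you wave through with ``the reverse map recovers a matching as before'': in your converse instance the set indices and the ground elements live in the \emph{same} vertex set, so a vertex (here $b$) can be used both as a selected set index and as a witness element of an earlier set; the resulting edge set is then not a matching, and the inequality $\opt(\sset,\uset)\leq\sinducesigma{G}{\sigma}$ breaks. In the forward direction this collision cannot occur because sets and elements are distinct sides of a bipartition, which is exactly the structural feature your orientation trick does not recreate.

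For comparison, the paper does not attempt a general-graph converse at all: it proves the forward construction and asserts that the converse ``is clear from the description,'' i.e.\ it inverts the same bipartite dictionary, which is legitimate only when $G$ is bipartite with one side entirely preceding the other under $\sigma$ (the setting actually produced by the forward reduction and used in the applications). Your construction does specialize correctly to that case --- there $S_v=\emptyset$ for every right vertex, so selected indices stay on the left and witnesses on the right, and the recovered edge set is a matching --- so the fix is to either restrict the converse to that bipartite setting, or add a gadget that separates the ``index'' and ``element'' roles of each vertex (e.g.\ only create sets for vertices that you allow as lower endpoints and argue the optimum is unchanged), rather than claiming the identity-preserving reduction for arbitrary $(G,\sigma)$.
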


\begin{proof}
We only prove one direction of the reduction. It will be clear from the description that the converse also holds.
Given an instance $(\sset, \uset)$ of the expanding sequence problem, we construct the bipartite graph $G=(V_1 \cup V_2, E)$ where $V_1 = \set{(i,1): i \in [|\sset|]}$ and $V_2 = \set{(i,2): i \in [|\uset|]}$. Each $S_i \in \sset$ corresponds to $(i,1) \in V_1$ and each element $j \in \uset$ corresponds to vertex $(j,2)$ in $V_2$. The set $S_i$ contains $j$ if and only if $(i,1) (j,2) \in E$, and finally the total order $\sigma$ is defined as follows:
\begin{itemize}
\item $\sigma((i,1)) = i$ for all $0\leq i\leq |\sset|-1$, and
\item $\sigma((j,2)) = |\sset| +j$ for all $0\leq j\leq |\uset|-1$.
\end{itemize}
Notice that the total order $\sigma$ put the order of vertices in $V_1$ before those in $V_2$, and the ordering of vertices in $V_1$ are ordered according to their corresponding sets.

\parinya{I rephrased this paragraph.}We now claim that expanding sequences in $(\sset, \uset)$ are equivalent to
$\sigma$-semi-induced matchings in $G$.
For any expanding sequence,
$S_{\phi(1)},\ldots,S_{\phi(\ell)}$, we define the $\sigma$-semi-induced matching as follows: For each $j=1,\ldots, \ell$, we have an edge $(\phi(j),1)(\psi(j), 2)$ where $\psi(j)$ is defined as an arbitrary element in $S_{\phi(j)} \setminus \paren{\bigcup_{j' < j} S_{\phi(j')}}$ (we know such element exists due to the property of expanding sequences). It is now easy to check that the set $\mset= \set{(\phi(j), 1)(\psi(j), 2)}$ is indeed a $\sigma$-semi-induced matching.
\qedhere
\end{proof}

\subsection{Proof of Theorem~\ref{thm: semi-induced matching to UDP}}
\newcommand{\OptUDP}{\opt_{\mathsf{UDP}}}
\newcommand{\OptSMP}{\opt_{\mathsf{SMP}}}

First, we describe our construction. We view the graph $G' = \bipp[G
\vee H]$ as a bipartite graph $(V_1 \cup V_2, E)$ where $V_1= \set{(u,a,1): u \in V(G), a \in
  V(H)}$ and $V_2 = \set{(u,a,2): u \in V(G), a \in V(H)}$.
For convenience, we may think of vertices in $V(G)$ and $V(H)$ as integers (so that their ordering and arithmetic can be naturally done).
For each vertex $(u,a,2) \in V_2$, we have an item $I(u,a)$.
So, $\iset=\set{I(u,a): u \in V(G), a \in V(H)}$, and hence $|\iset| = |V(G)||V(H)|$.
For each vertex $(u,a,1) \in V_1$, we have $n^{3a}$ consumers $\cset(u,a) = \set{c(u,a,r)}_{r=1}^{n^{3a}}$, and each such consumer in this set has budget $B_{c(u,a,r)} = 1/n^{3a}$ and an associated set $S_{c(u,a,r)} =
\set{I(v,b): (u,a,1)(v,b,2) \in E}$.
The final set of consumers is $\cset = \bigcup_{u,a} \cset(u,a)$.
This completes the construction. The instance $(\cset, \iset)$ here will be used as both \UDP and \SMP instances. 

We first show that the optimal revenue we receive from the above instance is at least the size of the maximum induced matching, for both \UDP and \SMP. Let $\OptUDP(\cset, \iset)$ and $\OptSMP(\cset, \iset)$ denote the optimal value on instance $(\cset, \iset)$ of \UDP and \SMP, respectively.

\begin{lemma} The followings hold for \UDP and \SMP:
\begin{itemize}
\item $\OptUDP(\cset, \iset) \geq \induce{G'}$
\item $\OptSMP(\cset, \iset) \geq \induce{G'}$
\end{itemize}
\end{lemma}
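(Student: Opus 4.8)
The plan is to exhibit an explicit price function that extracts revenue at least $\induce{G'}$, and to use the same price function (with the same analysis) for both \UDP and \SMP, since the construction of $(\cset,\iset)$ is identical for both problems. First I would fix a maximum induced matching $\mset$ in $G' = \bipp[G\vee H]$, written as a set of edges $(u_t,a_t,1)(v_t,b_t,2)$ for $t = 1,\dots,\induce{G'}$, where each left endpoint lies in $V_1$ and each right endpoint in $V_2$. For each matched right-endpoint $(v_t,b_t,2)$, I would set the price of the corresponding item $I(v_t,b_t)$ to be exactly the budget of the consumers in the class $\cset(u_t,a_t)$ associated with the matched left endpoint, i.e. $p(I(v_t,b_t)) = B_{c(u_t,a_t,r)} = 1/n^{3a_t}$. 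All items $I(w,c)$ that are not the right endpoint of any edge of $\mset$ are priced at $+\infty$ (or a sufficiently large value so that no consumer on a bounded budget can afford them).

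The key step is to show that every consumer in every class $\cset(u_t,a_t)$ (for $t$ in the matching) buys and pays $1/n^{3a_t}$. For \UDP: the set $S_{c(u_t,a_t,r)} = \{I(v,b): (u_t,a_t,1)(v,b,2)\in E\}$ contains $I(v_t,b_t)$ (since $(u_t,a_t,1)(v_t,b_t,2)\in\mset\subseteq E$) priced at $1/n^{3a_t}$, and the cheapest affordable item the consumer can buy is $I(v_t,b_t)$: any other item $I(v,b)\in S_{c(u_t,a_t,r)}$ is either priced $+\infty$ (not matched) and unaffordable, or is the right endpoint of some other edge $(u_{t'},a_{t'},1)(v_{t'},b_{t'},2)\in\mset$ with $(v,b) = (v_{t'},b_{t'})$ — but then the edge $(u_t,a_t,1)(v_{t'},b_{t'},2)$ would be present in $G'$, contradicting that $\mset$ is an \emph{induced} matching (it would join the two edges of $\mset$). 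Hence the only affordable item in $S_{c(u_t,a_t,r)}$ is $I(v_t,b_t)$ itself, priced exactly at the budget, so the consumer buys it and pays $1/n^{3a_t}$. Since $\cset(u_t,a_t)$ has $n^{3a_t}$ consumers, the class contributes $n^{3a_t}\cdot (1/n^{3a_t}) = 1$ to the revenue, and summing over the $\induce{G'}$ edges of $\mset$ gives total revenue $\geq \induce{G'}$.

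For \SMP the argument is essentially the same: a consumer $c(u_t,a_t,r)$ buys the \emph{whole} set $S_{c(u_t,a_t,r)}$ iff $\sum_{I(v,b)\in S_{c(u_t,a_t,r)}} p(I(v,b)) \leq B_{c(u_t,a_t,r)} = 1/n^{3a_t}$. By the induced-matching argument above, every item in $S_{c(u_t,a_t,r)}$ other than $I(v_t,b_t)$ is unmatched and hence priced $+\infty$ — so for this to work I should instead price unmatched items at $0$ rather than $+\infty$ in the \SMP instance (or handle \SMP with a separate, near-identical price function), so that the sum over $S_{c(u_t,a_t,r)}$ equals $p(I(v_t,b_t)) = 1/n^{3a_t}$, exactly the budget; then every consumer in the class buys, and only $I(v_t,b_t)$ carries revenue, again yielding $1$ per matched edge. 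The main obstacle — and the only real subtlety — is this bookkeeping of how to price the \emph{unmatched} items differently for \UDP (where high prices block purchases of the wrong item) versus \SMP (where one needs the extra items to cost nothing so the bundle stays within budget); once the induced-matching property is invoked to rule out "cross" edges, both cases reduce to the same counting $n^{3a_t}\cdot(1/n^{3a_t}) = 1$ summed over the matching.
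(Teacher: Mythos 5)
Your proposal is correct and follows essentially the same route as the paper: price each matched item $I(v,b)$ at the budget $1/n^{3a}$ of its partner's consumer class, price all other items at $\infty$ for \UDP and (as you correct yourself mid-proof) at $0$ for \SMP, and invoke the induced-matching property to show each matched class $\cset(u,a)$ contributes exactly one unit of revenue. The only difference is presentational: the paper fixes the two price assignments for unmatched items up front rather than discovering the \SMP adjustment along the way.
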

\begin{proof}
Let $\mset$ be an induced matching of cardinality $K$.
First, for each $(u,a,1)(v,b,2)\in \mset$, we set the price of $I(v,b)$ to $p(I(v,b)) = 1/n^{3a}$.
Next, we set prices of the other items.
For \UDP, we set their prices to $\infty$.
For \SMP, we set their prices to $0$.
It is clear that this price function is well-defined because $\mset$ is a matching.

Now, we argue that the revenue that can be made from the price function $p$ is $K$ for both \UDP and \SMP.
It suffices to show that, for each pair $(u,a)$ such that $(u,a,1)$ is matched in $\mset$, each consumer in the set $\cset(u,a)$ pays the price of $1/n^{3a}$.
Consider any edge $(u,a,1)(v,b,2) \in \mset$.
For \UDP, any consumer $c \in \cset(u,a)$ has exactly one item $I(v,b)\in S_c$ with finite price of $1/n^{3a}$; otherwise, $\mset$ would not be an induced matching.
Similarly, for \SMP, any consumer $c \in \cset(u,a)$ has exactly one item in $I(v,b)\in S_c$ with non-zero price of $1/n^{3a}$.
Therefore, the total profit made from consumers in $\cset(u,a)$ is exactly one for both \UDP and \SMP, and the
lemma follows.
\end{proof}

The next lemma proves the upper bound of the revenue.

\begin{lemma}
\label{lemma: upper bound of OPT}
The followings hold for \UDP and \SMP:
\begin{itemize}
\item $\OptUDP(\cset, \iset) \leq 2 \sinduce{G'} + |V(G)|(|E(H)|+1)$
\item $\OptSMP(\cset, \iset) \leq 2 \sinduce{G'} + O(|V(G)||E(H)|)$
\end{itemize}
\end{lemma}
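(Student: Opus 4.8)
The plan is to fix an optimal price function $p$ (for \UDP; the \SMP case is analogous and noted below) and to show that its total revenue can be "converted" into a $\sigma$-semi-induced matching of $G'$, up to a factor $2$ and an additive error $O(|V(G)|(1+|E(H)|))$ accounting for consumers that buy "internal" ($H$-neighbour or diagonal) items. First I would set up the decomposition: for each $(u,a,1)\in V_1$ the whole class $\cset(u,a)$ of $n^{3a}$ identical consumers behaves the same — either all buy nothing, or all buy the same cheapest item $\iota(u,a)=I(v,b)\in S_{c(u,a)}$, which by feasibility satisfies $p(I(v,b))\le 1/n^{3a}$. Hence $\mathrm{rev}(u,a):=n^{3a}p(\iota(u,a))\le 1$ and $\OptUDP(\cset,\iset)=\sum_{(u,a,1)\in V_1}\mathrm{rev}(u,a)$. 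Since $(u,a,1)(v,b,2)\in E(G')$ iff $uv\in E(G)$, or $ab\in E(H)$, or $(u,a)=(v,b)$, I split the classes (with $\iota(u,a)=I(v,b)$) into three types: $G$-type ($uv\in E(G)$), $H$-type ($uv\notin E(G)$ but $ab\in E(H)$), and diagonal ($(u,a)=(v,b)$). The $H$-type classes are easy: for a fixed $u$, a class can only be $H$-type at a level $a$ with $\deg_H(a)\ge 1$, and the number of such levels over all $u$ is at most $2|E(H)|$; since each class contributes at most $1$, the total $H$-type revenue is $O(|V(G)|\,|E(H)|)$.

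The core is to bound the $G$-type and diagonal revenue by $2\,\sinduce{G'}+O(|V(G)|)$. For this I would exploit the exponential budget spacing exactly as in the {\sf Mrfs} reduction (Theorem~\ref{thm: elbassioni MRFS}): if $\iota(u,a)=I(v,b)$ is $G$-type or diagonal and $\mathrm{rev}(u,a)\ge\tfrac12$, then $\tfrac{1}{2n^{3a}}\le p(I(v,b))\le\tfrac{1}{n^{3a}}$, so the price of the bought item pins down the level $a$ up to a multiplicative factor $2<n$, i.e.\ the level is a function of the bought item. Keeping only these "heavy" classes, form $\mset^*=\{(u,a,1)(v,b,2):\iota(u,a)=I(v,b),\ G\text{- or diagonal-type},\ \mathrm{rev}(u,a)\ge\tfrac12\}$. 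The "light" $G$-type/diagonal classes at a fixed $u$ contribute only $O(1)$ in total: once the top nonempty level at $u$ is fixed, any lower class that still buys a positively-priced item from the fixed set $T_u=\{I(v,c):uv\in E(G)\}$ together with $\{I(u,\cdot)\}$ pays at most $n^{-3}$ times the payment of the top class, so these sum geometrically — giving $O(|V(G)|)$ overall.

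It then remains to argue that $\mset^*$ is the union of at most two $\sigma$-semi-induced matchings for the natural order $\sigma$ placing $V_1$ before $V_2$. That $\mset^*$ is essentially a matching follows because the bought item determines the level and then minimality of its price forces uniqueness on the $V_1$-side; for the "no edge $(u,a,1)(v',b',2)$" obstruction in Definition~\ref{def:semi induced number}, the characterisation $(u,a,1)(v',b',2)\in E(G')\iff uv'\in E(G)\lor ab'\in E(H)\lor(u,a)=(v',b')$ together with the cheapest-item property translates such an edge into a price inequality making $\iota(u,a)$ not the cheapest item in $S_{c(u,a)}$, a contradiction — after discarding one of two interleaving sub-families, which is the source of the factor $2$. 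Hence $|\mset^*|\le 2\,\sinduce{G'}$, and combining the three estimates gives $\OptUDP(\cset,\iset)\le 2\,\sinduce{G'}+O(|V(G)|(1+|E(H)|))$. For \SMP the only change is that a positive-revenue class now pays $n^{3a}\sum_{i\in S_{c(u,a)}}p(i)\le 1$; I would run the same trichotomy and charging on the item of minimum positive price inside $S_{c(u,a)}$, leaving the factor $2$ and the additive term intact (the slightly weaker $O(|V(G)||E(H)|)$ in the statement merely reflects looser constants).

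The main obstacle, and where the bookkeeping is delicate, is precisely the claim that $\mset^*$ is a union of a bounded number of $\sigma$-semi-induced matchings: one has to choose $\sigma$ and the two-way partition so that both the matching property and the "no crossing edge" property follow purely from (i) the exponential level-spacing and (ii) the fact that each surviving consumer buys the globally cheapest item in its set — mimicking the ordering used in the expanding-sequence reduction in order to keep the loss at exactly $2$ rather than a larger constant. Everything else (the $H$-type count, the geometric tail of light classes) is routine.
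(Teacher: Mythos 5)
The main gap is your treatment of the ``light'' classes. Your geometric-sum argument is sound only for purchases from the fixed set $T_u=\{I(v,c):uv\in E(G)\}$: there the price paid by every level is bounded by the single number $\min_{I\in T_u}p(I)$, so the revenues collapse geometrically across levels. It does not apply to diagonal purchases, where each level $a$ has its own item $I(u,a)$: nothing in the structure prevents $p(I(u,a))\approx 0.4/n^{3a}$ for every $a$ (say on an independent set of $G$, with all neighbouring items priced high), in which case every class $(u,a)$ is a light diagonal class with revenue $\approx 0.4$, and the light contribution at a single $u$ is $\Theta(|V(H)|)$, not $O(1)$ --- overall $\Theta(\alpha(G)\,|V(H)|)$, which swamps the additive $O(|V(G)|(1+|E(H)|))$ term you allow. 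Ruling this out requires the optimality of $p$, which you fix but never actually use. The paper's proof does exactly this with a global doubling argument (the ``tight index'' claim): if consumers paying less than a $1/n$-fraction of their budget carried more than half the revenue, doubling all prices would beat the optimum. This is where the factor $2$ in the lemma really comes from; it is not the ``two interleaving sub-families'' you invoke.

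The second gap is the claim that $\mset^*$ is essentially a matching (or a union of two $\sigma$-semi-induced matchings). The price of the bought item does pin down the level, but it does not force distinct $V_2$-endpoints: at a fixed level $a$, every $u\in N_G(v)$ whose cheapest affordable item is $I(v,b)$ priced in $[1/(2n^{3a}),1/n^{3a}]$ buys that same item heavily, so up to $\deg_G(v)$ edges of $\mset^*$ share the endpoint $(v,b,2)$, and no bounded-multiplicity decomposition exists. The paper avoids this entirely: heavy classes whose expensive-but-affordable item is a $G$-neighbour or $H$-neighbour (``non-canonical'' tight indices) are never placed in the matching; the exponential budget spacing shows there is at most one $G$-type such index per vertex $u\in V(G)$ and at most $|V(G)|\,|E(H)|$ $H$-type ones, so they are charged to the additive term, and the semi-induced matching is built only from the diagonal edges $(u,a,1)(u,a,2)$ of the remaining canonical tight indices, ordered by increasing level for \UDP and by decreasing level for \SMP. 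Your final ordering step is in the right spirit, but without the optimality/doubling argument and the canonical/non-canonical split the two key quantitative claims in your outline fail.
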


\begin{proof}
Let $p^*$ be any {\em optimal} price function for either \UDP or \SMP, and let $K$ be the revenue that we obtain from $p^*$. Our goal is to show that we can find a semi-induced matching of cardinality $K/2 - |V(G)|(|E(H)|+1)$ in $G'$.

First, observe that, for any $u \in V(G), a \in V(H)$, all consumers in
$\cset(u,a)$ pay exactly the same price since they desire the same set of items and have the same budget. So, we will treat these consumers as a bundle and refer to index $(u,a)$ as a representative of all consumers in $\cset(u,a)$. 

We need the following notion of tight index, which intuitively captures the consumers who spend a sufficiently large fraction of their budgets in the solution $p^*$. 

\begin{definition}[Tight Index]
We say that an index $(u,a)$ is {\em tight} if consumers in $\cset(u,a)$ pay between $1/n^{3a+1}$ and $1/n^{3a}$ (i.e., between $1/n$ and $1$ proportion of their budgets).
\end{definition}

\begin{claim}
The number of tight indices is at least $K/2$.
\end{claim}
\begin{proof}
Assume for a contradiction that the number of tight indices is less than $K/2$.
Consider a non-tight index $(u,a)$ such that consumers in $\cset(u, a)$ {\em can afford} to buy items:
\begin{itemize}
\item For \UDP, $\displaystyle\min_{(v, b): (u,a,1)(v,b,2)\in E(G')}  p^*(I(v, b))\in [0, 1/n^{3a+1})$.
\item For \SMP, $\displaystyle\sum_{(v, b): (u,a,1)(v,b,2)\in E(G')}  p^*(I(v, b))\in [0, 1/n^{3a+1})$.
\end{itemize}
We call these indices {\em feasible non-tight} indices.

Observe that we earn a profit of strictly larger than $K/2$ from consumers corresponding to these feasible non-tight indices in both \UDP and \SMP, and this is because the price function $p^*$ yields a revenue of $K$ which only comes from either consumers with tight or feasible non-tight indices, and we get strictly less than $K/2$ for tight indices. 

Now, define a new price function $p' = 2p^*$.
By using the price function $p'$, the revenue that we earn from the feasible non-tight indices will be twice.
So, we would have revenue strictly more than $K$ for both \UDP and \SMP. This contradicts the optimality of $p^*$.
\end{proof}


Our goal is to ``recover'' a large semi-induced matching of $G'$ from the tight indices. We will show that the set of ``recoverable indices'', which is large, is exactly the following set of {\em canonical tight indices}.  Note that, in both \UDP and \SMP, for any tight index $(u, a)$, there must be an item $I(v, b)$ such that $(u,a,1)(v,b,2)\in E(G')$ (i.e., the item that consumers in $\cset(u,a)$ want to buy) and $1/n^{3a+2}\leq p^*(I(v, b))\leq 1/n^{3a}$ (i.e., $I(v, b)$ is expensive but affordable by consumers in $\cset(u, a)$). We say that $(u, a)$ is {\em canonical} if and only if $I(u, a)$ is the only such item.
To be precise, the canonical tight index is defined as below.
\begin{definition}[Canonical Tight Index]
We say that a tight index $(u, a)$ is {\em canonical} if for any $(v, b)$ such that $(u,a,1)(v,b,2)\in E(G')$, we have that $1/n^{3a+2}\leq p^*(I(v, b))\leq 1/n^{3a}$ if and only if $u=v$ and $a=b$.
\end{definition}
First, we show that the number of canonical tight indices is large.
\begin{claim}
There are at least $K/2-|V(G)|(|E(H)|+1)$ canonical tight indices.
\end{claim}
\begin{proof}
Let $(u, a)$ be any non-canonical tight index. Recall that since $(u, a)$ is tight, there must be an item $I(v, b)$ such that $(u,a,1)(v,b,2)\in E(G')$ and $1/n^{3a+2}\leq p^*(I(v, b))\leq 1/n^{3a}$. Since $(u, a)$ is non-canonical, it is {\em not} the case that both $u=v$ and $a=b$. Consequently, since $(u,a,1)(v,b,2)\in E(G')$, either $uv\in E(G)$ or $ab\in E(H)$. We say that $(u, a)$ is {\em $G$-non-canonical} if $uv\in E(G)$ and {\em $H$-non-canonical} if $ab\in E(H)$. Note that every non-canonical tight index must be either $G$-non-canonical or $H$-non-canonical (or both).

First, we claim that the number of $G$-non-canonical tight indices is at most $|V(G)|$. In particular, we claim that for any $u\in V(G)$, there is {\em at most} one $G$-non-canonical tight index of the form $(u, a, 1)$. To see this, let us assume for a contradiction that there are two $G$-non-canonical tight indices $(u, a)$ and $(u, a')$ for some $u\in V(G)$ and $a, a'\in V(H)$ such that $a<a'$.

For the case of \UDP, observe that since $(u, a')$ is $G$-non-canonical and tight, there exists an index $(v, b')$ such that (1) $uv\in E(G)$ and (2) $1/n^{3a'+2}\leq p^*(I(v, b'))\leq 1/n^{3a'}$. The first property implies that $(u, a, 1)(v, b', 2)\in E(G')$. Consequently, by the second property, consumers in $\cset(u, a)$ pays {\em at most} $1/n^{3a'}<1/n^{3a+1}$, contradicting the assumption that index $(u, a)$ is tight.

For the case of \SMP, observe that since $(u, a)$ is $G$-non-canonical and tight, there exists an index $(v, b)$ such that (1) $uv\in E(G)$ and (2) $1/n^{3a+2}\leq p^*(I(v, b))\leq 1/n^{3a}$. The first property implies that $(u, a', 1)(v, b, 2)\in E(G')$. Consequently, by the second property, consumers in $\cset(u, a)$ pays {\em at least} $1/n^{3a+2}>1/n^{3a'}$, contradicting the assumption that index $(u, a')$ is tight.

Secondly, we claim that the number of $H$-non-canonical tight indices is at most $|V(G)||E(H)|$. To see this, recall that for any $(u, a)$ that is $H$-non-canonical and tight, there exists an index $(v, b)$ such that (1) $ab\in E(H)$ and (2) $1/n^{3a+2}\leq p^*(I(v, b))\leq 1/n^{3a}$. Observe that, by the first condition, any $H$-non-canonical tight index $(u, a)$ must be in the following set $\Phi(u)=\bigcup_{a, b:\ ab\in E(H)} \{(u, a)\}.$
%
%
Obviously, $|\Phi(u)|\leq |E(H)|$. It follows that the number of $H$-non-canonical tight indices is at most $\sum_{u\in V(G)} |\Phi(u)|\leq |V(G)||E(H)|$ as claimed.

Now, we have already shown that there are at most $|V(G)|(|E(H)|+1)$ non-canonical tight indices; since the number of tight indices is at least $K/2$ by the previous claim, we have that the number of canonical tight indices is at least $K/2-|V(G)|(|E(H)|+1)$ as desired.
\end{proof}

We finish the proof by showing that we can recover a large semi-induced matching from canonical tight indices.

\begin{claim}
$\sinduce{G'}$ is at least the number of canonical tight indices. In other words, $\sinduce{G'}\geq K/2-|V(G)|(|E(H)|+1)$.
\end{claim}
\begin{proof} Let $(u_1, a_1), (u_2, a_2) \ldots (u_t, a_t)$, for some $t$, be the canonical tight indices. Order them in such a way that $a_1\leq a_2\leq \ldots \leq a_t$. Let $\mset=\{(u_i, a_i, 1)(u_i, a_i, 2)\}_{i=1\ldots t}$.

For the case of \UDP, let $\sigma$ be any total ordering of nodes in $G'$ such that $\sigma(u_1, a_1, 1)<\sigma(u_2, a_2, 1) <\ldots < \sigma(u_t, a_t, 1)$ and $\sigma(u_i, a_i, 2) > \sigma(u_t, a_t, 1)$ for all $i$. We claim that $\mset$ is a $\sigma$-semi-induced matching in $G'$. In particular, we show that for any $i<j$, $(u_i, a_i, 1)(u_j, a_j, 2)\notin E(G')$. To see this, note that since $(u_j, a_j)$ is canonical and tight, $1/n^{3a_j+2}\leq p^*(I(u_j, a_j))\leq 1/n^{3a_j}$.

Consider two possible cases: either (1) $a_i<a_j$ or (2) $a_i=a_j$.
In the first case, we have that $p^*(I(u_j, a_j))\leq 1/n^{3a_j}<1/n^{3a_i+1}$. Thus, if  $(u_i, a_i, 1)(u_j, a_j, 2)\in E(G')$, then consumers in $\cset(u_i, a_i)$ will pay strictly less than $1/n^{3a_i+1}$, contradicting the fact that $(u_i, a_i)$ is tight.
In the second case, we have that $1/n^{3a_i+2}\leq p^*(I(u_j, a_j))\leq 1/n^{3a_i}$. Thus, if  $(u_i, a_i, 1)(u_j, a_j, 2)\in E(G')$, then $(u_i, a_i)$ is not a canonical index, a contradiction.
Since both cases lead to a contradiction, we have that $(u_i, a_i, 1)(u_j, a_j, 2)\notin E(G')$, and thus $\mset$ is a $\sigma$-semi-induced matching as claimed.

\medskip For the case of \SMP, let $\sigma'$ be any total ordering such that $\sigma'(u_1, a_1, 1)>\sigma'(u_2, a_2, 1) >\ldots > \sigma'(u_t, a_t, 1)$ and $\sigma'(u_i, a_i, 2) > \sigma'(u_1, a_1, 1)$ for all $i$. We claim that $\mset$ is a $\sigma'$-semi-induced matching in $G'$. In particular, we show that for any $i>j$, $(u_i, a_i, 1)(u_j, a_j, 2)\notin E(G')$. To see this, note again that since $(u_j, a_j)$ is canonical and tight, $1/n^{3a_j+2}\leq p^*(I(u_j, a_j))\leq 1/n^{3a_j}$.

Consider two possible cases: either (1) $a_i>a_j$ or (2) $a_i=a_j$.
In the first case, we have that $p^*(I(u_j, a_j))\geq 1/n^{3a_j+2}>1/n^{3a_i}$. Thus, if  $(u_i, a_i, 1)(u_j, a_j, 2)\in E(G')$, then consumers in $\cset(u_i, a_i)$ pay strictly more than $1/n^{3a_i}$, contradicting the fact that $(u_i, a_i)$ is tight.
In the second case, we have that $1/n^{3a_i+2}\leq p^*(I(u_j, a_j))\leq 1/n^{3a_i}$. Thus, if  $(u_i, a_i, 1)(u_j, a_j, 2)\in E(G')$, then $(u_i, a_i)$ is not a canonical index, a contradiction.
Since both cases lead us to a contradiction, we have that $(u_i, a_i, 1)(u_j, a_j, 2)\notin E(G')$, and thus $\mset$ is a $\sigma'$-semi-induced matching as claimed.
\end{proof}

The above claim completes the proof of the lemma.
\end{proof}

\end{document}